\documentclass[12pt]{article}

\usepackage{mathrsfs}     
\usepackage{bussproofs}	  

\usepackage{latexsym, epsf, proof, lscape, booktabs, multirow}
\usepackage[dvipdfmx]{graphicx}

\usepackage{mathdots}

\usepackage{breqn}

\usepackage{verbatim}

\usepackage{tikz}

\usetikzlibrary{intersections, calc, arrows, positioning, arrows.meta}
\usetikzlibrary{arrows,shapes,automata,petri,positioning,calc}\usetikzlibrary {bending}

\usepackage{pxfonts,
proof, latexsym,
epsf,
comment
}

\newtheorem{defn}{Definition}[section]
\newtheorem{theorem}[defn]{Theorem}

\newtheorem{lemma}[defn]{Lemma}

\newcommand{\QED}{\hspace*{\fill}\vrule height6pt width6pt depth0pt}

\newcommand{\deduc}{
\shortstack{.\\.\hspace{12pt}.\hspace{12pt}.\\.\hspace{8pt}.\hspace{8pt}
.\\
.\hspace{4pt}.\hspace{4pt}.\\...\\.\\.}
}

\newcount\PLv\newcount\PLw\newcount\PLx\newcount\PLy
\newdimen\PLyy\newdimen\PLX \newbox\PLdot \setbox\PLdot\hbox{\tiny.}
\def\scl{.08} 
\def\PLot#1{\PLx`#1\advance\PLx-42\PLy\PLx\PLv\PLx\divide\PLy9%
\PLw\PLy\multiply\PLw9\advance\PLx-\PLw\advance\PLx-4\PLy-\PLy%
\advance\PLy4\PLX=\the\PLx pt\advance\PLyy\the\PLy pt\wd%
\PLdot=\scl\PLX\raise\scl\PLyy\copy\PLdot}
\def\draw#1{\ifx#1\end\let\next=\relax\else\PLot#1%
           \let\next=\draw\fi\next}

\def\IA{\hbox{\PLyy=70pt\draw :::;DMV_gqppyyyyyooooxxxnnwvlutkjaWE%
      =5-./99:::CCCC:::99/..--544=EEWWaajjjkktttttttVVVVVVVV\end
        \hskip7pt}} 
\newbox\IAbox\setbox\IAbox\IA

\begin{document}

\begin{center}\Large
A Proof-Theoretic Study of Modal Logic 
\end{center}

\normalsize
\begin{center}\large

Hirohiko Kushida \\

\vspace{1em}
\large
hkushida@jcga.ac.jp\\
Department of Maritime Safety Technology

Japan Coast Guard Academy \\

\vspace{1em}
\large
December 10, 2024
\end{center}


\vspace{1em}
\small
{\bf Abstract.}
This paper proposes a basic proof theoretic framework for 
major modal logics: {\sf S5} and some of its subsystems.
The framework is based on 
a version of hypersequent calculus, and the basic
modal systems we handle here are the system {\sf K} and its standard extensions
with combinations of axioms: $T, D, 4, B,
 5$.
First we propose a reasonable explanation of how the standard sequent and hypersequent calculi
for some of those modal logics 
such as {\sf K, T, D, S4, S5} 
emerge on the basis of the framework.
Then, by a syntactic method, we prove the cut-elimination theorem
for the modal logics 
except for the modal logics {\sf KB, 
KDB, KTB}.
Qantified versions of the framework
are also discussed.


\normalsize
\section{Introduction.}
Modal logic is originally a discipline regarding logical inference involving modality of necessity and possibility.
Nowadays, modal logic has been studied as
epistemic logic, temporal logic, and deontic logic,
among others,
with various interpretations of modality.
Both semantics and proof theory of such applied modal logics are
often useful in analyzing
  computational systems
and multi-agent systems of AI (artificial intelligence).

In recent research 
of proof theory of modal logic,
uniform treatment of basic modal logic systems has been 
investigated actively.
This paper proposes such a variant, which is, however, 
a more basic proof-theoretic framework.

In modern systematic studies  on modal logics, 
it is common to put the modal logic {\sf K} as the base 
system and then
consider its extensions such as
{\sf T, B, S4, S5}
with some axioms on modality.
A salient feature of the modal logic {\sf K} is 
the possibility of execution of the inferences
of the system
within the modality for the necessity.
In other words,
a proposition derived from the necessary propositions 
by inference rules of {\sf K} is also necessary.
However, this is not necessarily directly implemented
in {\sf K};
only modus ponens is directly
closed under the modality
for the necessity
thanks to the axiom $K$,
$\Box(A\supset B)\supset (\Box A\supset \Box B)$.
Thus, it is natural to permit  the execution of the inferences
of the system
within the scope of the modal operator $\Box$.
Then, from syntactical objects to be interpreted 
to necessary propositions 
we can directly derive a syntactical object
to be interpreted to be a necessary proposition
by disregarding the indicated modality.
Thus, we may introduce the {\it modalized} sequents
which are intended to mean necessary propositions.

\begin{center}
$A_1, \ldots, A_n \Rightarrow B_1,
\ldots, B_m$ \\
$\leadsto\Box(A_1 \wedge \cdots \wedge A_n \supset B_1 \vee
\cdots \vee B_m)$
\end{center}

We will be able to execute certain direct derivations
from modalized hypersequents to a modalized hypersequent as a natural extension of {\sf K}.
In our view, this point is 
a nice motivation for the study of nested sequent 
calculus for modal logic.

On the other hand, since the modal logic {\sf K}
includes the classical propositional logic,
it must be possible to execute the inferences
of classical propositional logic in {\sf K}.
Thus we may as well keep the {\it non-modalized} sequent.

\begin{center}
$A_1, \ldots, A_n \to B_1,
\ldots, B_m$ \\
$\leadsto A_1 \wedge \cdots \wedge A_n \supset B_1 \vee
\cdots \vee B_m$
\end{center}

Further we introduce the tool of hypersequent
to realize interaction between those two sorts of sequent.
The hypersequent is a finite multiset of the two-sorted
sequents:
$S_1^{\Rightarrow}| S_2^{\Rightarrow}| \cdots | S_n^{\Rightarrow}
| T_1^{\rightarrow}| T_2^{\rightarrow}| \cdots | T_m^{\rightarrow}$,
which is meant to be a disjunction of the interpreted sequents.

Then, the rule of necessitation
and the inference rule 
corresponding to the axiom $K$
are set as follows.

\begin{center}
$
\infer[nec_1 ]
{{\cal H}| \rightarrow \Box A
 }
{{\cal H}| \Rightarrow A
}$
 \hspace{1em}
$
\infer[nec_2 ]
{
\Gamma \Rightarrow \Delta  }
{
\Gamma \to \Delta
}$
 \hspace{1em} 
$
\infer[K ]
{{\cal H}|\Box A \rightarrow | \Gamma \Rightarrow \Delta
 }
{{\cal H}|A, \Gamma \Rightarrow \Delta
}$ \hspace{1em}
\end{center}

Also, the inference rule of propositional logic
can be executed both on non-modalized and modalized sequents.

As to the extensions of {\sf K},
it is possible to implement 
the additional axioms on modality called $T, D, 4, B,$ and $5$
 using
two sorts of sequents.
Here, we take the examples of the $5$- and $B$-axioms:
$\neg\Box A \supset \Box \neg\Box A$
and 
$\neg A \supset \Box \neg\Box A$.
We can implement them in two ways:

\begin{center}

$\infer[5_{1}]
{{\cal H}|\Box A\Rightarrow | \Gamma \to \Delta
 }
 {{\cal H}|\Box A, \Gamma \to \Delta
}$
\hspace{1em}
$
\infer[5_2]
{{\cal H}^\Rightarrow | S^\Rightarrow
 }
{{\cal H}^\Rightarrow
| S^\rightarrow
}$
\hspace{1em}

\vspace{1em}
$\infer[B_{1}]
{{\cal H}|\Box A\Rightarrow | \Gamma \to \Delta
 }
 {{\cal H}|A, \Gamma \to \Delta
}$
\hspace{1em}
$
\infer[B_2]
{{\cal H}^\rightarrow | S^\Rightarrow
 }
{{\cal H}^\Rightarrow
| S^\rightarrow
}$
\hspace{1em}

\end{center}

Thus, the modal axioms 
$T, D, 4, B, 5$ are implemented 
 as specific forms of interaction of two sorts of sequents
 in  hypersequents in our framework.

In this paper, after we provide a formal definition of 
the system, 
we will show 
the standard sequent calculi for {\sf K, T, D, 
K4, KD4, S4}
and the hypersequent calculus for {\sf S5} 
in the literature
can be viewed to be specific forms
of our basic proof systems.
Then, we will provide a proof-theoretical proof of 
the cut-elimination of modal systems
other than {\sf KB, KDB}, and {\sf KTB},
for which we will provide 
a counter-example for the cut-elimination.

One can think that the hypersequent system 
in this paper
is a version of nested hypersequent 
calculus up to depth $1$,
while 
the standard sequent calculi for {\sf K, T, D, K4, S4}
(as well as {\sf LK} and {\sf LJ})
are all viewed as nested systems of depth $0$.
A disadvantage of these systems is that
  inferences are not naturally 
implemented in modalized contexts,
which is considered to be an essential property 
of the modal logic {\sf K} and 
shared by the extensions of {\sf K}.
Further, the standard hypersequent calculus for
{\sf S5} is viewed as a nested system of depth
$1$.
A disadvantage of this is that 
it is not necessarily possible to directly execute propositional inferences in it,
because the sequents there are always
interpreted to be modalized.
In this situation, 
the system which we will propose
is free from these disadvantages;
the propositional inferences are naturally implemented
both in modal and non-modal contexts in our system.
In addition, the modal axioms are implemented 
as certain interactions between the two sorts of sequents,
that is, modalized and non-modalized ones.

A general theory of nested sequent calculus
for modal logics 
was developed in  Br\"unnler
\cite{brun},
where proofs are presented in arbitrary depths.
The result of the current paper
shows that the modal logics,
except for the above-mentioned three logics,
are well-behaved in the nested sequent calculus
up to depth $2$.

Actually, we will define the notion of
{\it regular} proof
in our system, which is an approximation to the common
standard sequent and hypersequent calculi for the modal logics such as {\sf K, T, D, K4,
KD4, S4, S5}, and
we provide an algorithm for the transformation
into the regular proofs.
So, we have a nice explanation of
how those standard calculi in the literature 
emerge from more basic proof systems.
 
In addition, the extension of our framework to
 the quantified versions of modal logics
 is discussed though not scrutinized.
 Here again, our framework of the calculus
 enables us to formulate
 any quantified modal system with any combinations of
 those modal axioms, with and without what is called
 the Barcan formula
 in a natural way.
 
 Another remarkable feature of the result
 of this paper
 is that in our proof of the cut-elimination,
 we will eliminate the rule $cut$ directly
 without using $mix$. 
This is an application of the method
of eliminating $cut$ directly, 
which was developed in Mints 
\cite{mints2006}
and
 Kushida \cite{kushida2024}.
 In particular, by the direct methods of 
 eliminating $cut$,
 we do not need to pay special attention
to the case when the $contraction$-rule occurs just 
above the $cut$.
 
 This paper is organized as follows.
 In \S 2, we offer the proof system of the hypersequent
 calculi for that range of modal systems.
   In \S 3,
   we give a proof of the equivalence of the proof system to
   the axiomatic system and
   demonstrate that any inference rule
    of the traditional sequent calculi
    for  {\sf K, T, D,  K4, S4, S5}
    and the traditional hypersequent calculus for 
    {\sf S5}
   is derivable in our simple calculi.
 In \S 4,
we introduce the notion of regular proof 
and present an algorithm for the transformation
of any proof into a regular proof,
which is an approximation to
a modal proof in the standard calculi in the 
literature.
 In \S 5, we put a note on the cut-elimination on
 {\sf S5}. 
 In \S 6,
we prove the cut-elimination theorem
for the hypersequent calculus
except for the systems which contain the $B$-axiom primitively
but are unable to derive the $5$-axiom, that is,
{\sf KB, KDB}, and {\sf KTB}.
In \S 7, the extension of our framework
to the quantified versions of modal logics
is discussed.
In \S 8, we close this paper.

\section{Proof System}
In this section,
we review the axiomatic systems for
the modal 
logics and then provide a formal definition of
the corresponding hypersequent systems with two-sorted
sequents.

The formulas are defined 
bythe following grammar.

\begin{center}
$A \longrightarrow  \bot|p 
|\neg A|  A \wedge A
| \Box A$

\end{center}

\noindent
The other propositional connectives
are
defined in the standard way.
The other modality is also defined in terms of $\Box$
and $\neg$, though it is not considered in this paper.

The axiomatic system of modal 
logic {\sf K} is
the classical 
propositional logic (with this language) extended with
the axiom $K$: $\Box(A\supset B)\supset (\Box A\supset \Box B)$
and the inference rule ("the rule of necessitation"):
$\vdash A \Longrightarrow  \vdash \Box A$.
Then, we consider the axioms:
$D, T, 4, B, 5$.

\begin{center}
  \begin{tabular}{|c|c|} \hline
  $D$ & $\neg\Box \bot$ \\ \hline
  $T$ &$\Box A \supset A$ \\ \hline
  $4$ & $\Box A \supset \Box\Box A$ \\ \hline
  $B$ & $\neg A \supset \Box\neg\Box A$ \\ \hline
  $5$ & $\neg \Box A \supset \Box\neg\Box A$ \\ \hline
  \end{tabular}
  \end{center}

As usual in the literature,
we denote ${\sf KS_1\cdots S_n}$
to mean the system obtained from {\sf K}
by adding axioms $S_1\cdots S_n$ from the above table.
We follow the custom in calling the systems \textsf{KD,
KT, KTB, KT4, KT5} as
\textsf{D, T, B, S4, S5}, respectively.
Then we can cover fifteen systems:
\textsf{K, D, T, K4, KB, K5, B, K45, KD4, KD5, KDB, KB5,
KD45, S4, S5}.

Now we construct the uniform hypersequent calculus
covering those 15 systems.
First the sequents and hypersequents are defined
with the formula image
$f$ of them.
When $\Gamma$ and $\Delta$ are multisets of formulas,
the form
$\Gamma \Rightarrow \Delta$ and
$\Gamma \rightarrow \Delta$
are $\Rightarrow$- and $\rightarrow$-$sequents$, respectively.
Both can be simply called $sequents$
and let $\gg$ denote either $\Rightarrow$ or $\rightarrow$.
The formula images of the sequents are defined as follows.

\begin{center}
$f(\Gamma \Rightarrow \Delta)=\Box(\bigwedge \Gamma \supset \bigvee \Delta)$

\hspace{-1.3em}$f(\Gamma \rightarrow \Delta)=\bigwedge \Gamma \supset \bigvee \Delta$
\end{center}


 Sequents are denoted by $S, T, U,\ldots, $ possibly with integer
 subscripts.
By $S^\Rightarrow $ and  $S^\rightarrow$,
we mean 
$\Rightarrow $- and $\rightarrow$-sequents,
respectively.
 
When $S_1, \ldots S_n$ are sequents,
the form
$S_1| S_2| \cdots | S_n$
is a $hypersequent$
and the formula image of it is defined as follows.

\begin{center}
$f(S_1| S_2| \cdots | S_n)
=
 f(S_1) \vee f(S_2) \vee \cdots \vee f(S_n)
$
\end{center}

Hypersequents are 
denoted by ${\cal H}, {\cal I}, \ldots$ possibly with integer subscripts.
By ${\cal H}^\Rightarrow $ and  ${\cal H}^\rightarrow$,
we mean hypersequents, respectively,  consisting only of
$\Rightarrow $- and $\rightarrow$-sequents.
The initial sequents and
the inference rules
are defined as follows.

\vspace{1em}
$\bullet$ Initial Sequents:
\hspace{2em}
$
A \gg
 A \hspace{4em}
\bot \gg$
 

\vspace{1em}
$\bullet$ Propositional Inference Rules:

\begin{center}

$
\infer[\wedge:l]{
 {\cal H}|
 A\wedge B, \Gamma \gg    \Delta
}
{ {\cal H}|
A, \Gamma \gg   \Delta}
\hspace{2em}
\infer[\wedge:l]{
 {\cal H}|
A \wedge B,   \Gamma \gg    \Delta
}
{
  {\cal H}|
 B, \Gamma \gg   \Delta
}$

\vspace{1em}
$
\infer[\wedge:r]{
 {\cal H}|
    \Gamma \gg \Delta,  A \wedge B
}
{ {\cal H}|
\Gamma \gg  \Delta,    A &
 {\cal H}|
\Gamma \gg  \Delta,   B}$

\vspace{1em}
$
\infer[\neg:l]{{\cal H} |
  \neg A ,  \Gamma \gg  \Delta
}
{   {\cal H}|  \Gamma \gg   \Delta,  A }
$
\hspace{2em}
$
\infer[\neg:r]{
{\cal H}| \Gamma \gg   \Delta, \neg A
}
{{\cal H}|  A,   \Gamma \gg \Delta}$
\end{center}

\vspace{1em}
$\bullet$  Internal Structural Inference Rules:

\begin{center}
$
\infer[ic]
{ {\cal H} |
   A, \Gamma
\gg  \Delta
}
{
  {\cal H} |  A, A, \Gamma \gg  \Delta
}
$ \hspace{1em}
$
\infer[ic]
{ {\cal H} |
\Gamma
\gg  \Delta, A
}
{{\cal H} |
     \Gamma \gg  \Delta, A, A
}
$ \hspace{1em}
$
\infer[iw]
{ {\cal H} |
   A, \Gamma
\gg  \Delta
}
{
  {\cal H}| \Gamma \gg  \Delta
}
$

 \vspace{1em}
$
\infer[iw]
{{\cal H}|
   \Gamma
\gg  \Delta, A
}
{ {\cal H}|
   \Gamma \gg \Delta
}
$
\hspace{1em}
$
\infer[cut]
{{\cal H}| \Gamma, \Pi
\gg   \Delta, \Theta
}
{{\cal H}|\Gamma \gg  \Delta, A &
{\cal H}|A,
   \Pi
\gg  \Theta
}
$
\end{center}


$\bullet$ External Structural Inference Rules

\begin{center}
$
\infer[ew]
{{\cal H}|
 S 
}
{ {\cal H}
}
$
\hspace{2em}
$\infer[merge]{
{\cal H}|
  \Gamma,
    \Theta \gg  \Delta, \Pi
}{ {\cal H}|
  \Gamma \gg \Delta |
    \Theta \gg \Pi
}$
\hspace{2em}
$\infer[split]{ {\cal H}|
  \Gamma \to \Delta |
    \Theta \to \Pi
}
{
{\cal H}|
  \Gamma,
    \Theta \to  \Delta, \Pi
}$

\end{center}

$\bullet$ Inference Rules for Modality

\begin{center}
$
\infer[nec_1 ]
{{\cal H}| \rightarrow \Box A
 }
{{\cal H}| \Rightarrow A
}$
 \hspace{1em}
$
\infer[K ]
{{\cal H}|\Box A \rightarrow | \Gamma \Rightarrow \Delta
 }
{{\cal H}|A, \Gamma \Rightarrow \Delta
}$ \hspace{1em}
$
\infer[T_1]
{{\cal H}|\Box A, 
\Gamma \gg \Delta
 }
{{\cal H}|A, \Gamma \gg \Delta
}$

\vspace{1em}
$\infer[4_r]
{{\cal H}|
\Rightarrow \Box A
}
{{\cal H}|
\Rightarrow A
}$\hspace{1em}
$\infer[4_l
]
{{\cal H}|\Box A \rightarrow | \Gamma \Rightarrow \Delta
}
{{\cal H}|\Box A, \Gamma \Rightarrow \Delta
}$
%
%

\vspace{1em}
$\infer[B_1]
{{\cal H}|\Box A\Rightarrow | \Gamma \rightarrow \Delta
 }
{{\cal H}|A, \Gamma \rightarrow \Delta
}$
\hspace{1em}
$\infer[5_{1}
]
{{\cal H}|\Box A\Rightarrow | \Gamma \to \Delta
 }
 {{\cal H}|\Box A, \Gamma \to \Delta
}$

\vspace{1em}

\end{center}

$\bullet$ Modal Structural Inference Rules

\begin{center}

$
\infer[nec_2]
{S^\Rightarrow
 }
{S^\rightarrow
}$
\hspace{1em}
$
\infer[D ]
{{\cal H}|\rightarrow
 }
{{\cal H}|\Rightarrow
}$
\hspace{1em}
 $
\infer[T_2]
{{\cal H}| S^\rightarrow
 }
{{\cal H}| S^\Rightarrow
}$

\vspace{1em}
$
\infer[B_2]
{{\cal H}^\rightarrow | S^\Rightarrow
 }
{{\cal H}^\Rightarrow
| S^\rightarrow
}$
\hspace{1em}
$
\infer[5_2]
{{\cal H}^\Rightarrow | S^\Rightarrow
 }
{{\cal H}^\Rightarrow
| S^\rightarrow
}$
\hspace{1em}
$
\infer[B_25]
{{\cal H}^\Rightarrow
|{\cal I}^\rightarrow | S^\Rightarrow
 }
{{\cal H}^\Rightarrow
|{\cal I}^\Rightarrow
| S^\rightarrow
}$


\end{center}

Here, all occurrences of $\gg$ in the description of an inference rule
uniformly denote either $\Rightarrow$ or $\rightarrow$.
The formula $A$ in $cut$ is called the {\it cut formula}.

The upper and lower hypersequents and sequents of
an inference rule  are defined in a natural way.
For example, in the above definition of $K$-rule,
$
{{\cal H}|A, \Gamma \Rightarrow \Delta
}$
is its upper hypersequent;
${\cal H}|\Box A \rightarrow | \Gamma \Rightarrow \Delta
 $
is its lower hypersequent;
$
{ A, \Gamma \Rightarrow \Delta
}$ is its upper sequent;
$\Box A \rightarrow$
is its lower sequent.

We call $T_1$ and $T_2$ as $T$ collectively.
Similarly we use the name $4$, $B$ and $5$
 to denote $4_{l, r}$, $B_{1, 2}$ and $5_{1, 2}$ collectively, respectively.

For each modal 
logic {\sf M}
of the above 15 systems,
we define the hypersequent calculus denoted by ${\sf M}_H$.
The hypersequent calculus {\sf K}$_H$
is defined to have
the above initial sequents
and inference rules
except for 
$4, D, T, B$ and $5$.
The hypersequent calculus ${\sf KS_1\cdots S_n}_H$
is basically obtained from {\sf K}$_H$
by adding the inference rules 
${S_1, \ldots, S_n}$
with the extra conditions:
in the case of ${\sf KB5}_H$,
we also add $B_25$.
Also, in the case of ${\sf S5}_H$,
we also add $4$.

\vspace{1em}
${\sf KB5}_H= {\sf K}_H+5+B_1+B_25$

${\sf S5}_H= {\sf K}_H+T+5+4$

\vspace{1em}

\section{The Relationship to Axiomatic Systems and
Traditional Sequent calculi of Modal Logics}

In this section, we prove the equivalence between 
the hypersequent systems
we just described and the
corresponding axiomatic systems of modal logics.
Then,
we also show that the standard sequent systems for {\sf K, T, D, 
K4, S4}
and {\sf S5} are contained in our hypersequent system,
that is, the inference rules in these systems 
are justified in our system.

\begin{theorem} \label{equivalence}
 Let  {\sf M} be an axiomatic system of
 the fifteen modal 
logics.
Then
${\sf M}$ and ${\sf M}_H$ have the same derivability.
\end{theorem}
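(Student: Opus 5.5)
We prove the two directions separately: from {\sf M} to ${\sf M}_H$, and from ${\sf M}_H$ back to {\sf M}. Throughout, "the same derivability" is read as: $\vdash_{\sf M} A$ iff $\vdash_{{\sf M}_H} \rightarrow A$, and more generally, if ${\cal H}$ is derivable in ${\sf M}_H$ then $\vdash_{\sf M} f({\cal H})$.

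\emph{From {\sf M} to ${\sf M}_H$.} We show by induction on axiomatic derivations that $\vdash_{\sf M} A$ implies $\vdash_{{\sf M}_H} \rightarrow A$.
\begin{itemize}
\item \emph{Propositional tautologies.} The propositional rules, internal structure and $cut$ on $\rightarrow$-sequents form a copy of {\sf LK}, so every tautology is derivable.
\item \emph{Modus ponens.} This is simulated by $cut$.
\item \emph{Necessitation.} From $\rightarrow A$, apply $nec_2$ (the hypersequent is the single sequent) to obtain $\Rightarrow A$, and then $nec_1$ with empty ${\cal H}$ to obtain $\rightarrow \Box A$.
\item \emph{Axiom $K$.} Start from $A\supset B, A \Rightarrow B$, obtained from an {\sf LK} derivation via $nec_2$. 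Apply $K$ twice to get $\Box(A\supset B)\rightarrow |\Box A \rightarrow | \Rightarrow B$. Then $nec_1$ gives $\rightarrow \Box B$ in place of $\Rightarrow B$. Finally $merge$ the three $\rightarrow$-sequents and apply the $\supset$-rules.
\end{itemize}
The remaining axioms are treated the same way, using the rule of the same name.
\begin{itemize}
\item $T$: use $T_1$.
\item $D$: from $\bot \Rightarrow$, apply $K$ to get $\Box\bot\rightarrow | \Rightarrow$; then $D$ gives $\Box\bot \rightarrow|\rightarrow$, and $merge$ and $\neg:r$ finish.
\item $4$: from $A\Rightarrow A$, apply $4_r$ to get $A\Rightarrow\Box A$? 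No. Instead take $\Box A \Rightarrow \Box A$ and apply $4_l$ to get $\Box A\rightarrow|\Rightarrow\Box A$, then $nec_1$ and $merge$.
\item $B$: from $A\rightarrow A$, apply $\neg:l$ to get $\neg A, A\rightarrow$, then $B_1$ to get $\Box A\Rightarrow|\neg A\rightarrow$. Next $\neg:r$ gives $\Rightarrow\neg\Box A \,|\, \neg A\rightarrow$; then $nec_1$ and $merge$ give $\neg A\rightarrow \Box\neg\Box A$.
\item $5$: identical to $B$, using $5_1$ starting from $\Box A\rightarrow\Box A$.
\end{itemize}
The systems ${\sf KB5}_H$ and ${\sf S5}_H$ contain the extra rules $B_25$ and $4$, but these are only relevant to the other direction; this direction is routine bookkeeping.

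\emph{From ${\sf M}_H$ to {\sf M}.} Show by induction on the derivation that each rule is sound for $f$: if the images of the upper hypersequents are theorems of {\sf M}, so is the image of the lower one. The key lemma is that, writing $f({\cal H})$ as a disjunction $X$, we can reason in {\sf M} under a disjunctive context. This works because $\Box$ is monotone and distributes over $\wedge$ in {\sf K}.
\begin{itemize}
\item \emph{Propositional rules on $\Rightarrow$-sequents.} Use $\vdash (C\supset D)\Rightarrow \vdash \Box C\supset\Box D$ together with $\Box C\wedge\Box C'\supset\Box(C\wedge C')$ for $\wedge:r$ and $cut$. Note that $cut$ with shared context ${\cal H}$ requires combining two disjunctions $X\vee\Box P$ and $X\vee \Box Q$ into $X\vee\Box R$; that is the {\sf K} distribution.
\item \emph{$merge$ and $split$.} $merge$ on $\Rightarrow$ uses $\Box P\vee\Box Q\supset\Box(P\vee Q)$, and $split$ is propositional ($split$ is only for $\rightarrow$).
\item \emph{$nec_1$ and $K$.} $nec_1$ is trivial. $K$ uses $\Box(A\wedge G\supset D)\supset(\Box A\supset\Box(G\supset D))$.
\item \emph{$nec_2$.} This is sound by necessitation, since its premise is a single sequent with no context.
\item \emph{$D$, $T$ and $4$.} $D$ uses $\Box\bot\supset\bot$, i.e.\ $D$, for the empty sequent. $T_1$ and $T_2$ use $T$. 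The rules $4_r$ and $4_l$ use $4$.
\item \emph{$B_1$ and $5_1$.} These follow from $B$ and $5$: for instance, the image of the conclusion of $5_1$ is equivalent to $\neg\Box\Box A\vee\ldots$ only if the $\Box A\Rightarrow$ image $\Box\neg\Box A$ is implied by $\neg\Box A$, which is exactly axiom $5$.
\end{itemize}

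\emph{Main obstacle.} I expect the main obstacle to be the global structural rules $B_2$, $5_2$ and $B_25$, where a whole context changes sort. For $5_2$, the context consists entirely of $\Box$-formulas $\Box P_i$, and we need $\Box P_1\vee\cdots\vee\Box P_k\vee Q \supset \Box P_1\vee\cdots\vee\Box Q$. My plan is to use the {\sf K5} theorem $\neg\Box P\supset\Box\neg\Box P$. From the premise, $\bigwedge\neg\Box P_i\supset Q$; necessitating gives $\Box\bigwedge\neg\Box P_i\supset\Box Q$, and $5$ supplies the antecedent. For $B_2$, the analogous reasoning uses $B$ in the form $P\supset\Box\neg\Box\neg P$, converting $\Box P_i$ to $P_i$ in the conclusion; here one needs the derived form of $B$ with the context on the other side, which I would check carefully. $B_25$ combines both arguments. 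The extra $4$ in ${\sf S5}_H$ is sound since $4$ is a theorem of {\sf S5}.
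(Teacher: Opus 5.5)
Your proposal is correct and follows the paper's two-part structure. You simulate necessitation and the axioms inside ${\sf M}_H$, then check soundness of the rules under the formula image $f$, handling the critical $5_2$, $B_2$ (and $B_25$) cases exactly as the paper does: necessitate the premise, distribute $\Box$, and discharge with axiom $5$ or $B$. The deviations are harmless: you witness axioms $4$, $B$, $5$ via $4_l$, $B_1$, $5_1$, where the paper uses $K$ followed by $4_r$, $B_2$, $5_2$, and the $B_2$ step you flagged needs no extra care, since the required instance $\neg P_i\supset\Box\neg\Box P_i$ is axiom $B$ verbatim.
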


{\it Proof.}
We do not handle the propositional logic part.
For one direction,
we simulate a proof of {\sf M}
by ${\sf M}$$_H$.
The necessitation rule is simulated
in  {\sf M}$_H$  as follows.

\begin{center}$
\infer[nec_1]{\rightarrow \Box A}{
\infer[nec_2]{\Rightarrow A}{
\rightarrow A
}}
$
\end{center}


We provide proofs of hypersequents of which
formula images are axioms: $K, 4$, $T$, $D$, $B$, $5$.
For $T$-axiom, we can derive $\Box A \to A$
by $T_1$-rule.

\begin{center}$
\infer[\supset:r]{\Box(A\supset B)\to \Box A \supset \Box B}{
\infer[merge]{\Box A, \Box(A\supset B)\to  \Box B}{
\infer[nec_1]{\Box A\to | \Box(A\supset B)\to |\to  \Box B}{
\infer=[K]{\Box A\to |\Box(A\supset B)\to |\Rightarrow B}{
\infer[nec_2]{A, A\supset B\Rightarrow B}{
\infer[\supset:l]{A, A\supset B\to B}{A\to A \hspace{1.2em}B\to B}}}
}
}
}
\hspace{1.3em}
\infer[merge]{ \Box A \rightarrow \Box\Box A}{
\infer[nec_1]{ \Box A \rightarrow|\rightarrow \Box\Box A}{
\infer[4_r]{   \Box A\rightarrow |\Rightarrow \Box A}{
\infer[K]{
\Box A\rightarrow | \Rightarrow A}{
A\Rightarrow A
}
}
}
}
$

\vspace{1em}
$
\infer[\neg:r]{\rightarrow \neg \Box\bot}{
\infer[merge]{ \Box \bot \rightarrow }{
\infer[D]{ \Box \bot \rightarrow|\rightarrow }{
\infer[K]{   \Box \bot\rightarrow |\Rightarrow }{
\bot \Rightarrow
}
}
}}$
\hspace{1em}
$
\infer[\neg:l]{\neg A \rightarrow \Box\neg \Box A }{
\infer[merge]{\rightarrow \Box\neg \Box A, A}{
\infer[nec_1]{ \rightarrow \Box\neg \Box A|\rightarrow A}{
\infer[B_2]{  \Rightarrow \neg \Box A|\rightarrow A}{
\infer[\neg:r]{ \rightarrow \neg \Box A|\Rightarrow A}{
\infer[K]{   \Box A\rightarrow |\Rightarrow A}{
 A\Rightarrow A
}
}
}
}
}}
\hspace{2.3em}
\infer[\neg:l]{\neg \Box A \rightarrow \Box\neg \Box A }{
\infer[merge]{\rightarrow \Box\neg \Box A, \Box A}{
\infer[nec_1]{ \rightarrow \Box\neg \Box A|\rightarrow \Box A}{
\infer[nec_1]{ \rightarrow \Box\neg \Box A|\Rightarrow A}{
\infer[5_2]{  \Rightarrow \neg \Box A|\Rightarrow A}{
\infer[\neg:r]{ \rightarrow \neg \Box A|\Rightarrow A}{
\infer[K]{   \Box A\rightarrow |\Rightarrow A}{
 A\Rightarrow A
}
}}
}
}
}}$
\end{center}

For the other direction,
we can simulate a proof of {\sf M}$_H$ by {\sf M}
using the formula images.
We handle only  the inference  rules
 $K$, $5_2$ and $B_2$.
For $K$, we work in the axiomatic system {\sf K}.
Suppose $X\vee \Box (Y\wedge Z. \supset W)$
holds in {\sf K}.
Using the $K$-axiom and propositional 
logic,
we can derive:
$X\vee \Box (Y\supset.  Z \supset W)$
and
$X\vee \Box Y\supset \Box ( Z \supset W)$

For $5_2$, we work in the axiomatic system {\sf K5}.
Suppose that $\Box X_1 \vee \cdots \vee \Box X_n
\vee Y$ holds to show
that $\Box X_1 \vee \cdots \vee \Box X_n
\vee \Box Y$ holds.
Then,
we have $\neg \Box X_1 \to( \cdots (\neg \Box X_n \to Y)\cdots)$.
By necessitation,
$\Box\{\neg \Box X_1 \to( \cdots (\neg \Box X_n \to Y)\cdots)\}$.
In terms of  $K$-axiom, we obtain
$\Box \neg \Box X_1 \to( \cdots (\Box \neg \Box X_n \to \Box Y)\cdots)$.
Then, by using the $5$-axiom $\neg \Box X_i \to \Box \neg 
\Box X_i$,
we have
$\neg \Box X_1 \to( \cdots (\neg \Box X_n \to \Box Y)\cdots)$,
which is equivalent to the desired one.

For $B_2$, we work in {\sf KB}.
Suppose  that $\Box X_1 \vee \cdots \vee \Box X_n
\vee Y$ holds.
As in the last case, we obtain 
$\Box \neg \Box X_1 \to( \cdots (\Box \neg \Box X_n \to \Box Y)\cdots)$.
By using $B$-axiom  $\neg  X_i \to \Box \neg \Box X_i$,
we have
$\neg X_1 \to( \cdots (\neg X_n \to \Box Y)\cdots)$,
which is equivalent to the desired one.
\QED

\subsection{On the Standard Sequent calculi for
Modal Logic 
}

Next we explicate the relationship of our hypersequent calculus to
(i) the standard seuquent systems for {\sf 
S4}
and {\sf S5} due to
Onishi and Matsumoto 
\cite{om1957, om1959}
and (ii) the standard hypersequent calculus for {\sf S5}.
We can consult 
Bednarska and  Indrzejczak
\cite{ind2015}
for the former 
proof-theoretical studies on
{\sf S5}.

The traditional sequent calculus for {\sf S4}
is the propositional sequent calculus {\sf LK}
extended with the following rules.

\begin{center}$
\infer[\Box:l]{\Box A, \Gamma \rightarrow \Delta }{
A, \Gamma \rightarrow \Delta
}$\hspace{2em}
$
\infer[\Box:r]{\Box  \Gamma \rightarrow \Box A }{
\Box \Gamma \rightarrow A
}$
\end{center}

Here the turnstile $\rightarrow$ is interpreted to be the same
as  $\rightarrow$ in our system.
These rules are derivable in the hypersequent calculus  {\sf S4}$_H$
in the following way.

\begin{center}$
\infer[merge]{\Box A, \Gamma \rightarrow \Delta }{
\infer[T_2]{\Box A \rightarrow | \Gamma \rightarrow \Delta }{
\infer[K]{\Box A \rightarrow | \Gamma \Rightarrow \Delta }{
\infer[nec_2]{ A, \Gamma \Rightarrow \Delta }{
A, \Gamma \rightarrow \Delta
}
}}
}$
\hspace{2em}
$
\infer[merge]{\Box \Gamma \rightarrow \Box A }{
\infer[nec_1]{\Box \Gamma \rightarrow | \rightarrow \Box A }{
\infer=[4_l]{\Box \Gamma \rightarrow | \Rightarrow  A }{
\infer[nec_2]{ \Box \Gamma \Rightarrow  A }{
\Box \Gamma \rightarrow  A
}
}}
}$
\end{center}

The traditional sequent calculus for {\sf S5} is obtained from
the above system of {\sf S4} by replacing $\Box:r$
with the following form.

\begin{center}$
  \infer[\Box:r]{\Box  \Gamma \rightarrow \Box\Delta, \Box A }{
  \Box \Gamma \rightarrow \Box\Delta, A
  }$
  \end{center}
 
Let us see that this is a derivable rule in {\sf S5}$_H$
as follows.

\begin{center}
  $
  \infer=[merge]{\Box \Gamma \rightarrow \Box \Delta, \Box A }{
  \infer[nec_1]{\Box \Gamma \rightarrow | \rightarrow \Box\Delta| \rightarrow \Box A }{
  \infer=[cut]{\Box \Gamma \rightarrow | \rightarrow \Box\Delta| \Rightarrow  A }{
  \infer=[K, merge]{\Box \Gamma \rightarrow |\Box\neg \Box \Delta \rightarrow | \Rightarrow  A}{
  \infer=[\neg:l]{\Box \Gamma \rightarrow |\neg \Box \Delta \Rightarrow  A}{
    \infer=[4_l, merge]{
     \Box \Gamma \rightarrow | \Rightarrow \Box \Delta, A
    }{
      \infer[nec_2]{\Box \Gamma \Rightarrow \Box \Delta, A}{
        \Box \Gamma \rightarrow \Box \Delta, A
      }
    }
  }
  }
  }}
  }$
  \end{center}


Here the $cut$  is applied with $ \to\Box A, \Box \neg \Box A$,
which is provable in {\sf K5}$_H$ by the 
above argument.

As to the standard sequent calculi
for the logics {\sf K, T, D, K4, KD4},
it is similarly proved that they are contained in 
the corresponding hypersequent calculi
of our framework.
The sequent calculi for {\sf K} and {\sf D}
are originally due to 
Onishi and Matsumoto \cite{om1957}
and Valentini \cite{valentini},
respectively.
Here we handle only the characterizing inference rule for {\sf KD4}.

\begin{center}
  $
  \infer{\Box \Gamma, \Box \Delta \rightarrow }{
  \Gamma, \Box \Delta \rightarrow 
}
$
\end{center}

This is justified in {\sf KD4}$_H$ as follows.

\begin{center}
  $
  \infer=[merge]{\Box \Gamma, \Box \Delta \rightarrow}{
  \infer[D]{\Box \Gamma \rightarrow |  \Box\Delta \rightarrow|
   \rightarrow }{
  \infer=[4_l,merge]{\Box \Gamma \rightarrow |  \Box\Delta \rightarrow|
   \Rightarrow }{
  \infer=[K,merge]{\Box \Gamma \rightarrow |  \Box\Delta 
   \Rightarrow}{
  \infer[nec_2]{ \Gamma, \Box\Delta 
   \Rightarrow}{
   \Gamma, \Box\Delta  \rightarrow
  }
  }
  }
  }}
  $
  \end{center}

The standard hypersequent calculus for {\sf S5}
can be viewed to have only the turnstile $\Rightarrow$
in the definition of the hypersequent.
Then,  the calculus has the following inference rules for modality.

\begin{center}
$
\infer[\Box:l]
{{\cal H}^\Rightarrow | \Box A, \Gamma \Rightarrow \Delta
 }
{{\cal H}^\Rightarrow| A, \Gamma \Rightarrow \Delta
}$
\hspace{1em}
$
\infer[\Box:r]
{{\cal H}^\Rightarrow |\Box  \Gamma \Rightarrow \Box A
 }
{{\cal H}^\Rightarrow| \Box \Gamma \Rightarrow A}$
\vspace{1em}

$
\infer[move]
{{\cal H}^\Rightarrow |\Box A, \Theta \Rightarrow \Pi|
 \Gamma \Rightarrow \Delta
 }
{{\cal H}^\Rightarrow| \Theta \Rightarrow \Pi| \Box A, \Gamma \Rightarrow \Delta
}$


\end{center}

The rule $\Box:l$ is a form of $T_1$ in our system.
The rules $\Box:r$ and $move$ are derivable in our hypersequent
calculus {\sf S5}$_H$ in the following way.

\begin{center}$
\infer[5]{{\cal H}^\Rightarrow|\Box \Gamma \Rightarrow \Box A }{
\infer[merge]{{\cal H}^\Rightarrow|\Box \Gamma \rightarrow \Box A  }{
\infer[nec_1]{{\cal H}^\Rightarrow|\Box \Gamma \rightarrow | \rightarrow \Box A }{
\infer=[4_l, merge]{ {\cal H}^\Rightarrow|\Box \Gamma \rightarrow | \Rightarrow  A }{  
{\cal H}^\Rightarrow|
\Box \Gamma \Rightarrow  A}
}}
}$
\hspace{2em}
$
\infer[merge]{{\cal H}^\Rightarrow |\Box A, \Theta \Rightarrow \Pi|
 \Gamma \Rightarrow \Delta }{
\infer[5]{{\cal H}^\Rightarrow |\Theta \Rightarrow \Pi| \Box A \Rightarrow|
 \Gamma \Rightarrow \Delta }{
\infer[4_l, merge]{{\cal H}^\Rightarrow |\Theta \Rightarrow \Pi| \Box  A \rightarrow|
\Gamma \Rightarrow \Delta}{
{\cal H}^\Rightarrow |\Theta \Rightarrow \Pi| \Box A,
 \Gamma \Rightarrow \Delta }
}
}$
\end{center}

\section{Regular Proof and the Standard Sequent Calculi
for modal logics 
}

In this section,
we introduce the notion of regular proof
and show how the standard sequent 
calculi
such as  {\sf K}, {\sf T, D}, and {\sf S4} 
 emerge naturally via the regular proof.
 Let us divide the modal logics under consideration
into the following three groups.

\vspace{1em}
$\alpha$: {\sf K, KD=D, T, K4, KD4, KT4=S4}

$\beta$: {\sf K5, K45, KD5, KD45, KB5, 
KT5=S5}

$\gamma$: {\sf KB, KDB, B=KTB}

\vspace{1em}
When a logic {\sf M} belongs to
 $\alpha$, 
{\sf M}$_H$ has the  $nec_2$-rule.
For  {\sf M} in $\beta$, 
{\sf M}$_H$ has the $5_2$- or $B_25$-rule.
For {\sf M} in $\gamma$,
{\sf M}$_H$ has the $B_2$-rule but not the $5_2$-rule.

Let {\sf M} be any system 
and $P$ be any proof in {\sf M}$_H$.
For the sake of simplicity,
we may assume that
the initial sequent $A\gg A$ is restricted
to be $p
\to p
$ with an atomic $p
$
without loss of generality.

In addition, we frequently use the concatenation of hypersequents.
Here is the definition: for sequents $\Gamma \gg \Delta$ and  $\Pi \gg \Theta$,
$(\Gamma \gg \Delta)\cdot(\Pi \gg \Theta)=
\Gamma, \Pi \gg \Delta,  \Theta$.
For ${\cal G}=S_1|S_2|\cdots | S_n$,
$\cdot({\cal G})=(\cdots(S_1\cdot S_2) \cdots) \cdot  S_n$.

\begin{lemma}\label{noT}
Suppose that {\sf M} belongs to $\alpha$ or $\gamma$.
Then, the rule $T_2$ is admissible in {\sf M}$_H$.
\end{lemma}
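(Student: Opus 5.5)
We first note that the statement has content only for systems that actually contain the rule $T_2$. In $\alpha$ these are {\sf T} and {\sf S4}; in $\gamma$ it is {\sf B}$=${\sf KTB}. For the remaining systems ({\sf K, D, K4, KD4, KB, KDB}) we read the lemma as saying nothing: $T_2$ is not a rule of ${\sf M}_H$, and it could not be admissible there. Indeed, in ${\sf K}_H$ one derives $\Box p\rightarrow |\Rightarrow p$ by $K$ from $p\Rightarrow p$. Applying $T_2$ and $merge$ would then give $\Box p\rightarrow p$, which is not a theorem of {\sf K}, and this would contradict Theorem \ref{equivalence}. So we fix ${\sf M}\in\{{\sf T},{\sf S4},{\sf KTB}\}$ and show that $T_2$ can be eliminated using $T_1$.

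We prove the following claim by induction on the height of a $T_2$-free derivation. If ${\cal H}|\Gamma\Rightarrow\Delta$ is derivable without $T_2$, then ${\cal H}|\Gamma\rightarrow\Delta$ is derivable without $T_2$, by a derivation of no greater height. Given the claim, we remove the topmost application of $T_2$ and repeat. The height bound matters because in one case below the induction hypothesis is applied several times to the same premise. We split on the last rule $R$ and on whether the designated sequent $\Gamma\Rightarrow\Delta$ is principal in $R$.

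If $\Gamma\Rightarrow\Delta$ is in the context of $R$, we apply the induction hypothesis to the premise(s) and then reapply $R$. We must check that no side condition is broken, and it is not. $nec_2$ has no context. The context of the conclusion of $B_2$ has the shape ${\cal H}^\rightarrow$, so a $\Rightarrow$-sequent there is necessarily the principal $S^\Rightarrow$. If $R$ is $K$ or $4_l$, the context sequent $\Gamma\Rightarrow\Delta$ occurs in the premise in the same position and may be transformed there.

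The principal cases are handled as follows.
\begin{itemize}
\item An initial sequent $A\Rightarrow A$ or $\bot\Rightarrow$ becomes the initial sequent $A\rightarrow A$ or $\bot\rightarrow$.
\item Propositional and internal structural rules, including $cut$, are handled by the induction hypothesis followed by the same rule.
\item For $merge$ we apply the induction hypothesis to both merged $\Rightarrow$-sequents of the premise and merge again. For $ew$ we weaken directly by $S^\rightarrow$.
\item For $nec_2$ the premise is already the required $\rightarrow$-sequent.
\item For $K$, with conclusion ${\cal H}|\Box A\rightarrow|\Gamma\Rightarrow\Delta$, the induction hypothesis turns the premise into ${\cal H}|A,\Gamma\rightarrow\Delta$. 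Then $T_1$ gives ${\cal H}|\Box A,\Gamma\rightarrow\Delta$, and $split$ gives ${\cal H}|\Box A\rightarrow|\Gamma\rightarrow\Delta$.
\item For $4_l$ (only in {\sf S4}) we do the same without $T_1$: the induction hypothesis followed by $split$.
\item For $4_r$, with premise ${\cal H}|\Rightarrow A$, we apply $nec_1$ directly to obtain ${\cal H}|\rightarrow\Box A$.
\item For $T_1$ applied inside a $\Rightarrow$-sequent, we use the induction hypothesis and then $T_1$ in its $\rightarrow$-form.
\item The rule $D$ does not produce a $\Rightarrow$-sequent, so it contributes no principal case.
\end{itemize}

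The case we expect to be the main obstacle is $B_2$ in {\sf KTB}. Here the premise is ${\cal H}^\Rightarrow|S^\rightarrow$ and the conclusion is ${\cal H}^\rightarrow|S^\Rightarrow$, with every sequent of the conclusion other than $S^\Rightarrow$ being a $\rightarrow$-sequent. We want ${\cal H}^\rightarrow|S^\rightarrow$. We get it by applying the induction hypothesis successively to each $\Rightarrow$-sequent of ${\cal H}^\Rightarrow$ in the premise. This is legitimate because each application preserves height, so all of them stay within the induction. The derivation then ends in ${\cal H}^\rightarrow|S^\rightarrow$ without using $B_2$ at all. This is the only place where the height-preservation clause of the claim is used essentially.
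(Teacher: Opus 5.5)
Your restriction to ${\sf T}$, ${\sf S4}$ and ${\sf KTB}$ is the right reading, since the paper's own conversion of $K$ uses $T_1$. The local induction on height is a reasonable plan. However, the height-preservation clause it rests on is false.

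\textbf{Where the height bound fails.} In your principal $K$ case, the new derivation is the one given by the induction hypothesis followed by $T_1$ and $split$, so it can be one level higher than the original. This cannot be avoided by a cleverer construction. The initial sequent $p\Rightarrow p$ followed by $K$ gives $\Box p\rightarrow|\Rightarrow p$ at height $1$, whereas $\Box p\rightarrow|\rightarrow p$ has no derivation of height $\le 1$ at all: it needs $T_1$ on $p\rightarrow p$ and then $split$. The principal case of $B_1$ in ${\sf KTB}$, which your list omits, behaves the same way: $p\rightarrow p$ followed by $B_1$ gives $\Box p\Rightarrow|\rightarrow p$ at height $1$.

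\textbf{Consequence for $B_2$ and $merge$.} Without the height bound, your $B_2$ case breaks down. After you convert one $\Rightarrow$-sequent of ${\cal H}^\Rightarrow$ in the premise, the new derivation need not be shorter than the derivation ending in $B_2$. So the induction hypothesis cannot be invoked for the next sequent. Your $merge$ case applies the induction hypothesis twice to one premise and has the same defect.

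\textbf{Repair.} Convert several sequents at once. Prove by induction on height: if ${\cal G}$ is derivable without $T_2$ and $X$ is any set of occurrences of $\Rightarrow$-sequents in ${\cal G}$, then the hypersequent obtained from ${\cal G}$ by turning every member of $X$ into a $\rightarrow$-sequent is derivable without $T_2$. Each premise then needs the induction hypothesis only once, with no height bound:
\begin{itemize}
\item for $B_2$, take $X$ to be all of ${\cal H}^\Rightarrow$ in the premise;
\item for $merge$, take both components;
\item for $K$ and $4_l$, put the active sequent of the premise into $X$, then finish with $T_1$ and $split$ (resp.\ $split$ alone);
\item for $B_1$, convert the rest of $X$ in the premise, then apply $T_1$ and $split$;
\item for $4_r$, leave $\Rightarrow A$ out of $X$ and finish with $nec_1$.
\end{itemize}
With this change your proof goes through. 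It is a genuinely different route from the paper's. The paper traces the ancestors of the turnstile of the $T_2$-premise and converts them along the path. It then replaces the originating $B_2$ steps by fresh $T_2$ steps of smaller weight, the weight being the number of $B_2$ applications above.
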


{\it Proof.}
Let $I$ be an application of $T_2$ in $P$.

\begin{center}
$\infer[I]{
{\cal H}|\Gamma \to \Delta
}{
{\cal H}|
\Gamma \Rightarrow \Delta
}$
\end{center}

We are going to eliminate $I$
without generating new applications of 
$T_2$.
Let $(K_1, \ldots, K_n)$
be the list of applications of rules
introducing the turnstile $\Rightarrow$
of the upper sequent of $I$.
They can be $nec_2$, 
$B_{1,2}$
and $ew$. 
We handle  the case when {\sf M} is in $\gamma$.
The case for $\alpha$ is a special case of it
because the rule $nec_2$ can be viewed as a special
form of the rule $B_2$.

\begin{center}
$\infer[K_i]{
 {\cal I}_i^\to| \Pi_i \Rightarrow^{\tau_1} \Theta_i 
}{
 {\cal I}_i^\Rightarrow
|\Pi_i \to \Theta_i 
}$\hspace{1em}
$\infer[K_j]{
{\cal I}_j | \Box A \Rightarrow^{\tau_1} | \Pi_j \to \Theta_j 
}{
{\cal I}_j | A, \Pi_j \to \Theta_j 
}
$
\hspace{1em}
$\infer[K_k]{
{\cal I}_k| \Pi_k \Rightarrow^{\tau_1} \Theta_k 
}{
{\cal I}_k
}
$

\vspace{1.3em}
$\deduc$\hspace{3em} $\deduc$
\vspace{1.3em}

$\deduc$

\vspace{1.3em}

$\infer[I]{
{\cal H}|\Gamma \to \Delta
}{
{\cal H}|
\Gamma \Rightarrow^{\tau_1} \Delta
}$
\end{center}

Here $\Rightarrow^{\tau_1}$ means
that the turnstile is of the upper sequent of $I$
or its ancestor.
Let the {\it weight} of $I$ be the number of
the applications of
$B_2$ 
above $I$.
Now we run the following proof-transformation.

\vspace{1em}
Step 1. Convert
the occurrences of $\Rightarrow^{\tau_1}$
to $\to$ and eliminate $I$.

Step 2. Convert applications of $K$- and 
$4_l$-rules
whose upper sequent was of $ \Rightarrow^{\tau_1} $
to the following.

\begin{center}
$\infer[K\hspace{2em}
\rhd]{
{\cal J}| \Box D \to | \Sigma \Rightarrow^{\tau_1} \Xi
}{
{\cal J}|D, \Sigma \Rightarrow^{\tau_1} \Xi
}
\hspace{2em}
\infer[split]{
{\cal J}| \Box D \to | \Sigma \to \Xi
}{\infer[T_1]{
{\cal J}| \Box D, \Sigma \to \Xi
}{
{\cal J}|D, \Sigma \to \Xi
}
}
$

\vspace{1em}

$\infer[4_l\hspace{2em}
\rhd]{
{\cal J}| \Box D \to | \Sigma \Rightarrow^{\tau_1} \Xi
}{
{\cal J}|\Box D, \Sigma \Rightarrow^{\tau_1} \Xi
}
\hspace{2em}
\infer[split]{
{\cal J}| \Box D \to | \Sigma \to \Xi
}{
{\cal J}| \Box D, \Sigma \to \Xi
}
$
\end{center}

Step 3. For each $K_i$, convert it as follows.

\begin{center}
$\infer[K_i\hspace{2em}
\rhd]{
 {\cal I}_i^\to| \Pi_i \Rightarrow^{\tau_1} \Theta_i 
}{
 {\cal I}_i^\Rightarrow
|\Pi_i \to \Theta_i 
}$\hspace{2em}
$\infer=[T_2]{
 {\cal I}_i^\to| \Pi_i \to \Theta_i 
}{
 {\cal I}_i^\Rightarrow
|\Pi_i \to \Theta_i 
}$

\vspace{1em}
$\infer[K_j\hspace{2em}
\rhd]{
{\cal I}_j | \Box A \Rightarrow^{\tau_1} | \Pi_j \to \Theta_j 
}{
{\cal I}_j | A, \Pi_j \to \Theta_j 
}
$\hspace{2em}
$\infer[split]{
{\cal I}_j | \Box A \to | \Pi_j \to \Theta_j
}{\infer[T_1]{
{\cal I}_j | \Box A, \Pi_j \to \Theta_j
}{
{\cal I}_j | A, \Pi_j \to \Theta_j
}
}
$

\vspace{1em}
$\infer[K_k\hspace{2em}
\rhd]{
{\cal I}_k| \Pi_k \Rightarrow^{\tau_1} \Theta_k 
}{
{\cal I}_k
}
$
$\hspace{2em}\infer{
{\cal I}_k| \Pi_k \to \Theta_k 
}{
{\cal I}_k
}$
\end{center}

In this conversion for $I$,
$I$ is deleted but some applications of $T_2$
are newly generated.
The weight of each of those applications of $T_2$
is reduced at least by one.
Therefore, by repeating this transformation
for  each of those applications of $T_2$,
we eventually find
only applications of $T_2$ such that
there occurs no $B_2$-rule above them
(and so the turnstile $\Rightarrow$ of
the upper sequent of them is introduced
only by $B_1$ and $ew$),
and they can be eliminated by the above procedure
without generating new 
applications of $T_2$.

Thus we can conclude that the application $I$ of $T_2$
and the other applications of $T_2$ generated by the above
procedure are all deleted.
\QED

\vspace{1em}
We formally define the notion of regular proof.

\begin{defn} \label{regular}
Let  {\sf M} be a system in $\alpha$ and
$P$ be a proof in {\sf M}$_H$.
Then $P$ is {\rm regular} 
if there occurs no $4_r$-rule in $P$ 
and every application of $nec_1$- 
 and $D$-rules in $P$
is of the following form.

\begin{center}
$
\infer[nec_1
]{
\Box B_1\to | \cdots | \Box B_n\to |
\Box C_1\to | \cdots | \Box C_m\to | \to \Box E
}{
\infer=[K, 4_l]{
\Box B_1\to | \cdots | \Box B_n\to |
\Box C_1\to | \cdots | \Box C_m\to |  \Rightarrow E
}{B_1, \ldots, B_n, \Box C_1, \ldots, \Box C_m \Rightarrow  E}
}
$
\end{center}

\begin{center}
$
\infer[D]{
\Box B_1\to | \cdots | \Box B_n\to |
\Box C_1\to | \cdots | \Box C_m\to | \to
}{
\infer=[K, 4_l]{
\Box B_1\to | \cdots | \Box B_n\to |
\Box C_1\to | \cdots | \Box C_m\to |  \Rightarrow
}{B_1, \ldots, B_n, \Box C_1, \ldots, \Box C_m \Rightarrow  }
}
$
\end{center}

We call these forms
of $nec_1$- and $D$-rules {\rm regular}.

\end{defn}

\begin{theorem} \label{regularproof}
Let  {\sf M}$_H$ be a system in $\alpha$ and
$P$ be a proof in {\sf M}$_H$.
$P$ can be transformed into
a regular proof without changing its end hypersequent.
\end{theorem}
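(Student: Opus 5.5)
We propose to prove Theorem \ref{regularproof} by a local permutation argument, working from the top of $P$ downward. By Lemma \ref{noT} we may first assume that $P$ contains no application of $T_2$. In systems of group $\alpha$ the only rule that produces a $\Rightarrow$-sequent from $\to$-sequents is $nec_2$. Moreover, $nec_2$ has a single-sequent premise. So every $\Rightarrow$-sequent in $P$ traces back to one of three sources: a lone sequent $\Gamma\to\Delta$ turned into $\Gamma\Rightarrow\Delta$, an $ew$, or a $\Rightarrow$-initial sequent. The task is to track what happens to such a $\Rightarrow$-sequent between its creation and its consumption. It can be consumed by $nec_1$, by $D$, by $merge$ with another $\Rightarrow$-sequent, or it can survive to the end hypersequent.

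The first step is to eliminate $4_r$. In $\alpha$-systems containing $4$, an application of $4_r$ turning ${\cal H}|\Rightarrow A$ into ${\cal H}|\Rightarrow\Box A$ is replaced by two steps. Apply $nec_1$ to obtain ${\cal H}|\to\Box A$. Then push the resulting $\to\Box A$ back into a $\Rightarrow$-sequent. This push is done by rebuilding the subderivation of $\Rightarrow A$ with its $K$/$4_l$ side sequents $\Box C\to$ collected. We then use the derivation pattern $\Box C_i\to|\Rightarrow A \leadsto \Box C_i\Rightarrow \Box A$ via $4_l$, $merge$, $nec_1$, $merge$, $nec_2$, as in the derivation of the {\sf S4} rule $\Box{:}r$ above.

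The second step is the normal-form analysis of the premise of each $nec_1$ or $D$. I would prove by induction on the height of the subderivation the following claim. If ${\cal H}|\Gamma\Rightarrow\Delta$ is derivable with the displayed sequent traced, then there is a derivation of
\[
\Box B_1\to|\cdots|\Box B_n\to|\Box C_1\to|\cdots|\Box C_m\to|\Gamma'\Rightarrow\Delta \;\big|\; {\cal H}
\]
in which the $\Rightarrow$-sequent arises from $B_1,\ldots,B_n,\Box C_1,\ldots,\Box C_m,\Gamma'\Rightarrow\Delta$. That sequent is obtained by $nec_2$ from a $\to$-sequent, and the $K$ and $4_l$ steps are applied directly below it. The induction permutes propositional and internal structural rules acting on the $\Rightarrow$-sequent upward past $K$ and $4_l$. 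It replaces a $merge$ of two $\Rightarrow$-sequents by merging their $nec_2$-premises as $\to$-sequents and applying $nec_2$ once. Rules acting only on ${\cal H}$ are postponed, together with the context ${\cal H}$ itself, using $ew$ and merge of contexts below $nec_1$/$D$. When the traced sequent finally reaches $nec_1$ or $D$, its antecedent $\Gamma'$ must be empty. Remaining antecedent formulas are absorbed earlier into $K$/$4_l$ applications by inserting them as weakenings at the top. Alternatively, the $nec_1$ is reconstructed via the original route.

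The main obstacle is the handling of the context ${\cal H}$. In the regular form the premise of $nec_1$ consists only of the $\Box B\to$, $\Box C\to$ sequents and $\Rightarrow E$. In $P$, however, ${\cal H}$ may be built up by rules interleaved with the modal steps, and may even share material via $cut$ with the traced sequent. To handle this, I would first strengthen the induction to process topmost non-regular $nec_1$/$D$ first. Then I would show that ${\cal H}$ can be split off: every rule applied to a component of ${\cal H}$ above $nec_1$ can be permuted below $nec_1$, since $nec_1$ does not touch ${\cal H}$. For a two-premise rule such as $cut$ or $\wedge{:}r$ with the shared context ${\cal H}|\cdots$, we duplicate the subproof and use $ew$/$merge$ to recover the context. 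A $cut$ whose cut formula lies in the traced sequent is permuted upward into the $\to$-premise of $nec_2$, duplicating the other premise and resolving its $\Box$-formulas by the same normal-form claim. The termination measure is the number of non-regular $nec_1$/$D$ applications, each step being justified by the height induction of the second step.
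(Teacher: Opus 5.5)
There is a genuine gap in your second step. Your normal-form claim says, in effect, that the traced $\Rightarrow$-sequent together with its $K$/$4_l$ side sequents is always derivable on its own. The context ${\cal H}$ could then be pushed below $nec_1$/$D$ and restored by $ew$. This is false whenever the traced sequent, or a piece of it, was introduced by $ew$.

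The simplest instance: from $p\to p$ apply $ew$ to get $p\to p|\Rightarrow q$, then $nec_1$ to get $p\to p|\to\Box q$. Here $\Rightarrow q$ is not derivable, so no regular premise for this $nec_1$ exists. No permutation of rules acting on ${\cal H}$ helps either, since ${\cal H}$ was built entirely above the $ew$ that created the traced sequent. The correct move is to delete the $nec_1$ and reintroduce $\to\Box q$ by $ew$ from the provable context.

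The same problem arises at a two-premise rule whose traced sequent stems from $nec_2$ in one premise and from $ew$ (over a provable context) in the other. For example, $\wedge{:}r$ from ${\cal H}|\Rightarrow A$ and ${\cal H}|\Rightarrow B$, followed by $nec_1$, where only $\Rightarrow A$ is derivable alone. Permuting the context rules below $nec_1$ would require regularizing $nec_1$ in the $ew$ branch, which is impossible. Your recipe of merging the $nec_2$-premises has nothing to merge there, and your fallback of reconstructing the $nec_1$ via the original route does not yield a regular form. (As written, your claim also derives a different hypersequent from ${\cal H}|\Gamma\Rightarrow\Delta$, with $\Gamma'$ and the side sequents left unrelated to ${\cal H}$ and $\Gamma$.)

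The missing idea is a dichotomy, which is exactly the paper's Claim 1. First permute the rules not touching the traced $\Rightarrow^{\tau_1}$-sequents below $I$. Then every hypersequent ${\cal L}|{\cal M}^{\to^{\tau_2}}|{\cal N}^{\Rightarrow^{\tau_1}}$ on a path from a source $nec_2$/$ew$ down to $I$ has either ${\cal L}$ provable or ${\cal M}^{\to^{\tau_2}}|{\cal N}^{\Rightarrow^{\tau_1}}$ provable. This is checked top-down along the path: $nec_2$ gives the second alternative, $ew$ propagates whichever alternative holds, and shared-context two-premise rules are handled by case distinction.

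At the premise of $I$, the first alternative lets you replace $I$ by $ew$. Only in the second alternative do you carry out the permutation of $K$/$4_l$ steps to the bottom, which is what your induction describes.

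Note also that your first step, removing $4_r$ via $nec_1$, $merge$, $nec_2$, $4_l$, presupposes that the premise of $4_r$ already has this split, regular shape. So it must come after the normal-form analysis, as in the paper, not before it.
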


\rm
{\it Proof.} 
We handle only the form for
$nec_1$- 
rule.
The form for $D$ is similarly handled to
$nec_1$.
We are going to transform both applications of 
$nec_1$- and $4_r$-rules into a similar form,
which is the regular form for 
the $nec_1$-rule.
Then,
we will delete those of $4_r$-rule.

Let $I$ be an uppermost application of 
$nec_1$- or $4_r$-rule in $P$
and $(K_1, \ldots, K_n)$
be the list of
applications of $nec_2$ or $ew$
which introduce the turnstile $\Rightarrow$
of the upper sequent of $I$.
We are going to make $I$ regular.

\begin{center}
$\infer[K_i\hspace{1em}\cdots \hspace{1em}]{
\Pi \Rightarrow^{\tau_1} \Theta
}{
\Pi\to \Theta
}
\infer[K_j]{
{\cal L}|
\Sigma \Rightarrow^{\tau_1} \Psi
}{
{\cal L}
}
$

\vspace{1em}

\deduc
\hspace{1em}

\vspace{1em}

$\infer[I]{
{\cal H} 
| \gg \Box E
}{
{\cal H}
|
 \Rightarrow^{\tau_1} E
} 
 $
\end{center}

Here 
$\Rightarrow^{\tau_1}$ means that the turnstile is of
the upper sequent of $I$ or its ancestor.

Fix any $K_i$.
Let $(I_1=K_i, I_2, \ldots, I_{m}=I)$
be the list of $K_i$, $I$, and the rule applications
between $K_i$ and $I$
in the topdown order of the occurrences of them in $P$.
We make an arrangement so that in the list
there are only (i) the applications of $ew$
or (ii)
those of which upper or lower sequents
have the turnstile  $\Rightarrow_{\tau_1}$.
Suppose that there are $a$-many applications of $K$- or $4_l$-rule
whose upper sequent has the index $\tau_1$ in the list.

\begin{center}
$\infer[I_1]{
\Pi \Rightarrow ^{\tau_1} \Theta
}{
\Pi\to \Theta
}$

\vspace{.5em}
$\vdots$
\hspace{1em}

\vspace{1em}
\hspace{4.5em}
$\infer[K{\rm -} \hspace{.3em}
or \hspace{.3em} 4_l{\rm -}rule]{
{\cal H}_1 | \Box B_1 \to ^{\tau_2} |
\Sigma_1 \Rightarrow ^{\tau_1} \Phi_1
}{
{\cal H}_1 | \Box^*B_1,
\Sigma_1 \Rightarrow ^{\tau_1} \Phi_1
}$

\vspace{.5em}
$\vdots$
\hspace{1em}

$\vdots$
\hspace{1em}

\vspace{.5em}
\hspace{4.5em}
$\infer [K{\rm -} \hspace{.3em}
or \hspace{.3em} 4_l{\rm -}rule]{
{\cal H}_{a} | \Box B_{a} \to ^{\tau_2} |
\Sigma_{a} \Rightarrow ^{\tau_1} \Phi_{a}
}{
{\cal H}_{a}| \Box^*B_{a},
\Sigma_{a} \Rightarrow ^{\tau_1} \Phi_{a}
}$

\vspace{.5em}
$\vdots$
\hspace{1em}

\vspace{.5em}
$\infer[I_{k}]{
{\cal H}| \gg \Box E
}{
{\cal H} |
 \Rightarrow ^{\tau_1} E
}$

\end{center}

\normalsize
Here the form $\Box^* B$ signifies
$B$ or $\Box B$;
${\to^{\tau_2}}$ means that it is of the lower sequent
of one of those applications of $K$ and
$4_l$.
Thus, $S^{\to^{\tau_2}}$ is of the form $\Box A \to^{\tau_2}$.



Let $(J_1,  \ldots, J_{h}) \subset (I_1,  \ldots, I_{m})$
be the list of 
the applications of rules
(i) which are not $ew$
and (ii) of which both upper and lower sequents
do not have the turnstile  $\Rightarrow_{\tau_1}$.
We are going to rule out $(J_1,  \ldots, J_{h}) $
from $ (I_1,  \ldots, I_{m})$.

Suppose that for some $I_p$, there is no item of
$(J_1, \ldots, J_{h})$
above $I_p$
and the applications $(J_1,  \ldots, J_{q})$
occur between  $I_p$ and  $I_{p+1}$.
We can permute  $I_{p+1}$ with  
one from $(J_1, \ldots, J_q)$.
The permutation starts from  $J_q$,
proceeds one by one, and
ends with $J_1$.
It can be done because the principal formula of a $J_i$
cannot enter a sequent of $\Rightarrow_{\tau_1}$.

Note that some $I_a$ in the list can be duplicated
in the permutation;
the following is an example.

\begin{center}
$
\infer[I_m]{
\Sigma \to \Phi, A\wedge B
|{\cal H}| \gg\Box E
}{
\infer{ \Sigma \to \Phi, A\wedge B
|{\cal H}| \Rightarrow^{\tau_1} E
}{
\Sigma \to \Phi, A
|{\cal H}|
\Rightarrow^{\tau_1} E
\hspace{2em}
\Sigma \to \Phi, B
|{\cal H}|
\Rightarrow^{\tau_1} E
}
}
$
\end{center}

\begin{center}
$\bigtriangledown$

\end{center}

\begin{center}
$
\infer{ \Sigma \to \Phi, A\wedge B
|{\cal H}|  \gg\Box E
}{
\infer[I_m]{
\Sigma \to \Phi, A
|{\cal H}| \gg\Box E
}{
\Sigma \to \Phi, A
|{\cal H}|
\Rightarrow^{\tau_1} E
}
\hspace{2em}
\infer[I_m']{
\Sigma \to \Phi,  B
|{\cal H}| \gg\Box E
}{
\Sigma \to \Phi, B
|{\cal H}|
\Rightarrow^{\tau_1} E
}
}
$
\end{center}

In this case, the path itself $(I_1,
\ldots, I_m)$ is duplicated.

Repeating this procedure,
the list $(I_1,  \ldots, I_{m})$ eventually consists only of
$K_i$, $I$,
the applications of  $ew$,
and
those of which upper sequents
have the turnstile  $\Rightarrow^{\tau_1}$.

\vspace{1em}
{\bf Claim 1} For any hypersequent of the form
${\cal L}|{\cal M}^{\to^{\tau_2}}|
{\cal N}^{\Rightarrow^{\tau_1}}$ in $P$,
either
${\cal L}$ or
${\cal M}^{\to^{\tau_2}}|{\cal N}^ {\Rightarrow^{\tau_1} }$
 is provable in the system.
 Here ${\cal L}$ does not include sequents
 of $\to^{\tau_2}$ nor $ {\Rightarrow^{\tau_1} }$.

\vspace{1em}
Prove this claim.
For each $K_i$,
we proceed from top to bottom in the list $({I}_1=K_i, \ldots, I_m=I)$.
For $I_1$,
if it is $nec_2$,
its lower sequent is obviously provable in the system.
Consider the case when $I_1$ is $ew$ of the following form.

\begin{center}
$
\infer{
{\cal L}|{\cal M}^{\to^{\tau_2}}|
{\cal N}^{\Rightarrow^{\tau_1}}|S^{\Rightarrow^{\tau_1}}
}{
{\cal L}|{\cal M}^{\to^{\tau_2}}|
{\cal N}^{\Rightarrow^{\tau_1}}
}
$
\end{center}

We may suppose that either ${\cal L}$ or
${\cal M}^{\to^{\tau_2}}|
{\cal N}^{\Rightarrow^{\tau_1}}$ is provable.
In the latter case,
${\cal M}^{\to^{\tau_2}}|
{\cal N}^{\Rightarrow^{\tau_1}}|S^{\Rightarrow^{\tau_1}}$
is also provable by $ew$.

Next suppose that $I_i$ is $\wedge:r$;
the other cases are omitted.

\begin{center}
$
\infer{{\cal L}
|{\cal M}^{\to^{\tau_2}}|{\cal N}^{\Rightarrow^{\tau_1} }|
\Gamma ^ {\Rightarrow^{\tau_1} }
 \Delta, A\wedge B
}{
{\cal L}
|{\cal M}^{\to^{\tau_2}}|{\cal N}^{\Rightarrow^{\tau_1} }|
\Gamma ^ {\Rightarrow^{\tau_1} }
 \Delta, A
\hspace{2em}
{\cal L}
|{\cal M}^{\to^{\tau_2}}|{\cal N}^{\Rightarrow^{\tau_1} }|
\Gamma ^ {\Rightarrow^{\tau_1} }
 \Delta, B
}
$
\end{center}

We have two subcases.

(i) ${\cal L}$ is provable.
Nothing is to be proven.

(ii) ${\cal M}^{\to^{\tau_2}}|{\cal N}^{\Rightarrow^{\tau_1} }|S^ {\Rightarrow^{\tau_1}}|
\Gamma ^ {\Rightarrow^{\tau_1} }
 \Delta, A
$
and ${\cal M}^{\to^{\tau_2}}|{\cal N}^{\Rightarrow^{\tau_1} }|\Gamma ^ {\Rightarrow^{\tau_1} }
 \Delta, B
$
are provable.

Then,
${\cal M}^{\to^{\tau_2}}|{\cal N}^{\Rightarrow^{\tau_1} }|
\Gamma ^ {\Rightarrow^{\tau_1} }
 \Delta, A\wedge B
$
is also provable by $\wedge:r$

Thus, Claim 1 is established.
By Claim 1,
for the sequent ${\cal H}|{\cal I}^{\to^{\tau_2}}| {\Rightarrow^{\tau_1}} E$,
${\cal H}$ or ${\cal I}^{\to^{\tau_2}}| {\Rightarrow^{\tau_1}} E$
is provable.
In the former case, $I$ is deleted and
the lower sequent of $I$ is revived by $ew$,
and we rule out this case.

\begin{center}
$
\infer={{\cal H}|\Box B_{1}\to| \cdots |
\Box B_{a{-1}} \to |
\Box B_{a} \to | {\gg} \Box E}{
{\cal H}
}
$
\end{center}

\normalsize
In the latter case,
we can revive the lower sequent of $I$ as follows.

\begin{center}
$
\infer={{\cal H}|\Box B_{1}\to| \cdots |
\Box B_{a{-1}} \to |
\Box B_{a} \to| {\gg} \Box E}{
\infer{\Box B_{1}\to| \cdots |
\Box B_{a{-1}} \to |
\Box B_{a} \to| {\gg} \Box E}{
\Box B_{1}\to| \cdots |
\Box B_{a{-1}} \to |
\Box B_{a} \to| {\Rightarrow^{\tau_1}} E}
}
$
\end{center}

Now $P$ is of the following form.

\small
\begin{center}
$
\infer[K_i]{
\Pi \Rightarrow \Theta
}{
\Pi\to \Theta
}$

\vspace{.5em}
$\vdots$
\hspace{1em}

\vspace{1em}
\hspace{4.5em}
$\infer[K{\rm -} \hspace{.3em}
or \hspace{.3em} 4_l{\rm -}rule]{
\Box B_1 \to |
\Sigma_1 \Rightarrow \Phi_1
}{
 \Box^*B_1,
\Sigma_1 \Rightarrow \Phi_1
}$

\vspace{.5em}
$\vdots$
\hspace{1em}

$\vdots$
\hspace{1em}

\vspace{.5em}
\hspace{4.5em}
$\infer [K{\rm -} \hspace{.3em}
or \hspace{.3em} 4 _l{\rm -}rule]{
\Box B_{1}\to | \cdots |
\Box B_{a-1} \to |
\Box B_{a} \to |
\Sigma_{a} \Rightarrow \Phi_{a}
}{
\Box B_{1}\to | \cdots |
\Box B_{a{-1}} \to |
\Box^* B_{a},
\Sigma_{a} \Rightarrow \Phi_{a}
}$

\vspace{.5em}
$\vdots$
\hspace{1em}

\vspace{.5em}
$\infer={{\cal H}|
\Box B_{1}\to | \cdots |\Box B_{a{-1}} \to |
\Box B_{a} \to |
\gg \Box E}{
\infer[I]{
\Box B_{1}\to | \cdots |\Box B_{a{-1}} \to |
\Box B_{a} \to |
\gg \Box E
}{
\Box B_{1}\to | \cdots |
\Box B_{a{-1}} \to |\Box B_{a} \to |
 \Rightarrow E
}
}
$

\end{center}

\normalsize
Then we eliminate those applications of $K$- and $4 _l $-rule
whose principal formula is $\Box B_i$ to obtain
the following in the place of the upper hypersequent of $I$.

\begin{center}
$
\Box^*B_{1}, \ldots,
\Box^*B_{a{-1}},\Box^*B_{a}
\Rightarrow E
$
\end{center}

Then we can revive those applications of $K$- and $4 _l $-rule
below this hypersequent to obtain the following proof.

\small
\begin{center}
$\infer[K_i]{
\Pi \Rightarrow \Theta
}{
\Pi\to \Theta
}$

\vspace{.5em}
$\vdots$
\hspace{1em}

\vspace{1em}
$
 \Box^*B_1,
\Sigma_1 \Rightarrow \Phi_1
$

\vspace{.5em}
$\vdots$
\hspace{1em}

$\vdots$
\hspace{1em}

\vspace{.5em}
$
\Box^*B_{1}, \ldots,
\Box^*B_{a{-1}},
\Box^*B_{a},
\Sigma_{a} \Rightarrow \Phi_{a}
$

\vspace{.5em}
$\vdots$
\hspace{1em}

\vspace{.5em}
\hspace{2em}
$\infer={{\cal H}|
\Box B_{1}\to | \cdots |\Box B_{a{-1}} \to |
\Box B_{a} \to |
\gg \Box E}{
\infer[I]{
\Box B_{1}\to | \cdots |
\Box B_{a{-1}} \to |
\Box B_{a} \to |
\gg \Box E
}{
\infer=[K{\rm -} \hspace{.3em}
{\rm or} \hspace{.3em} 4 _l{\rm -}rule]{
\Box B_{1}\to | \cdots |
\Box B_{a{-1}} \to |
\Box B_{a} \to |
 \Rightarrow E}{
\Box^*B_{1}, \ldots,
\Box^*B_{a{-1}},\Box^*B_{a}  \Rightarrow E
}
}
}
$

\end{center}

\normalsize
Finally $I$ becomes of regular form
and
the original lower hypersequent of $I$ is revived.

Note that when $I$ is $D$-rule,
the places of $E$ and $\Box E$ must be empty.

Then, when $I$ is $4_r$,
we further make the following transformation
to delete the application of $4_r$.

\begin{center}
%
$\infer={{\cal H}|
\Box B_{1}\to | \cdots |\Box B_{a{-1}} \to |
\Box B_{a} \to |
\Rightarrow \Box E}{
\infer[I]{
\Box B_{1}\to | \cdots |
\Box B_{a{-1}} \to |
\Box B_{a} \to |
\Rightarrow \Box E
}{
\infer=[K{\rm -} \hspace{.3em}
{\rm or} \hspace{.3em} 4 _l{\rm -}rule]{
\Box B_{1}\to | \cdots |
\Box B_{a{-1}} \to |
\Box B_{a} \to |
 \Rightarrow E}{
\Box^*B_{1}, \ldots,
\Box^*B_{a{-1}},\Box^*B_{a}  \Rightarrow E
}
}
}
$

\end{center}

\begin{center}
\hspace{-4em}
$\bigtriangledown$
\vspace{.5em}

\end{center}

\begin{center}
$
\infer=[ew]{{\cal H}|
\Box B_{1}\to | \cdots |
\Box B_{a{-1}} \to |
\Box B_{a} \to |
\Rightarrow \Box E}{
\infer=[4_l]{
\Box B_{1}\to | \cdots |
\Box B_{a{-1}} \to |
\Box B_{a} \to |
\Rightarrow \Box E}{
 \infer[nec_2]{\Box B_{1}, \ldots, \Box B_{a{-1}}, \Box B_{a} 
\Rightarrow 
 \Box E}{
\infer=[merge]{
\Box B_{1}, \ldots, \Box B_{a{-1}}, \Box B_{a} \to 
 \Box E}{
\infer[I]{
\Box B_{1}\to | \cdots |
\Box B_{a{-1}} \to |
\Box B_{a} \to |
\rightarrow \Box E
}{
\infer=[K{\rm -} \hspace{.3em}
{\rm or} \hspace{.3em} 4 _l{\rm -}rule]{
\Box B_{1}\to | \cdots |
\Box B_{a{-1}} \to |
\Box B_{a} \to |
 \Rightarrow E}{
\Box^*B_{1}, \ldots,
\Box^*B_{a{-1}},\Box^*B_{a}  \Rightarrow E
}
}}}
}}
$

\end{center}

Then the application $I$ of $4_r$ 
is replaced with a regular form of $nec_1$.

By repeating this transformation procedure,
finally we obtain a regular proof
of the same hypersequent
where there is no application of $4_r$. 
\QED

\vspace{1em}
Now we are in a position to discuss
the existing standard sequent calculi
for modal logics of the group $\alpha$:
{\sf K, T, D, K4,
KD4, S4}.
We show how they emerged in terms of our 
proof-theoretical framework.

Let $P$ be a regular proof of a sequent
$\cal H^\to$. 
Note that there is no $4_r$-rule in $P$.
For an application $I$ of $nec_1$- or $D$-rule
in $P$,
we call a possible series of 
applications of $K$- and $4_l$-rules just above
$I$ the {\it critical part of} $I$ in $P$.

For an application of the $K$- or $4_l$-rule,
there must be a $\Rightarrow$-sequent
in the lower hypersequent,
and the $\Rightarrow$-sequent
must be converted to
a $\to$-sequent by $nec_1$- or $D$-rule,
because the end hypersequent ${\cal H}^\to$
only has $\to$-sequents.
Therefore, the applications of $K$- and $4_l$-rules
must be followed by the regular form of $nec_1$- or $D$-rule.
Thus, $K$- and $4_l$-rules can occur only in a critical part of $P$.
This implies
the following fact holds.

\vspace{1em}
{\bf F1}.
In non-critical parts of $P$,
there occur only propositional rules,
internal/external structural rules,
the $T_1$-rule and the $nec_2$-rule.

\vspace{1em}

Consider a non-critical part
which lies below applications of $nec_1$- or $D$-rules and initial 
sequents and above (i) a critical part or
(ii) the end hypersequent ${\cal H}^\to$.
In the case of (i), there must be a $nec_2$ application
in the non-critical part;
in the case of (ii), there cannot be a $nec_2$-rule
in the non-critical part.

\begin{center}
\hspace{2em}
$
p\to p \hspace{3em}
\infer[nec_1]{
\Box B_1\to | \cdots | \Box B_n\to | \to \Box D
}{
\infer=[K, 4_l]{
CRITICAL \hspace{1em} PART
}{\Box^* B_1, \ldots, \Box^* B_n \Rightarrow  D}
}
$

\vspace{2em}
\deduc
\hspace{3em}

\vspace{1em}

\hspace{-1em}
$
\infer[nec_2]{
\Upsilon \Rightarrow \Sigma
}{
\Upsilon \to \Sigma
}
$

\vspace{.5em}
$\vdots$
\hspace{3em}

\vspace{1.5em}
\hspace{2em}$
\infer[nec_1]{
\Box C_1\to | \cdots | \Box C_m\to | \to \Box E
}{
\infer=[K, 4_l]{
CRITICAL \hspace{1em} PART 
}{\Box^* C_1, \ldots, \Box^* C_m \Rightarrow  E}
}
\hspace{2.5em}
$
\end{center}

For every non-critical part,
do the following transformation. 
First apply the $merge$ just after the last
critical parts 
and convert 
every hypersequent ${\cal G}$
in the non-critical part to 
a single sequent $\cdot({\cal G})$.

Note that those rules listed in {\bf F1},  
which can occur in non-critical parts,
are preserved
where $ew$ is converted to $iw$.

\vspace{1em}
{\bf F2}.
In a non-critical part where
every hypersequent is a sequent,
we can move applications of $nec_2$ freely
in a path of sequents.

\vspace{1em}
Thus we lift down the applications of 
$nec_2$
introducing $\Rightarrow$ of the top sequent
of the critical part
to the place just above that critical part.

\begin{center}
\hspace{2em}
$
p\to p \hspace{3em}
\infer=[merge]{\Box B_1, \ldots,  \Box B_n \to \Box D}{
\infer[nec_1]{
\Box B_1\to | \cdots | \Box B_n\to | \to \Box D
}{
\infer=[K, 4_l]{
CRITICAL \hspace{1em} PART
}{\Box^* B_1, \ldots, \Box^* B_n \Rightarrow  D}
}
}
$

\vspace{2em}
\deduc
\hspace{3em}

\vspace{1em}

\hspace{-3.4em}
$
\vdots
$

\vspace{.5em}
$\vdots$
\hspace{2.84em}

\vspace{1.5em}
$
\infer[nec_1]{
\Box C_1\to | \cdots | \Box C_m\to | \to \Box E
}{
\infer=[K, 4_l]{
CRITICAL \hspace{1em} PART
}{
\infer[nec_2]{
\Box^* C_1, \ldots, \Box^* C_m \Rightarrow  E}
{\Box^* C_1, \ldots, \Box^* C_m \to E}
}
}
\hspace{2.5em}
$
\end{center}

Note that the end hypersequent ${\cal H}^\to$
is converted to $\cdot ({\cal H}^\to)$
by this transformation. 

Then every critical part has
 one of the following forms.

\begin{center}
$
\infer=[merge]{ \Box C_1, \ldots, \Box C_m  \to \Box E
}{
\infer[nec_1]{
\Box C_1\to | \cdots | \Box C_m\to | \to \Box E
}{
\infer=[K, 4_l]{
CRITICAL \hspace{1em} PART
}{
\infer[nec_2]{
\Box^* C_1, \ldots, \Box^* C_m \Rightarrow  E}
{\Box^* C_1, \ldots, \Box^* C_m \to E}
}
}
}
\hspace{2em}
$
$
\infer=[merge]{ \Box C_1, \ldots, \Box C_m  \to 
}{
\infer[D]{
\Box C_1\to | \cdots | \Box C_m\to | \to 
}{
\infer=[K, 4_l]{
CRITICAL \hspace{1em} PART
}{
\infer[nec_2]{
\Box^* C_1, \ldots, \Box^* C_m \Rightarrow  }
{\Box^* C_1, \ldots, \Box^* C_m \to }
}
}
}
$
\end{center}

The other parts in $P$ consist of
single sequents of $\to$
and  the rules:
the propositional rules,
the internal structural rules,
the $T_1$-rule.
Finally, the obtained proof
is identified with a proof of
the same sequent 
in the standard sequent system 
for the modal system under consideration,
when the critical parts are identified
with the modal rules of 
the following well-known forms.

\begin{center}
$
\infer{ \Box C_1, \ldots, \Box C_m  \to \Box E
}
{\Box^* C_1, \ldots, \Box^* C_m \to E}
\hspace{4em}
$
$
\infer{ \Box C_1, \ldots, \Box C_m  \to 
}
{\Box^* C_1, \ldots, \Box^* C_m \to }
$
\end{center}

{\section{On the cut-elimination for
{\sf S5}}}

In this section, 
we provide an explanation how the standard hypersequent calculus for {\sf S5} emerge,
not using the regular proof.
Also we show that the cut-elimination for
{\sf S5}$_{H}$ is derivable from that 
for the standard hypersequent calculus for {\sf S5} in the literature.

When a modal logic in $\beta$
contains the $T$-axiom,
it is equivalent to {\sf S5}.
The hypersequent calculus {\sf S5}$_H$
is equipped with the $4$- and $T$-rules
in our framework.
Then, for any hypersequent ${\cal L}^{\Rightarrow}|{\cal M}^{\to}$,

\vspace{1em}
\begin{center}
$\vdash_{{\sf S5}_H}{\cal L}^{\Rightarrow}|
{\cal M}^{\to}$ \hspace{1em}{\it iff} 
\hspace{1em}$\vdash_{{\sf S5}_H}{\cal L}^{\Rightarrow}|\cdot({\cal M}^{\Rightarrow})$.
\end{center}

Here $\vdash_{\sf M }^{(cf)}$ denotes the (cut-free) provability 
in {\sf M}.
Thus, roughly speaking, we can view each 
hypersequent occurring in a proof
to have only $\Rightarrow$
and the proof as a whole to be  one
in the standard hypersequent calculus
for {\sf S5} in the literature,
denoted by {\sf S5}$_{std}$.
This is the scene when  {\sf S5}$_{std}$
emerges.
Moreover,
the cut-elimination for {\sf S5}$_H$ is derived
from that for 
 {\sf S5}$_{std}$.
Here we trace the derivation in details.

\vspace{1em}
1. $\vdash_{{\sf S5}_H}{\cal L}^{\Rightarrow}|{\cal M}^{\to}$

2. $\vdash_{{\sf S5}_{std}}{\cal L}^{\Rightarrow}
|\cdot({\cal M}^{\Rightarrow})$

3. $\vdash_{{\sf S5}_{std}}^{cf}
{\cal L}^{\Rightarrow}|\cdot({\cal M}^{\Rightarrow})$

4. $\vdash_{{\sf S5}_{H}}^{cf}
{\cal L}^{\Rightarrow}|\cdot({\cal M}^{\Rightarrow})$

5. $\vdash_{{\sf S5}_H}^{cf}
{\cal L}^{\Rightarrow}|{\cal
M}^{\to}$

\vspace{1em}
We are going to derive 5 from 1.
From 1 to 2.
Suppose 1 holds.
Convert every hypersequent
${\cal P}^{\Rightarrow}|\cdot({\cal Q}^{\to})$ in a proof
 to ${\cal P}^{\Rightarrow}|
\cdot({\cal Q}^{\Rightarrow})$.
Then we can check that
every hypersequent in the proof figure
is an initial sequent or
derivable from the upper hypersequents.
Here, we check the following cases.

\vspace{1em}

$\bullet$ What was $T_1$ is now 
$\Box:l$.

$\bullet$ What was $T_2$ or $5_2$ 
can be deleted,
because the lower and upper 
hypersequents are the same.

$\bullet$ What was $K$  is 
justified in {\sf S5}$_{std}$ as follows.

\begin{center}
$
\infer[ \hspace{2em}
\leftsquigarrow \hspace{2em}]
{{\cal H}|\Box A \Rightarrow | \Gamma \Rightarrow \Delta
 }
{{\cal H}|A, \Gamma \Rightarrow \Delta
}$ 
$
\infer[move]{{\cal H}|\Box A\Rightarrow | \Gamma \Rightarrow \Delta}{
\infer[ew]
{{\cal H}|\hspace{.3em}
\Rightarrow\hspace{.3em}|\Box A, \Gamma 
\Rightarrow \Delta
 }
{
\infer[\Box:l]{
{\cal H}|\Box A, \Gamma \Rightarrow \Delta
}{{\cal H}|A, \Gamma \Rightarrow \Delta}
}
}$
\end{center}

$\bullet$ What was $5_1$ or $4_l$ is 
justified in {\sf S5}$_{std}$ as follows.

\begin{center}
$
\infer[ \hspace{2em} \leftsquigarrow \hspace{2em}]
{{\cal H}|\Box A \Rightarrow | \Gamma \Rightarrow \Delta
 }
{{\cal H}|\Box A, \Gamma \Rightarrow \Delta
}$ 
$
\infer[move]{{\cal H}|\Box A\Rightarrow | \Gamma \Rightarrow \Delta}{
\infer[ew]
{{\cal H}|\hspace{.3em}
\Rightarrow\hspace{.3em}|\Box A, \Gamma 
\Rightarrow \Delta
 }
{
{\cal H}|\Box A, \Gamma \Rightarrow \Delta
}
}$
\end{center}

$\bullet$ What was $nec_1$ is now
a special form of $\Box:r$.

\vspace{1em}

From 2 to 3.
Now we have a proof
of ${\cal L}^{\Rightarrow}|
\cdot({\cal M}^{\Rightarrow})$
in ${\sf S5}_{std}$.
We can apply the cut-elimination for
this calculus to obtain the item 3.

\vspace{1em}
From 3 to 4.
Now there is a cut-free proof 
of 
${\cal L}^{\Rightarrow}|
\cdot({\cal M}^{\Rightarrow})$
in ${\sf S5}_{std}$.
Then by the results of \S 3.1,
each inference rule of ${\sf S5}_{std}$ 
is derivable in ${\sf S5}_{H}$
(without using $cut$).
This concludes the item 4.

\vspace{1em}
From 4 to 5.
Given a cut-free proof of ${\cal L}^{\Rightarrow}|
\cdot({\cal M}^{\Rightarrow})$
in ${\sf S5}_{H}$,
by using the rules of $T_2$ and $split$,
we obtain a cut-free proof of 
${\cal L}^{\Rightarrow}|
{\cal M}^{\rightarrow}$
in {\sf S5}$_H$.

\vspace{1em}
The cut-elimination for {\sf S5}$_H$
is derived in this way.
It is due to the fact 
that
the presence of $4$- and $T$-rules 
enables us to 
view every hypersequent
in a proof has only $\Rightarrow$
and the proof as one
in the standard hypersequent calculus
for {\sf S5}.

This  is contrasted with the previous situation 
on the modal logics in the group $\alpha$.
In the case of  $\alpha$,
whether a logic in $\alpha$ contains $4$- or $T$-rule
or not,
we need to transform a proof  in the logic
into the regular one, which is an 
approximation to a proof in the standard 
sequent calculus for the logics in 
the literature.
On the other hand, 
in the case of the group $\beta$,
that is, in the presence of $5$-rules,
the addition of $4$- and $T$-rules 
makes the system so flexible that the distinction of
the $\to$- and $\Rightarrow$-sequents
in a hypersequent
becomes meaningless.
This shows how 
the standard system for {\sf S5}
emerges
from our system for {\sf K}$_H$
in a different way from the case of $\alpha$.

It is well known that the standard hypersequent calculus for {\sf S5} 
enjoys the cut-elimination
and furthermore, in a previous paper 
\cite{kushida2024},
we showed the cut-elimination for
(a variant of) the standard hypersequent calculus 
for {\sf S5} by what we call the topdown
method, which 
we are concerned with 
in the current paper.
Hence we will skip the case for {\sf S5}
in the proof of cut-elimination of $\beta$
in \S 6.

\section{Cut-Elimination}

In this section,
we prove the cut-elimination theorem for the hypersequent calculus for 
the logics in the groups $\alpha, \beta$.
In terms of modal axioms,
those logics are characterized
to be the ones containing the $5$-axiom
or not containing the $B$-axiom.

Gentzen \cite{gentzen1935, gentzen1969}
introduced the inference rule
called $mix$, in his words,
"in order to facilitate the proof" 
(\cite{gentzen1969}, p. 88).
Therefore,
he must have thought that
there should be a more complicate proof
for the direct elimination of $cut$,
leaving aside from the issue of
whether Gentzen 
actually had such a proof or 
not.
In \cite{kushida2024} we classified the former  
researches on this subject, 
 the direct cut-elimination,
in three methods:
the permutation method,
the formula-reduction method,
the topdown method.
The formula-reduction method,
due to Buss \cite{buss1998},
is considered to be
most facilitating
for the cut-elimination itself.
Therefore, even if Gentzen had had in mind some
method for the direct cut-elimination,
it is unlikely that it had been the 
formula-reduction method,
which is considered much simpler
than even his proof using $mix$.
However, the method does not work for
the modal logics.

The topdown method is valid
for the cut-elimination of the standard 
hypersequent calculi for modal logics.
In \cite{kushida2024}
we showed the method is valid for 
the case of {\sf S5}.
Here we apply the method to other basic modal 
logics.
However, to facilitate the cut-elimination proof,
we also utilize the 
formula-reduction method partially,
that is, for the non-modal part.

Now let us start the direct cut-elimination.
Let $P$ be any proof in  {\sf M}$_H$  
where the only application of $cut$ occurs in the last step of
$P$ as follows.

\begin{center}$
\infer[cut]{{\cal H}|{\cal I}|  \Gamma,  \Pi \gg \Theta, \Delta}{
{\cal H}| \Gamma \gg \Delta, A \hspace{1.5em}
{\cal I}| A,  \Pi \gg \Theta}
$
\end{center}

Here, the $cut$ is modified to have a multiplicative form,
and
the indicated turnstile $\gg$ denotes uniformly either
$\Rightarrow$ or $\rightarrow$.
We call it {\it the turnstile $\gg$ of $P$}.

We are going to eliminate this application of $cut$
without changing the end hypersequent.
Without loss of generality,
we restrict the form of the initial sequent
$A\gg A$ to $p \rightarrow p$
with an atomic $p$.
Any form of $A\gg A$ can be derived from
this restricted form.
For example,
$\Box A\Rightarrow \Box A$ is derivable
from $A\rightarrow A$
as follows.

\begin{center}
$
\infer[nec_2]{\Box A\Rightarrow \Box A}{
\infer[merge]{\Box A\rightarrow \Box A}{
\infer[nec_1]{\Box A\rightarrow |\rightarrow \Box A}{
\infer[K]{\Box A\rightarrow |\Rightarrow  A}{
\infer[nec_2]{ A\Rightarrow  A}
{ A\rightarrow A}
}
}}}
$
\end{center}

The {\it degree} of $P$ is defined to be
the number of occurrences of
the logical symbols and the modality in
the cut formula $A$, denoted as $deg$.
As explained above, 
we apply the topdown method
for the cut-elimination, though
we make use of the formula-reduction method
for non-modal part;
for the cases when $A$ is propositional,
we delete the propositional inference rule
introducing $A$ and decompose $A$.

\begin{lemma}\label{buss}
Without loss of generality,
we can assume that
 the cut formula $A$ is an atomic formula or a modal formula.
\end{lemma}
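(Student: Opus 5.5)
We prove the lemma by induction on $deg$, using Buss's formula-reduction method. Suppose the cut formula $A$ is $\neg B$ or $B\wedge C$. We reduce the given cut to one or two cuts on the immediate subformulas, which have smaller degree; the induction hypothesis then handles them.

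The key step is an inversion lemma for the propositional rules, proved by induction on the height of the proof:
\begin{itemize}
\item If ${\cal H}|\Gamma\gg\Delta, B\wedge C$ is provable, then so are ${\cal H}|\Gamma\gg\Delta, B$ and ${\cal H}|\Gamma\gg\Delta, C$.
\item If ${\cal I}|B\wedge C,\Pi\gg\Theta$ is provable, then so is ${\cal I}|B, C,\Pi\gg\Theta$.
\item If ${\cal H}|\Gamma\gg\Delta,\neg B$ is provable, then so is ${\cal H}|B,\Gamma\gg\Delta$.
\item If ${\cal I}|\neg B,\Pi\gg\Theta$ is provable, then so is ${\cal I}|\Pi\gg\Theta, B$.
\end{itemize}
Each inverted proof has no more cuts than the original, and its height is no greater. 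To prove this, we trace the ancestors of the designated occurrence of $B\wedge C$ (resp. $\neg B$) upward through the proof and change each of them:
\begin{itemize}
\item Where an ancestor is introduced by $\wedge:r$ (or $\neg:r$, $\wedge:l$, $\neg:l$), we replace the introduction by the appropriate premise. For $\wedge:l$, we add the missing conjunct by $iw$.
\item Where it is introduced by $iw$, we weaken with the components instead.
\item Where it is contracted by $ic$, we contract the components instead.
\item Where it passes through $merge$ or $split$, or through $ew$ (which may introduce the whole sequent), we replace it by its components. We apply $iw$ where a whole sequent is introduced.
\end{itemize}
We must check that the ancestors never pass through a rule whose principal or side formula is required to be of a special shape. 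The modal rules $nec_1$, $K$, $T_1$, $4_{l,r}$, $B_1$, $5_1$ have principal formulas of the form $\Box A$, so they only move the compound formula in their context. Formulas in ${\cal H}$ or in the side context $\Gamma$ are untouched. The modal structural rules $nec_2$, $D$, $T_2$, $B_2$, $5_2$, $B_25$ only change turnstiles, so they commute with the replacement. The initial sequents are $p\to p$ and $\bot\gg$, so no compound formula is an axiom formula. This uses the restriction to atomic identity axioms.

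With inversion in hand, the conjunctive cut on $B\wedge C$ is replaced by a cut on $B$ between ${\cal H}|\Gamma\gg\Delta,B$ and ${\cal I}|B,C,\Pi\gg\Theta$. This is followed by a cut on $C$ with ${\cal H}|\Gamma\gg\Delta,C$. Both cuts have lower degree. The result is ${\cal H}|{\cal H}|{\cal I}|\Gamma,\Gamma,\Pi\gg\Delta,\Delta,\Theta$, and contractions give back the original end hypersequent. Internal duplicates are removed by $ic$. For external duplicates of ${\cal H}$ we need external contraction, which follows from $merge$ followed by $ic$ where the sequents are identical. For $\to$-sequents $split$ is also available. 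If that is problematic, we instead carry out the reduction with the shared-context (additive) form of $cut$ from the original definition, which avoids duplicating ${\cal H}$ at all.

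For $\neg B$, we cut ${\cal I}|\Pi\gg\Theta,B$ against ${\cal H}|B,\Gamma\gg\Delta$ on $B$. Again we apply the induction hypothesis on degree, so that in the end every remaining cut has an atomic or modal cut formula. Those cuts are treated by the topdown method.

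The main obstacle is the inversion lemma in the presence of the modal structural rules, especially $B_2$, $5_2$ and $B_25$. These change the turnstile of the sequent in which the compound formula sits. We must make sure that the inverted sequent still satisfies the side conditions, e.g. that ${\cal H}^\Rightarrow$ remains purely $\Rightarrow$. It does, since inversion changes only formulas and not turnstiles. A secondary obstacle is the handling of external contraction noted above.
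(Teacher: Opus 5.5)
Your proposal is correct and follows the paper's own route. Both use Buss-style formula reduction: trace the ancestors of $B\wedge C$ (resp.\ $\neg B$) upward, replace each introduction by its premise, by $iw$, or by $ew$ on the components, then cut on $B$ and on $C$ in turn and recover the end hypersequent with $merge$ and $ic$. One harmless slip: the left $\wedge$-inversion is not height-preserving, since an $iw$ or $ic$ on $B\wedge C$ becomes two such steps (hence the paper's double lines there), but only the cut-freeness of the inverted premises is ever used.
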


{\it Proof.}
It is sufficient to show that $A$ can be reduced
for the propositional cases.
We handle only the case when $A$ is $B\wedge C$.

\begin{center}$
\infer[cut]{{\cal H}|{\cal I}|  \Gamma,  \Pi \gg \Theta, \Delta}
{
{\cal H}| \Gamma \gg \Delta, B\wedge C 
\hspace{1.5em}
{\cal I}| B\wedge C,  \Pi \gg \Theta 
}
$
\end{center}

Consider the applications of $\wedge:l$, $ew$ and $iw$
introducing the right cut formula.
We convert them in the following way.

\begin{center}
$\infer[ew\hspace{3em}  \rhd]{{\cal J}|  B\wedge C, \Sigma \gg \Xi}{
{\cal J}}$    \hspace{3em}
$\infer[ew] {{\cal J}|B, C, \Sigma \gg \Xi}{
{\cal J}}$
\end{center}

\begin{center}
$\infer[iw\hspace{3em}  \rhd]{{\cal J}|  B\wedge C, \Sigma \gg \Xi}{
{\cal J}|  \Sigma \gg \Xi}$    \hspace{3em}
$\infer=[iw] {{\cal J}|B, C, \Sigma \gg \Xi}{
{\cal J} |  \Sigma \gg \Xi}$
\end{center}

\begin{center}
$\infer[\wedge:l\hspace{3em}  \rhd]{{\cal J}|  B\wedge C, \Sigma \gg \Xi}{
{\cal J}| B,  \Sigma \gg \Xi}$    \hspace{3em}
$\infer[iw] {{\cal J}|B, C, \Sigma \gg \Xi}{
{\cal J} | B, \Sigma \gg \Xi}$
\end{center}

\begin{center}
$\infer[\wedge:l\hspace{3em}  \rhd]{{\cal J}|  B\wedge C, \Sigma \gg \Xi}{
{\cal J}| C,  \Sigma \gg \Xi}$    \hspace{3em}
$\infer[iw] {{\cal J}|B, C, \Sigma \gg \Xi}{
{\cal J} | C,  \Sigma \gg \Xi}$
\end{center}

Then, we obtain the hypersequent:
${\cal I}| B, C,  \Pi \gg \Theta $.

On the other hand,
we make a similar conversion
concerning the applications of rules introducing
the left cut formula to obtain proofs of 
the hypersequents:
${\cal H}| \Gamma \gg \Delta, B$
and
${\cal H}| \Gamma \gg \Delta, C$.

Then, we construct the following proof
where two applications of $cut$
have the cut formulas $B$ and $C$.

\begin{center}$
\infer=[merge, ic]{
{\cal H}|{\cal I}|  \Gamma, \Pi \gg \Theta, \Delta}{
\infer[cut]{{\cal H}|{\cal H}|{\cal I}|  \Gamma, \Gamma,  \Pi \gg \Theta, \Delta, \Delta}
{
{\cal H}| \Gamma \gg \Delta, C
\hspace{1.5em}
\infer[cut]{{\cal H}|{\cal I}| C, \Gamma, \Pi \gg \Delta, \Theta}{
{\cal H}| \Gamma \gg \Delta, B &&
{\cal I}| B, C,   \Pi \gg \Theta
}}
}
$
\end{center}

\QED

\vspace{1em}
The method used in this lemma is due to Buss' method in \cite{buss1998}
and we called it the formula-reduction
method in \cite{kushida2024}.
In \cite{buss1998}
the cut-elimination  was proved
for {\sf LK} (the sequent calculus for the classical predicate
logic).
However, it does not work for modal logics;
it is not possible
to do the reduction of Lemma \ref{buss}
for the case when the cut formula is a modal formula
$\Box B$.
The impossibility is due to
 certain constraints on the environment
of the inference rules for modality
in our hypersequent calculus
as well as in other proof systems for modal logic.

The topdown method
works for our hypersequent calculus
as well as for a wide range of sequent calculi
including {\sf LK, LJ} and various extensions
of sequent calculi for modal logics.
In  \cite{kushida2024},
we demonstrated that this method
actually works
for (a variant of) the standard  hypersequent calculus
for {\sf S5}, 
where all cases for the form of  the cut formula
were handled.


We proceed by induction on $deg$.
It is sufficient to handle the two cases:

\vspace{1em}
{\it Case 1}. $A$ is a modal formula: $\Box B$.

{\it Case 2}. $A$ is an atomic formula: $p
$ or $\bot$.

\vspace{1em}

We can suppose that
it is not the case that
the left or right cut formula $A$ is introduced only by
$ew$ and $iw$.
This is because: if it is introduced only by
them, then one of
${\cal H}  | \Gamma\gg \Delta$
and
${\cal I}|\Pi \gg \Theta$
is provable and the $cut$ can be easily  eliminated.

Let $Q$ and $R$ be the subproofs of $P$
ending with the left and right, respectively, upper hypersequens
of the $cut$.

Suppose that there are $n$-many applications in $Q$ of
$p\rightarrow p$ (in Case 2), $nec_1$, $4_r$ (in Case 1),  
$ew$ and $iw$ (in Cases 1, 2) that introduce the cut formula $A$
above the left upper hypersequent of the $cut$.
For each $1\leq i \leq n$,
let $Q_i$ be the sequence of consecutive occurrences of
applications of rules
such that (i) it starts from the
one introducing $A$,
(ii) the lower hypersequent of one application is
the upper hypersequent of the next one 
and (iii) it ends with one whose
lower hypersequent is
the left upper hypersequent of the $cut$.

Also, suppose that there are $m$-many applications of
$p\rightarrow p$ (in Case 2), $K$, $B_1$,
$T_1$ (in Case 1), $ew$ and
$iw$ (in Cases 1, 2) that introduce $A$
above the right upper hypersequent of the $cut$.
For each $1\leq j \leq m$,
$R_j$ is defined
in the subproof $R$
in a similar way to $Q_i$.

\begin{theorem}(Cut-Elimination for $\alpha$)
\label{cut-elim1}
Let {\sf M} be any modal logic in $\alpha$.
In {\sf M}$_H$
provability coincides with cut-free provability.
\end{theorem}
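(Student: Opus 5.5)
The argument is by induction on $deg$, with a subsidiary induction measured by the cut formula's ancestry. As set up above, $P$ ends in a single multiplicative $cut$ on $A$, and by Lemma \ref{buss} we may take $A$ to be atomic ($p$ or $\bot$) or of the form $\Box B$. I would follow the topdown method: instead of permuting the $cut$ upward, we push one premise down along every path $R_j$ in the right subproof $R$, substituting $A$ away.

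Concretely, fix the left subproof $Q$ ending in ${\cal H}|\Gamma\gg\Delta,A$. Along each $R_j$, replace every ancestor occurrence of $A$ on the left of a sequent by the context $\Gamma$, add $\Delta$ on the right, and add ${\cal H}$ as extra components. This is checked to be a correct transformation for every rule not introducing $A$. In particular, $merge$ and $split$ carry the extra material along, and contraction $ic$ on $A$ becomes $ic$ on the copies of $\Gamma,\Delta$ together with external contraction (obtained from $merge$) on the copies of ${\cal H}$. This is exactly why no $mix$ is needed.

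At the tops of the $R_j$, the introduction points are handled as follows:
\begin{itemize}
\item For $ew$ and $iw$, we weaken in $\Gamma,\Delta,{\cal H}$ instead.
\item For an axiom $p\to p$ (Case 2), the new top is $Q$ itself, after $\to/\Rightarrow$ adjustments.
\item For $T_1$, we introduce a cut on $B$ of lower degree against a derivation of ${\cal H}|\Gamma\gg\Delta,B$.
\item For $K$ and $4_l$, the principal occurrence is $\Box B\to$ split off from a $\Rightarrow$-sequent. Here we need to unpack $Q$ symmetrically.
\end{itemize}
So in Case 1, I would first run the dual topdown transformation on $Q$ along the paths $Q_i$. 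In a group-$\alpha$ system, the only non-weakening introductions of $\Box B$ on the right are $nec_1$ (and $4_r$), because there is no $5_2$ or $B_2$. So I would apply Theorem \ref{regularproof} to $Q$ beforehand, so that every $nec_1$ is regular, and then apply Lemma \ref{noT} where relevant, since $T_2$ must not interfere in $\alpha$ systems without $T$. A regular $nec_1$ has premise $\Box^*C_1,\dots,\Box^*C_k\Rightarrow B$ with side components $\Box C_i\to$. The reduced situation is then a cut of the $K$- or $4_l$-premise $B,\Sigma\Rightarrow\Xi$ (respectively $\Box B,\Sigma\Rightarrow\Xi$) against $\Box^*C\Rightarrow B$. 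In the $K$ case this is a cut on $B$ of lower degree, closed by the induction hypothesis. We then reapply $K$ or $4_l$ to the $C_i$ to recover the components $\Box C_i\to$, and finally apply $merge$ and external contraction. In the $4_l$ case (only when $4$ is present) we cut on $\Box B$ against $\Box C\Rightarrow\Box B$, obtained by applying $4_l$ and $nec_2$ as in the proof of Theorem \ref{regularproof}. That cut is of the same degree but has strictly smaller height of the right subproof, so I would use a secondary induction on the number of $K$ and $4_l$ introductions in $R$.

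The remaining work is routine bookkeeping:
\begin{itemize}
\item $T_1$ against regular $nec_1$ uses $T_1$ on the $C_i$ (only in $\sf T$ and $\sf S4$).
\item $D$ does not produce cut formulas.
\item The turnstile $\gg$ of the cut is matched by noting that a $\Rightarrow$-context can be inserted into a $\Rightarrow$-sequent only if all of $\Gamma$ lives under $\Rightarrow$. For this, the substituted $Q$ should be regular, so that $\Gamma$ consists of modal material available under $K$ or $4_l$.
\end{itemize}

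I expect the main obstacle to be the interaction of the topdown substitution with the modal rules whose contexts are restricted, namely $nec_2$ (single-component premise) and $K$ (which splits a component). When an ancestor of $A$ sits in a sequent later subject to $nec_2$, adding ${\cal H}$ as extra components breaks the single-sequent restriction. My plan there is to use regularity: below $nec_2$, all material must pass through a critical part. So we delay inserting ${\cal H}$ until after the corresponding $nec_1$ or $D$ (inserting only $\Gamma,\Delta$, already merged via $merge$) and add ${\cal H}$ by $ew$ at that point. Verifying that this delay is always possible is the delicate point.
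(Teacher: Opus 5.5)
There is a genuine gap, and it is exactly the point you call delicate. Your basic move is to push ${\cal H}|\Gamma\gg\Delta$ down each $R_j$, writing $\Gamma,\Delta$ in place of $A$ and adding the components of ${\cal H}$. That is not a valid transformation once the paths meet the modal structural rules, and by F4--F6 they do. $Q_i$ and $R_j$ may contain a $nec_2$ whenever the cut is on $\Rightarrow$-sequents. In {\sf K4}, {\sf KD4}, {\sf S4} the $R_j$ may contain many $nec_2$'s even for $\to$-cuts, with $\Box B$ entering a $\Rightarrow$-sequent by $nec_2$ and leaving it by $4_l$. This causes three failures:

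\begin{itemize}
\item Extra components violate the one-component premise of $nec_2$.
\item An arbitrary $\Gamma$ sitting where $\Box B$ sat cannot be moved unchanged out of a $\Rightarrow$-sequent, as $4_l$ does with $\Box B$.
\item In Case 2 with a $\Rightarrow$-cut there is a turnstile clash. The top $p\to p$ of $R_j$ is a $\to$-sequent while $Q$ ends in a $\Rightarrow$-sequent, and in {\sf K}, {\sf D}, {\sf K4}, {\sf KD4} no $T_2$ is available for the ``$\to/\Rightarrow$ adjustment''.
\end{itemize}

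Your proposed repair does not close this. You cannot insert $\Gamma,\Delta$ now and add ${\cal H}$ later by $ew$, because what you actually have is ${\cal H}|\Gamma\gg\Delta,A$, in which ${\cal H}$ is a disjunct that cannot be dropped. $merge$ can absorb ${\cal H}$ only for components with the same turnstile as $\gg$, and the absorbed material is then stuck in the $\Rightarrow$-sequent for the reason just given. Regularity does not make $\Gamma$ modal either. It constrains the $nec_1$- and $D$-inferences inside $Q$, not the antecedent of $Q$'s end sequent. (For $\to$-cuts in {\sf K}, {\sf D}, {\sf T}, where F3 and F5 keep $nec_2$ off all paths, your scheme can be made to work.)

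The missing idea, which is what the paper does, is to choose substitutions that add no components at all. In Case 1, take each regular $nec_1$ at the top of $Q_i$, with conclusion $\Box D_i\to|\Box E_i\to|\to\Box B$. The paper replaces $\Box B$ throughout $R$, inside the same sequent, by the boxed formulas $\Box D_i,\Box E_i$:

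\begin{itemize}
\item At $K$-tops your own computation already produces this replacement: a lower-degree cut on $B$, then $merge$.
\item At $T_1$-tops one cuts against $D_i,\Box E_i\Rightarrow B$ (using $T_2$ for $\to$-sequents) and applies $T_1$.
\item $iw$/$ew$ weaken in $\Box D_i,\Box E_i$.
\item A $4_l$ on $\Box B$ becomes $4_l$ on $\Box D_i$ and $\Box E_i$ followed by $merge$.
\end{itemize}

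Since boxed formulas now sit exactly where $\Box B$ sat, every $nec_2$, $K$ and $4_l$ in $R$ stays correct. Neither your same-degree cut on $\Box B$ nor a secondary induction is needed. The result ${\cal I}|\Box D_i,\Box E_i,\Pi\gg\Theta$ is then put at the top of $Q_i$ and $Q$ is re-simulated.

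In Case 2 with a $\Rightarrow$-cut, the paper goes up $Q_1$ to its last $nec_2$, whose premise $\Sigma_1\to\Psi_1,p$ is a single $\to$-sequent. It substitutes that into $R$, which adds no extra components and has the correct turnstile. The resulting ${\cal I}|\Pi,\Sigma_1\Rightarrow\Theta,\Psi_1$ is used in place of that $nec_2$. (The paper states the final plug-in into $Q_i$ in Case 1 only for $\to$-cuts, where F3 excludes $nec_2$ on $Q_i$. The modal cut on $\Rightarrow$-sequents is only sketched there and needs explicit care in any version.)
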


{\it Proof.} Let us begin with {\it Case 1}. 

{\it Case 1}.
When $A$ is a modal formula $\Box B$.
Suppose that $P$ is regular
and $P$ has the following form.

\begin{center}
\shortstack{
$
\infer[nec_1]{\Box D_i \to | \Box E_i \to | \to \Box B}{
\infer=[K, 4_l]{\Box D_i\to | \Box E_i \to | \Rightarrow B}{
D_i, \Box E_i  \Rightarrow  B}
}
$
\\ \hspace{1em}
$Q_i$ \hspace{1em}
\vdots\hspace{4em} \\ \vspace{.5em} \\
$
$
}
\hspace{1em}
\shortstack{
$\infer[K]{
{\cal J}_j | \Box B\to | \Sigma_j\Rightarrow \Psi_j}{
{\cal J}_j | B, \Sigma_j\Rightarrow \Psi_j}$\\
$R_j$ \hspace{1em}
\vdots\hspace{3em} \\ \vspace{.5em} \\
$
$
}

\hspace{2em}$
\infer[cut]{{\cal H}|{\cal I}|
 \Gamma,
\Pi \gg \Delta, \Theta}{
{\cal H}  | \Gamma\gg  \Delta, \Box B
\hspace{4em}
{\cal I}|\Box B,  \Pi \gg  \Theta
}
$
\end{center}

Here $Q_{i}$ starts with $nec_1$,
which
may be by Theorem \ref{regularproof}
supposed to be regular where the environment is simplified
with $n=m=1$ in Definition \ref{regular}.

First we note the following facts.
   
   \vspace{1em}
{\bf F3}. When the turnstile of $P$
is $\to$,
every sequent having the cut formula $\Box B$
in each $Q_i$ must have the turnstile $\to$.
(If a $\to$-sequent having  $\Box B$ is changed to have
$\Rightarrow$ by $nec_2$,
it must turn to have $\to$ again by $nec_1$,
but it is impossible as $\Box B$ would remain 
in all those sequnets.)
Thus, there is no application of $nec_2$
in any $Q_i$.

\vspace{1em}
{\bf F4}.  When the turnstile of $P$
is $\Rightarrow$, there are two possibilities 
on a $Q_i$:
(i) $Q_i$ starts with  $nec_1$,
or $ew$ or $iw$ whose lower sequent 
has $\to$ which becomes $\Rightarrow$
by $nec_2$ in the course of $Q_i$
or
(ii) $Q_i$ starts with  $ew$ or $iw$ whose
 lower sequent has $\Rightarrow$.
 In any case, the introduced turnstile $\Rightarrow$
never changes to $\to$ anymore,
as it cannot go back to $\to$ 
for the same reason as {\bf F3}.
Thus, there is at most one application of $nec_2$ in any $Q_i$.
(Recall that we assume that there is no application of $4_r$ (by regularity) nor $T_2$
(by Lemma \ref{noT}) in $P$.)

\vspace{1em}
{\bf F5}.  If the system {\sf M} does not contain $4_l$-rule,
the same argument  as above holds for $R$.
That is,
 when the turnstile of $P$
is $\to$, there is no $nec_2$ in any $R_j$;
when the turnstile of $P$
is $\Rightarrow$, there is at most one 
application of
$nec_2$-rule in any $R_j$.

\vspace{1em}
{\bf F6}.  If the system {\sf M} has 
$4_l$-rule,
whether the turnstile of $P$
is $\to$ or $\Rightarrow$,
there can be $l$-many applications of
$nec_2$-rule in any $R_j$. ($l$ is any natural number.)
This is because the cut formula $\Box B$ in
a $\Rightarrow$-sequent $\Box B, \Xi \Rightarrow \Sigma$
in an $R_j$
can go outside the sequent by $4_l$-rule and
can be in a $\to$-sequent again.

\begin{center}
$
\infer[4_l{\rm -rule}]
 { {\cal L}|\Box B\to |  \Xi \Rightarrow  \Sigma}{
 {\cal L}|\Box B, \Xi \Rightarrow \Sigma
 }
 $
\end{center}

\normalsize

Now take $Q_1$.
We distinguish {\it Subcases a, b}
as follows.

{\it Subcase a.}
When
 $Q_1$ 
starts with
$iw$ or $ew$,
execute the following replacement.

\begin{center}$
\infer[\hspace{2em} \rhd\hspace{2em}]{
{\cal K}_1 | \Phi_1  \gg  \Xi_1,  \Box B
}{
{\cal K}_1 | \Phi_1 \gg \Xi_1
}
\infer={
{\cal I}|{\cal K}_1 |  \Phi_1, \Pi  \gg \Xi_1,
 \Theta
}{
{\cal K}_1 | \Phi_1  \gg \Xi_1
}
$
\end{center}

\begin{center}$\infer[\hspace{2em} \rhd\hspace{2em}]{
{\cal K}_1 | \Phi_1  \gg \Xi_1, \Box B
}{
{\cal K}_1
}
 \infer={
{\cal I}|{\cal K}_1 |  \Phi_1, \Pi  \gg \Xi_1,
 \Theta
}{
{\cal K}_1
}
 $
\end{center}

{\it Subcase b.}
When
 $Q_1$  
starts with $nec_1$.
We are going to obtain
${\cal I}| \Box D_1, \Box E_1, \Pi   \Rightarrow   \Theta$,
using the subproof $R$.
For each $R_j$,
we make a replacement 
at the top of $R_j$ as follows.


\begin{center}
$\infer[K\hspace{2em} \rhd\hspace{2em}]
{
{\cal J}_j | \Box B\to |  \Sigma_j\Rightarrow \Psi_j}{
{\cal J}_j | B, \Sigma_j\Rightarrow \Psi_j} $ 
$
\infer{{\cal J}_j |\Box D_1, \Box E_1 \to | \Sigma_j\Rightarrow \Psi_j}{
\infer{
{\cal J}_j |\Box D_1\to | \Box E_1 \to | \Sigma_j\Rightarrow \Psi_j}{
\Box D_1\to | \Box E_1 \to | \Rightarrow B &
{\cal J}_j | B, \Sigma_j\Rightarrow \Psi_j }
}
$

\end{center}

\begin{center}
$\infer[T_1\hspace{2em} \rhd\hspace{2em}]
{
{\cal J}_j | \Box B,  \Sigma_j\Rightarrow \Psi_j}{
{\cal J}_j | B, \Sigma_j\Rightarrow \Psi_j} $ 
$
\infer{{\cal J}_j |\Box D_1,  \Box E_1,  \Sigma_j\Rightarrow \Psi_j}{
\infer{
{\cal J}_j |D_1,  \Box E_1,  \Sigma_j\Rightarrow \Psi_j}{
D_1,  \Box E_1 \Rightarrow B &
{\cal J}_j | B, \Sigma_j\Rightarrow \Psi_j }
}
$

\end{center}

\vspace{.5em}

\begin{center}
$\infer[T_1\hspace{2em} \rhd\hspace{2em}]
{
{\cal J}_j | \Box B,  \Sigma_j\to \Psi_j}{
{\cal J}_j | B, \Sigma_j\to \Psi_j} $ 
$
\infer{{\cal J}_j |\Box D_1,  \Box E_1,  \Sigma_j\rightarrow \Psi_j}{
\infer{
{\cal J}_j |D_1,  \Box E_1,  \Sigma_j\rightarrow \Psi_j}{
\infer{D_1,  \Box E_1 \to B }{
D_1,  \Box E_1 \Rightarrow B }
&
{\cal J}_j | B, \Sigma_j\to \Psi_j }
}
$

\end{center}

Here we newly use the $T_2$-rule.

\begin{center}
 $
\infer[iw \hspace{1em} \rhd \hspace{.1em}]{ {\cal J}_j |
\Box B, \Sigma_j \gg  \Psi_j}
{
{\cal J}_j |  \Sigma_j \gg  \Psi_j
}
$
  \hspace{.7em}
$
\infer={
 {\cal J}_j | \Box D_1, \Box  E_1, \Sigma_j \gg
 \Psi_j}
{
{\cal J}_j |  \Sigma_j \gg \Psi_j
}
$
\end{center}

\begin{center}
$ 
\infer[ew\hspace{1em} \rhd\hspace{1em}]{ {\cal J}_j |
\Box B, \Sigma_j \gg  \Psi_j}
{
{\cal J}_j
}
 \hspace{.7em}
\infer={  {\cal J}_j | \Box D_1, \Box E_1, \Sigma_j \gg
 \Psi_j}
{
{\cal J}_j
}
$
\end{center}

Thus,
in $Subcase$ $b$,
for each $R_j$,
we obtain a proof of
the hypersequent
obtained from the lower sequent of the application
introducing $\Box B$
by replacing $\Box B$ with $\Box D_1, \Box E_1$.

Then, we simulate each $R_j$
to obtain
the hypersequent ${\cal I} | \Box D_1, \Box E_1, \Pi
\gg   \Theta$.

\begin{center}
\shortstack{
${\cal J}_j | \Box D_1, \Box E_1, \Sigma_j \gg
\Psi_j$  \vspace{.5em} \\
$R_j$ \hspace{8em}  \\
\deduc 
\vspace{1em} \\
${\cal I} | \Box D_1, \Box E_1, \Pi
\gg
   \Theta$
}
\end{center}

Here,  in this simulation, we note the following:
if there are applications of 
$4_l$-rule 
concerning $\Box B$ in the subproof $R$,
we also need to make the following natural replacement.

\begin{center}
$ 
\infer[4_l\hspace{1em} \rhd\hspace{1em}]{ {\cal L} |
\Box B \to |  \Phi \Rightarrow  \Upsilon}
{
 {\cal L} |
\Box B, \Phi \Rightarrow  \Upsilon
}
 \hspace{.7em}
\infer{{\cal L} |
\Box D_1, \Box E_1 \to |  \Phi \Rightarrow  \Upsilon}{
\infer=[4_l]{ {\cal L} |
\Box D_1 \to |  \Box E_1 \to |  \Phi \Rightarrow  \Upsilon}
{
 {\cal L} |
\Box D_1, \Box E_1, \Phi \Rightarrow  \Upsilon
}
}
$
\end{center}

We illustrate  a simple case of {\it Subcase b}
with $m=2$;
$R_{1}$ and $R_{2}$ correspond to the above  first and fourth
cases in {\it Subcase b},
respectively.

\begin{center}
\shortstack{
$\infer{
{\cal J}_1 | \Box B\to |  \Sigma_1\Rightarrow \Psi_1}{
{\cal J}_1 | B, \Sigma_1\Rightarrow \Psi_1} $  \hspace{1em} 
\shortstack{
 $
\infer{ {\cal J}_2 |
\Box B, \Sigma_2  \gg  \Psi_2}
{
{\cal J}_2 
}$}
 \vspace{1em}
\\
 \vspace{.5em}
 \hspace{1.5em} \deduc\hspace{3em} \\ \vspace{.5em}
\\ $
 \hspace{-1em}
 {\cal I}|\Box B,  \Pi
 \gg
 \Theta
$
}

\vspace{2em}
$\bigtriangledown$
\vspace{2em}

\shortstack{
$\infer{{\cal J}_1 |\Box D_1, \Box E_1 \to | \Sigma_1\Rightarrow \Psi_1}{
\infer{
{\cal J}_1 |\Box D_1\to | \Box E_1 \to | \Sigma_1\Rightarrow \Psi_1}{
\Box D_1\to | \Box E_1 \to | \Rightarrow B &
{\cal J}_1 | B, \Sigma_1\Rightarrow \Psi_1 }
}$  \hspace{1em} 
\shortstack{
 $
\infer={  {\cal J}_2 |\Box D_1, \Box E_1, \Sigma_2
  \gg
\Psi_2}
{
{\cal J}_2
}
$}
 \vspace{1em}
\\
 \vspace{.5em}
 \hspace{1.5em} \deduc\hspace{3em} \\ \vspace{.5em}
\\ $
 \hspace{-1em}
 {\cal I} |\Box D_1, \Box E_1, \Pi
 \gg
    \Theta
$
}
\end{center}

\normalsize

Now, we repeat this procedure
for $Q_2, \ldots, 
Q_n$
to obtain the subproofs of
${\cal I}|{\cal K}_i |  \Phi_i, \Pi   
\gg \Xi_i,
 \Theta$, where,
  if $Q_i$ starts with $nec_1$,
  ${\cal I}|{\cal K}_i |  \Phi_i, \Pi   \gg
 \Xi_i,
 \Theta
= {\cal I}| \Box D_i, \Box E_i, \Pi  \gg \Theta$.

Then, we can simulate $Q_1, Q_2, \ldots, Q_n$
to obtain the original end hypersequent.
We illustrate a simple case $n=2$;
$Q_1$ and $Q_2$ correspond to
{\it Subcase a} and {\it Subcase b}, 
respectively.

\begin{center}
\shortstack{
$
\hspace{-5em}
{\cal K}_1 | \Phi_1
\gg   \Xi_1, \Box B
\hspace{1.5em}
\Box D, \Box E \to \Box B
 $  \vspace{.5em} \\
\hspace{5.3em}
$Q_1$
\hspace{8em}
$Q_2$
\hspace{14em}
\\ \vspace{1em} \\
\hspace{-7em} \deduc 
\vspace{1em} \\
$\hspace{-6em}
 {\cal H} | \Gamma
\gg\Delta, \Box B
$
}

\vspace{2em}
$\bigtriangledown$
\vspace{3em}

\shortstack{
$\hspace{-4em}
{\cal I} | {\cal K}_1 | \Phi_1, \Pi
\gg  \Xi_1, \Theta
\hspace{1.5em}
{\cal I} |
\Box D_2, \Box E_2, \Pi
\to \Theta
$  \vspace{.5em} \\
\hspace{2.3em} $Q_1$ \hspace{8em}
$Q_2$
\hspace{14em}
\\ \vspace{1em} \\
\hspace{-5em}\deduc 
\vspace{1em} \\
$\hspace{-5em}
{\cal H} | {\cal I} | \Gamma, \Pi
 \gg
   \Delta, \Theta$
}
\end{center}


{\it Case 2}. When $A$ is an atomic formula $p$.
The other case when $A=\bot$
is similarly treated and omitted.
Suppose that $P$ has the following form.

\begin{center}
\shortstack{
$p\rightarrow p $
\\ \hspace{1em}
$Q_i$ \hspace{1em}
\vdots\hspace{4em} \\ \vspace{.5em} \\
$
$
}
\hspace{1em}
\shortstack{
$p\rightarrow p$\\
$R_j$ \hspace{1em}
\vdots\hspace{3em} \\ \vspace{.5em} \\
$
$
}

\hspace{2em}$
\infer[cut]{{\cal H}|{\cal I}| \Gamma,
\Pi \gg \Delta, \Theta}{
{\cal H}  | \Gamma\gg \Delta, p
\hspace{4em}
{\cal I}|p,  \Pi \gg \Theta
}
$
\end{center}

\vspace{1em}

By similar arguments to {\bf F3-4},
when the turnstile of $P$ is $\to$,
there is no $nec_2$ in any $Q_i$
and (as the cut formula is now $p$
as opposed to {\it Case 1})
there is no $nec_2$ in any $R_j$.
When the turnstile of $P$ is $\Rightarrow$,
there is at most one application of
$nec_2$ in any $Q_i$ 
and (as the cut formula is now $p$
as opposed to {\it Case 1})
in any $R_j$.

\vspace{1em}

{\it Case 2.1.} 
The turnstile of $P$ is $\to$.
For each $R_j$,
we make the following replacement
at the top of $R_j$.

\begin{center}$
p \to p
\hspace{2em} \rhd$\hspace{2em}
 \shortstack{\deduc \\
 ${\cal H}|
\Gamma \to  
\Delta, p
$}
\end{center}

\begin{center}$
\infer[\hspace{2em} \rhd\hspace{2em}]{
{\cal L}_j |p, \Upsilon_j  \to  \Omega_j
}{
{\cal L}_j | \Upsilon_j \to  \Omega_j
}
\infer={
  {\cal L}_j | {\cal H}|
\Gamma, \Upsilon_j  \to  
\Delta, \Omega_j
}{
{\cal L}_j | \Upsilon_j \to  \Omega_j
}
$
\end{center}

\begin{center}$\infer[\hspace{2em} \rhd\hspace{2em}]{
{\cal L}_j | p, \Phi_j  \to \Xi_j
}{
{\cal L}_j
}
 \infer={
 {\cal L}_j | {\cal H}|
\Gamma, \Upsilon_j  \to  
\Delta, \Omega_j
}{
{\cal L}_j
}
 $
\end{center}

Then we simulate the subproof $R$
to obtain the $ {\cal H}|{\cal I}|
\Gamma, \Pi  \to  
\Delta, \Theta$

\vspace{1em}
{\it Case 2.2.} 
The turnstile of $P$ is $\Rightarrow$.
Take $Q_1$. 
We distinguish {\it Subcases i, j}.

\vspace{1em}
{\it Subcase i.} 
If $Q_1$ starts with $ew$ or $iw$
whose lower sequent has $\Rightarrow$,
execute the following replacement.

\begin{center}$
\infer[\hspace{2em} \rhd\hspace{2em}]{
{\cal K}_1 |\Phi_1  \Rightarrow  \Xi_1, p
}{
{\cal K}_1 | \Phi_1  \Rightarrow  \Xi_1
}
\infer={
{\cal K}_1 | {\cal I}|
 \Phi_1, \Pi  \Rightarrow  \Xi_1, \Theta
}{
{\cal K}_1 | \Phi_1  \Rightarrow  \Xi_1
}
$
\end{center}

\begin{center}$
\infer[\hspace{2em} \rhd\hspace{2em}]{
{\cal K}_1 |\Phi_1  \Rightarrow  \Xi_1, p
}{
{\cal K}_1 
}
\infer={
{\cal K}_1 | {\cal I}|
 \Phi_1, \Pi  \Rightarrow  \Xi_1, \Theta
}{
{\cal K}_1 | \Phi_1  \Rightarrow \Xi_1
}
$
\end{center}

{\it Subcase j.} 
If $Q_1$ starts with the initial sequent,
or $ew$ or $iw$ whose lower sequent has $\rightarrow$,
take the last applications of $nec_2$
in $Q_1$.

\begin{center}$
\infer[nec_2]{
\Sigma_1  \Rightarrow  \Psi_1, p
}{
\Sigma_1  \rightarrow  \Psi_1, p
}
$
\end{center}

Then we make use of the subproof $R$
to obtain 
${\cal I}|\Pi, \Sigma_1  \Rightarrow  
\Theta, \Psi_1$.
For each $R_j$,
we make the following replacement
at the top of $R_j$.

\begin{center}$
p \to p
\hspace{2em} \rhd\hspace{2em}
 \shortstack{\deduc \\$
\Sigma_1  \rightarrow  
 \Psi_1, p$}
$
\end{center}

\begin{center}$
\infer[\hspace{2em} \rhd\hspace{2em}]{
{\cal L}_j |p, \Upsilon_j  \gg  \Omega_j
}{
{\cal L}_j | \Upsilon_j \gg  \Omega_j
}
\infer={
  {\cal L}_j | 
\Sigma_1, \Upsilon_j  \gg 
 \Psi_1, \Omega_j
}{
{\cal L}_j | \Upsilon_j \gg \Omega_j
}
$
\end{center}

\begin{center}$\infer[\hspace{2em} \rhd\hspace{2em}]{
{\cal L}_j | p, \Phi_j  \gg \Xi_j
}{
{\cal L}_j
}
 \infer={
  {\cal L}_j | 
\Sigma_1, \Upsilon_j  \gg  
 \Psi_1, \Omega_j
}{
{\cal L}_j
}
 $
\end{center}

Then we simulate the subproof $R$
to obtain the ${\cal I}|\Pi, \Sigma_1  \Rightarrow  
\Theta, \Psi_1$.
Now, replace that application of
$nec_2$ in $Q_1$
with the result of the simulation.

\begin{center}$
\infer[\hspace{2em}\rhd\hspace{2em} 
{\cal I}|\Pi, \Sigma_1  \Rightarrow  
\Theta, \Psi_1]{
\Sigma_1  \Rightarrow  \Psi_1, p
}{
\Sigma_1  \rightarrow  \Psi_1, p
}
$
\end{center}

We apply this proof-transformation
to $Q_2, \ldots, Q_n$.
Finally, we can simulate the subproof $Q$
(starting from the replaced places)
to obtain 
${\cal H}|{\cal I}|\Gamma, \Pi \Rightarrow 
\Delta, \Theta$.
\QED

\vspace{1em}
Before proceeding further,
the following fact will be quite useful in
the cut-elimination for the $\beta$-group.

\vspace{1em}

{\bf F7}.
Consider a proof fragment starting from
${\cal L}_1, 
\ldots, {\cal L}_n$
and ending with ${\cal M}$.
\footnote{By proof fragment we mean
a pseudo-subproof such that  top leaves are not necessarily
the initial sequents.}
Then, (1) if the system contains $5_2$-
or $B_25$-rule,
the addition of a sequent $T^{\Rightarrow}$
to each hypersequent of the fragment
preserves  each inference rule in the fragment.

\begin{center}
\hspace{-5em}
\shortstack{
 ${\cal L}_1
\hspace{1em} \cdots
\hspace{1em}
{\cal L}_n
$\vspace{.5em}
\\ \vspace{.5em}
\hspace{9em} \deduc  \hspace{5em} $\rhd$
\hspace{3em}
\\
${\cal M}
$
}
\shortstack{
 ${\cal L}_1|T^{\Rightarrow}
\hspace{1em} \cdots
\hspace{1em}
{\cal L}_n|T^{\Rightarrow}
$\vspace{.5em}
\\ \vspace{.5em}
\deduc
\\
${\cal M}|T^{\Rightarrow}
$
}
\end{center}

Also, (2) if there is no application of $5$-, $B_25$- nor $nec_2$-
rule in the fragment,
the addition of a sequent $T^{\rightarrow}$
with $\rightarrow$
to each ${\cal L}_i$ preserves
each inference rule in the fragment.

\begin{center}
\hspace{-5em}
\shortstack{
 ${\cal L}_1
\hspace{1em} \cdots
\hspace{1em}
{\cal L}_n
$\vspace{.5em}
\\ \vspace{.5em}
\hspace{9em} \deduc  \hspace{5em} $\rhd$
\hspace{3em}
\\
${\cal M}
$
}
\shortstack{
 ${\cal L}_1|T^{\rightarrow}
\hspace{1em} \cdots
\hspace{1em}
{\cal L}_n|T^{\rightarrow}
$\vspace{.5em}
\\ \vspace{.5em}
\deduc
\\
${\cal M}|T^{\rightarrow}
$
}
\end{center}

We use this {\bf F7} frequently in the below 
proof-transformation
without explicitly mentioning  it.
Furthermore,
we prove the following lemma.

\begin{lemma} \label{delete5}
Suppose that {\sf M}$_H$ contains the $5$- and 
$4$-rules.
Then for any proof $P$ in {\sf M}$_H$,
the applications of $5_2$-rule
 in $P$ can be so restricted that the upper sequents of them
 are the initial sequents
 without changing the end hypersequent of $P$
  as follows.

\begin{center}
$
\infer[5_2]{A\Rightarrow A}{A\to A} \hspace{5em}
\infer[5_2]{\bot \Rightarrow }{\bot \to }
$
\end{center}

\end{lemma}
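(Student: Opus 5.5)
We push every application of $5_2$ upward, one at a time, until it sits directly on an initial sequent. The argument is by induction on the height of the subproof above an uppermost application $I$ of $5_2$, in the spirit of the permutation arguments in the proof of Theorem \ref{regularproof} and Lemma \ref{noT}.

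Let
\[
\infer[I]{{\cal H}^\Rightarrow | S^\Rightarrow}{{\cal H}^\Rightarrow | S^\rightarrow}
\]
be an uppermost $5_2$. We case on the last rule $J$ above $I$, i.e.\ the rule that produces ${\cal H}^\Rightarrow|S^\rightarrow$.

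\begin{itemize}
\item \emph{$J$ acts only inside ${\cal H}^\Rightarrow$.} Its premisses still have $S^\rightarrow$ together with $\Rightarrow$-sequents only, except in the $K$, $4_l$ and $nec_1$ cases handled below. We apply $5_2$ to each premiss and then $J$.
\item \emph{$J$ is a propositional or internal structural rule acting on $S$.} These rules work uniformly for both turnstiles, so we permute $5_2$ above $J$. For two-premiss rules such as $\wedge{:}r$ and $cut$, $5_2$ is duplicated, one copy per premiss. Each copy has smaller height, so the induction hypothesis applies.
\item \emph{$J$ is $ew$ introducing $S$.} We replace it by $ew$ introducing $S^\Rightarrow$.
\item \emph{$J$ is $merge$ producing $S^\rightarrow$ from two $\rightarrow$-sequents.} We apply $5_2$ twice and then $merge$ for $\Rightarrow$.
\item \emph{$J$ is $split$ producing $S^\rightarrow$ together with another $\to$-sequent.} This cannot occur, since ${\cal H}$ contains only $\Rightarrow$-sequents.
\end{itemize}

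The genuinely modal cases are as follows.

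\begin{itemize}
\item \emph{$J$ is $5_1$ with lower sequent $\Box A\Rightarrow$ in ${\cal H}$ and $\Gamma\to\Delta$ equal to $S$.} The premiss is ${\cal H}'|\Box A,\Gamma\to\Delta$. We apply the induction hypothesis to get ${\cal H}'|\Box A,\Gamma\Rightarrow\Delta$, and then use $4_l$ followed by $5_2$ on the singleton $\Box A\to$. This last $5_2$ needs its own treatment: $\Box A\to$ can in turn be unwound to an initial sequent, or we can rederive $\Box A\Rightarrow$ through the derivation of $\Box A\Rightarrow\Box A$ given before Lemma \ref{buss}, cut-free.
\item \emph{$J$ is $nec_1$ producing $S=\;\to\Box A$ from ${\cal H}|\Rightarrow A$.} We use $4_r$ to get ${\cal H}|\Rightarrow\Box A$ directly, with no $5_2$ at all.
\item \emph{$J$ is $K$ or $4_l$ producing $S=\Box A\to$ next to a $\Rightarrow$-sequent $\Gamma\Rightarrow\Delta$.} We need $\Box A\Rightarrow|\Gamma\Rightarrow\Delta$. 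From the premiss ${\cal H}|A,\Gamma\Rightarrow\Delta$, respectively ${\cal H}|\Box A,\Gamma\Rightarrow\Delta$, we obtain the conclusion using $4_l$ and the $5_1$-type shift. For $K$, we first move from $A$ to $\Box A$ by $K$ followed by $merge$-style manipulations. This leaves only $5_2$ on the one-formula sequent $\Box A\to$ or on an initial sequent.
\item \emph{$J$ is $B_1$ or $B_25$.} These are handled like $5_1$, using F7(1) to carry the extra $\Rightarrow$-sequent.
\end{itemize}

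Each step strictly lowers the height of the subproof above every remaining non-restricted $5_2$. Iterating from the top of the proof therefore yields a proof in which every $5_2$ has an initial sequent, $A\to A$ or $\bot\to$, as its premiss, with the end hypersequent unchanged.

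The main obstacle is the case where $S$ was created by $K$, $4_l$ or $5_1$, so that $\Box A\to$ must become $\Box A\Rightarrow$. I would first try to show that $\Box A\Rightarrow\,|\,\cal G$ is derivable from $\Box A\to\,|\,\cal G$ when ${\cal G}$ consists of $\Rightarrow$-sequents. The route is the restricted instance $A\Rightarrow A$ from $A\to A$ together with $4_l$ and $merge$. The delicate point is getting the right combination of $4$-rules here without reintroducing an unrestricted $5_2$, and that is where I expect the real work to lie.
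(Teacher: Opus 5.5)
\paragraph{Review.} There is a genuine gap, and it is structural. The one-at-a-time strategy breaks at $merge$. The premise ${\cal H}^\Rightarrow|\Gamma\to\Delta|\Theta\to\Pi$ has two $\to$-components, and $5_2$ requires every other component to be a $\Rightarrow$-sequent, so it applies to neither of them. ``Apply $5_2$ twice'' is therefore not available, and no single $5_2$ can be pushed past a merge of two $\to$-components. Your height induction also fails to close in the $5_1$ case: the fresh $5_2$ on $\Box A\to$ sits on a transformed subproof that need not be smaller.

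The paper never moves $5_2$. It fixes an uppermost $5_2$ and traces all ancestors $\to^\tau$ of the turnstile of its upper sequent, assuming, as it argues, that no $split$ acts on them. It then rewrites all of them to $\Rightarrow$ at once. After this, $merge$ and the propositional and internal structural steps remain valid when read with $\Rightarrow$. Only the places where a $\to^\tau$ is created need repair:
\begin{itemize}
\item $nec_1$ becomes $4_r$, as in your proposal;
\item $ew$ introduces $S^\Rightarrow$;
\item an initial $B\to B$ receives the restricted $5_2$;
\item $K$ and $4_l$ are followed by $5_1$;
\item a $5_1$ whose upper sequent is traced becomes $4_l$ followed by $5_1$.
\end{itemize}

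The obstacle you flag at the end is real, and the paper's proof does not actually remove it either. Its $K$/$4_l$ step applies $5_1$ to a lone $\Box C\to$ and writes the result as ${\cal I}|\Box C\Rightarrow|\Xi\Rightarrow\Psi$. Read literally, $5_1$ with empty $\Gamma\to\Delta$ yields ${\cal I}|\Box C\Rightarrow|\to|\Xi\Rightarrow\Psi$. In the all-$\Rightarrow$ context created by the conversion, the paper's step amounts to a $5_2$ on the one-formula sequent $\Box C\to$, which is exactly where your argument ended up.

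Your hoped-for sub-lemma cannot be proved from the literal rules. The hypersequent $\Box p\Rightarrow|\Rightarrow p$ is provable in ${\sf K45}_H$ by $K$ on $p\Rightarrow p$ followed by one $5_2$. Now restrict $5_2$ to initial sequents. The only rules whose conclusion has at least two components, all $\Rightarrow$-sequents, are then the propositional and internal structural rules (including $cut$), $ew$, $merge$, $4_r$, and, in ${\sf S5}_H$, $T_1$. Each premise of such a rule is again of that shape or a single sequent. So by induction, any provable hypersequent of that shape has a component provable on its own. For two-premise rules this holds because either a context component or both active components are provable. But neither $\Box p\Rightarrow$ nor $\Rightarrow p$ is provable, since $\Box\neg\Box p$ and $\Box p$ are not theorems of {\sf K45}, {\sf KD45} or {\sf S5}.

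What can actually be established is a weaker normal form. In it, $5_2$ is applied only to initial sequents or to a one-formula $\Box C\to$ just produced by $K$ or $4_l$; equivalently, $5_1$ is allowed with an empty side sequent. You should aim for that form, reached through the simultaneous conversion above rather than by local permutation.
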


{\it Proof.}
Consider an uppermost application of $5_2$ in $P$.
It is sufficient to 
delete the application of $5_2$
without changing the lower sequent of the $5_2$.

There can be applications of
$nec_1$, $ew$, $K$ and $4_l$ 
(as well as the initial sequent)
introducing ancestors of the turnstile
of the upper sequent of the $5_2$ in $P$
as depicted below.

\begin{center}

\vspace{1em}
\footnotesize
\shortstack{\vspace{1em}
$\infer[nec_1]
{{\cal K} |\to^\tau \Box A }
{{\cal K} |\Rightarrow A }$ $\hspace{1em}
\infer[ew\hspace{1.3em}B\to B]
{{\cal H} |S^{\to^\tau}}
{{\cal H} }$
 $\hspace{1em}
\infer[K]
{{\cal I} |\Box C \to^\tau | \Xi \Rightarrow \Psi}
{{\cal I}| C, \Xi \Rightarrow \Psi }$
  $\hspace{1em}
\infer[4_l]
{{\cal J} |\Box D\to^\tau | \Sigma \Rightarrow \Pi}
{{\cal J}|\Box D, \Sigma \Rightarrow \Pi }$ $\hspace{1em}$
\\
\vspace{1.5em}
\hspace{3em}\deduc \hspace{3em} \deduc\hspace{7.3em}
\\ \vspace{1em}
\hspace{-4em}\deduc
\\ \vspace{1em}
\hspace{-2.4em}$\infer[5_2]{
{\cal M}^\Rightarrow |
T^ {\Rightarrow}
}{
{\cal M}^\Rightarrow |
T^{\rightarrow^\tau}}
$}
\end{center}

Here,
the ancestors of the turnstile $\to$ of the 
upper sequent of the $5_2$.
are denoted by  $\to^\tau$. 
We may assume that there is no $split$-rule
concerning those ancestors,
because there is only one sequent of $\to$
in the upper hypersequent of the $5_2$.

We convert $\to^\tau$ to  $\Rightarrow$
as the below figure shows.



\footnotesize
\begin{center}

\shortstack{\vspace{1em}
$\infer[4_r]
{{\cal K} |\Rightarrow \Box A }
{{\cal K} |\Rightarrow A }$ $\hspace{1em}
\infer[ew]
{{\cal H} |S^{\Rightarrow}}
{{\cal H} }$
\hspace{1em}
$\infer[5_2]{
B\Rightarrow B}{
B\to B}
 $
 $\hspace{1em}
\infer[5_1]{
{\cal I} |\Box C \Rightarrow | \Xi \Rightarrow \Psi}{
\infer[K]
{{\cal I} |\Box C \to | \Xi \Rightarrow \Psi}
{{\cal I}| C, \Xi \Rightarrow \Psi }
}$
$\hspace{1em}
\infer[5_1]{
{\cal J} |\Box D\Rightarrow | \Sigma \Rightarrow \Pi}{
\infer[4_l]
{{\cal J} |\Box D\to | \Sigma \Rightarrow \Pi}
{{\cal J}|\Box D, \Sigma \Rightarrow \Pi }}$ $\hspace{1em}$
\\
\vspace{1.5em}
\hspace{3em}\deduc \hspace{3em} \deduc\hspace{7.3em}
\\ \vspace{1em}
\hspace{-4em}\deduc
\\ \vspace{1em}
\hspace{-2.4em}
$
{\cal M}^\Rightarrow |
T^ \Rightarrow
$}

\end{center}

\normalsize
Here we need to modify applications of $5_1$-rule
whose upper sequent has $\to^\tau$ as follows.

\begin{center}$
\infer[5_1\hspace{2em} \rhd\hspace{2em} ]{
{\cal L}|\Box E \Rightarrow| \Omega \to^\tau
\Phi
}{
{\cal L}|\Box E,  \Omega \to^\tau
\Phi
}
\infer[5_1]{
{\cal L}|\Box E \Rightarrow| \Omega \Rightarrow
\Phi
}{
\infer[4_l]{
{\cal L}|\Box E \rightarrow| \Omega \Rightarrow
\Phi}{
{\cal L}|\Box E, \Omega \Rightarrow
\Phi
}
}
$ 
\end{center}

In this way the proof fragment is so transformed
that the end hypersequent is preserved and 
the applications of $5_2$  now has the desired form.
\QED


 Now we establish the cut-elimination theorem for
 modal logics 
in $\beta$.

\begin{theorem}(Cut-Elimination for $\beta$)
\label{cut-elim2}
Let {\sf M}$_H$ be the hypersequent calculus
for a system from $\beta$.
In {\sf M}$_H$
provability coincides with cut-free provability.
\end{theorem}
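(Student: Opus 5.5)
We follow the scheme of the proof of Theorem \ref{cut-elim1}: consider a proof $P$ in ${\sf M}_H$ whose only $cut$ is its last inference, and argue by induction on $deg$. By Lemma \ref{buss} it suffices to treat \emph{Case 1} ($A=\Box B$) and \emph{Case 2} ($A=p$ or $\bot$). As explained in \S 5, ${\sf S5}_H$ is already settled, so we may assume {\sf M} is one of {\sf K5, K45, KD5, KD45, KB5}. Where {\sf M} contains $4$ (that is, {\sf K45} and {\sf KD45}), we first apply Lemma \ref{delete5}, so that every $5_2$ in $Q$ and $R$ has an initial sequent as its upper sequent. For {\sf K5, KD5, KB5} we instead rely on \textbf{F7}(1): $5_2$ and $B_25$ tolerate the addition of a $\Rightarrow$-sequent to every hypersequent of a fragment. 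The general strategy is the top-down method. We trace the threads $Q_i$ from each inference introducing the left cut formula, and the threads $R_j$ from each inference introducing the right cut formula, and we replace the introducing inferences so that the cut is absorbed.

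\emph{Case 2} (atomic cut formula). If the turnstile of $P$ is $\to$ and no $5_2$ or $B_25$ acts on the threads, we proceed as in Case 2.1 of Theorem \ref{cut-elim1}. Each $p\to p$ at the top of an $R_j$ is replaced by the whole left premise ${\cal H}|\Gamma\to\Delta,p$. Each $iw$ or $ew$ introducing $p$ is replaced by weakenings. We then re-simulate $R$, using \textbf{F7}(2) to carry the extra sequents ${\cal H}$ along the thread. When a $5_2$ or $B_25$ occurs in $R_j$, its side hypersequent must consist of $\Rightarrow$-sequents, so the added ${\cal H}$ cannot simply be carried along. In that situation we switch sides, as in Case 2.2. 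We take the last point in $Q_1$ at which the sequent containing $p$ changes its turnstile, which may now be via $5_2$ or $B_25$ rather than $nec_2$. At that point we insert the result of simulating $R$ against the single sequent $\Sigma_1\to\Psi_1,p$, and then re-simulate $Q$. Any $\Rightarrow$-sequents required by the side conditions of $5_2$ or $B_25$ are supplied by \textbf{F7}(1).

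\emph{Case 1} ($A=\Box B$). Regularity (Theorem \ref{regularproof}) is not available here, so we first normalize each $nec_1$ that starts a $Q_i$. Using the method of Claim 1 in Theorem \ref{regularproof}, extended by \textbf{F7}, we split off its context so that its upper hypersequent has the form $\Box D_1\to|\cdots|\Box D_k\to|{\cal G}^{\Rightarrow}|\Rightarrow B$. The $\Rightarrow$-sequents ${\cal G}^{\Rightarrow}$ may now survive; they arise from $5_1$ and $B_1$. The threads $R_j$ start with $K$, $4_l$, $5_1$, $B_1$ or weakenings. A $K$ is replaced by a $cut$ on $B$ of lower degree, exactly as in Subcase b of Theorem \ref{cut-elim1}, and the leftover ${\cal G}^{\Rightarrow}$ is carried down by \textbf{F7}(1). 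A $5_1$ or $4_l$ whose principal formula is $\Box B$ is replaced by the same rules applied to the $\Box D_i$. Every remaining situation should reduce to this form, because the left normalized premise can itself be turned into a $\Rightarrow$-sequent by $5_2$ or $B_25$.

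The main obstacle is the rule $B_1$ in {\sf KB5}, together with $B_2$-like moves, since these have $\Box B$ as principal formula on the right. The inference $B_1$ takes ${\cal J}|B,\Sigma\to\Psi$ to ${\cal J}|\Box B\Rightarrow|\Sigma\to\Psi$. To treat it we need $\Box D_1\Rightarrow|\cdots|{\cal G}^{\Rightarrow}|\to B$, obtained from the normalized left premise by $B_25$, and then a lower-degree cut on $B$. Finally we must verify that the side condition of $B_25$ (all other sequents being $\Rightarrow$) can be arranged by applying $5_1$ to the $\Box D_i\to$. This step is exactly where {\sf KB} fails, so I expect it to be the delicate point; I would check it first.
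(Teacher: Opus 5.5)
\textbf{There is a genuine gap, and it sits in Case 1.} Your outer frame agrees with the paper: Lemma \ref{buss}, {\sf S5} via \S 5, Lemma \ref{delete5} for {\sf K45, KD45}, \textbf{F7}(1), and lower-degree cuts on $B$ at the $K$- and $B_1$-tops of the $R_j$, with $B_1$ fed by a $\to B$ premise obtained by $merge$ and $B_25$. The $B_1$ step you single out is in fact routine there. The problem is the normalization your Case 1 rests on. Claim 1 of Theorem \ref{regularproof} works in $\alpha$ for three reasons: rules act locally, the only rule with a global side condition ($nec_2$) needs a one-sequent hypersequent, and the only components split off a $\Rightarrow$-thread are the $\Box D\to$ produced by $K$ and $4_l$. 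None of this holds in $\beta$. There $5_1$ and $B_1$ split a $\Rightarrow$-component off a $\to$-component whose arbitrary remainder stays attached, and $5_2$, $B_25$ constrain the whole hypersequent. So the context of a $nec_1$-premise need not split. For example, in {\sf K5}, $5_1$ and $\neg{:}r$ turn $\Box A\to\Box A$ into $\Rightarrow\neg\Box A\,|\to\Box A$, a legitimate $nec_1$-premise. Its context $\to\Box A$ is neither a $\Box D\to$ nor a $\Rightarrow$-sequent. Dropping it leaves $\Rightarrow\neg\Box A$, whose image $\Box\neg\Box A$ is not a {\sf K5}-theorem. Recovering $\Rightarrow A$ from $\to\Box A$ is in general a cut on $\Box A$, not a splitting.

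\textbf{The return trip through $Q$ also fails.} ``Exactly as in Subcase b'' means carrying ${\cal I}$ and $\Pi,\Theta$ back down the $Q_i$. In {\sf K5, KD5, KB5} these threads may pass $5_2$ or $B_25$, as every $nec_1$-thread must when the turnstile of $P$ is $\Rightarrow$. But ${\cal I}$ may contain $\to$-sequents, and \textbf{F7}(1) only licenses added $\Rightarrow$-sequents. Nor can $split$ help once material sits inside a $\Rightarrow$-sequent.

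\textbf{Your Case 2 ``switch sides'' has the same defect in {\sf KB5}.} At the last turnstile change of the $p$-sequent in $Q_1$, the hypersequent is not a single sequent. Its side part can contain $\to$-sequents (the ${\cal I}^\to$ of that $B_25$), and these cannot pass the $B_25$ steps of $R_j$. When the change is to $\Rightarrow$, you also have no way to turn its ${\cal I}^\Rightarrow$ back into ${\cal I}^\to$ after simulating $R$.

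\textbf{The missing idea is the paper's device for avoiding normalization altogether.}
\begin{itemize}
\item Case 1.2 ({\sf K5, KD5, KB5}):
\begin{itemize}
\item Each $nec_1$ at the top of a $Q_i$ is replaced by an $ew$, and each $iw/ew$ of $\Box B$ by an $ew$ of $\Rightarrow B$. Then $\Rightarrow B$ rides down $Q$ as a separate $\Rightarrow$-sequent by \textbf{F7}(1), giving ${\cal H}_0^\Rightarrow|{\cal H}_1^\to|\Gamma\gg\Delta|\Rightarrow B$.
\item Merging the $\to$-part and applying $B_25$ (or $5_2$) gives variants with all-$\Rightarrow$ context and either $\Rightarrow B$ or $\to B$.
\item The whole left context is inserted at the top of each $R_j$ by a cut on $B$. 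The all-$\Rightarrow$ variant is used when $R_j$ contains $B_25$.
\item The original $\to$-components are restored at the last $B_25$ of $R_j$, with the merged sequent placed in its ${\cal I}$-part, followed by $split$. The right context is never pushed through $Q$.
\end{itemize}
\item Case 1.3 ({\sf K45, KD45}): Lemma \ref{delete5} confines $5_2$ to initial sequents. That is exactly what makes it safe to cut ${\cal K}_i|\Rightarrow B$ against each $K$-top with ${\cal K}_i$ as side context, turn the $5_1$/$4_l$ moves of $\Box B$ into $ew$, and then re-simulate $Q$.
\item Case 2 ({\sf KB5}): the same device applies. The merged left premise, or its $B_25$-swapped form, replaces $p\to p$. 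Its $\to$-material either travels inside the thread sequent or is parked as a $\Rightarrow$-sequent and restored at the last $B_25$. It is split off at the end.
\end{itemize}
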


{\it Proof.} 
Let us begin with {\it Case 1}. 

{\it Case 1}.
When $A$ is a modal formula $\Box B$.

{\it Case 1.
1}. {\sf M} contains the $T$-axiom,
 {\sf M} is {\sf S5}
 and, as we explained in the previous section,
 we skip this case.

\vspace{1em}

Now we assume that {\sf M} does not have
the $T$-axiom.
Then, the cut-elimination differs
according to whether {\sf M} has $4$-axiom or 
not.

\vspace{1em}

{\it Case 1.
2}. {\sf M} does not contain the 
$4$-axiom, that is, {\sf M} is {\sf K5}, 
{\sf KD5}
 or {\sf KB5}.
 $P$ is as follows.
 Put 
${\cal H}={\cal H}_0^{\Rightarrow}|
{\cal H}_1^{\rightarrow}$.

\begin{center}
\shortstack{
$
\infer[nec_1]{{\cal K}_i| \to \Box B}{
{\cal K}_i|  \Rightarrow B
}
$
\\ \hspace{1em}
$Q_i$ \hspace{1em}
\vdots\hspace{4em} \\ \vspace{.5em}
}
\hspace{2em}
\shortstack{
$\infer[K]{
{\cal J}_k | \Box B\to | \Upsilon_k\Rightarrow \Omega_k}{
{\cal J}_k | B, \Upsilon_k\Rightarrow \Omega_k}$\\
$R_k$ \hspace{1em}
\vdots\hspace{3em} \\ \vspace{.5em}
}

\vspace{.5em}
$
\infer[cut]{{\cal H}_0^{\Rightarrow}|{\cal H}_1
^{\to}|{\cal I}|
 \Gamma,
\Pi \gg \Delta, \Theta}{
{\cal H}_0^{\Rightarrow}|{\cal H}_1
^{\to}| \Gamma\gg   \Delta, \Box B
\hspace{4em}
{\cal I}|\Box B,  \Pi \gg   \Theta
}
$
\end{center}

Each $Q_i$ and each $R_j$ can contain 
applications of $B_25$
of which upper sequent contains $\Box B$.

 {\it Case 1.
 2.1}.  When the turnstile of $P$ is $\to$.
First we construct the following proofs from
the subproof $Q$.

\small
\begin{center}
\shortstack{
 $\vspace{1em}$
\\$\hspace{2em}$
$\infer[nec_1]
{{\cal K}_i |\to \Box B }
{{\cal K}_i |\Rightarrow B }$  $\hspace{2em}$
$\infer[iw]
{{\cal K}_j |\Phi_j\gg \Xi_j, \Box B }
{{\cal K}_j |\Phi_j\gg \Xi_j }$  $\hspace{2em}$
$\infer[ew]
{{\cal K}_k |\Phi_k\gg \Xi_k, \Box B }
{{\cal K}_k  }$
\\ \vspace{1em}
\hspace{-4em}
$Q_i$ \hspace{1em} $\vdots$ 
\hspace{6.5em} 
$Q_j$ \hspace{1em} $\vdots$ 
\hspace{7em} 
$Q_k$ \hspace{1em} $\vdots$ 
\\
\hspace{3.5em}
\\
$\ddots$\hspace{3em}
$\vdots$ \hspace{3em} 
$\iddots$ \hspace{1em}
\\ \vspace{1.5em} \\
\hspace{-1em}
$
{\cal H}_0^{\Rightarrow}|
{\cal H}_1^{\to} | \Gamma\to \Delta, 
\Box B
$\\
\vspace{3em}
}


\hspace{1em} $\bigtriangledown$ \hspace{3em}
\\
\vspace{3em}
\end{center}

\begin{center}

\shortstack{
\\$\hspace{2em}$
$\infer[ew]
{{\cal K}_i |\to |  \Rightarrow B }
{{\cal K}_i |\Rightarrow B }$  $\hspace{2em}$
$\infer[ew]
{{\cal K}_j |\Phi_j\gg \Xi_j | \Rightarrow B }
{{\cal K}_j |\Phi_j\gg \Xi_j }$ $\hspace{2em}$
$\infer=[ew]
{{\cal K}_k |\Phi_k\gg \Xi_k | \Rightarrow B }
{{\cal K}_k  }$
\\  \vspace{1em}
\hspace{-4em}
$Q_i$ \hspace{1em} $\vdots$ 
\hspace{6.5em} 
$Q_j$ \hspace{1em} $\vdots$ 
\hspace{7em} 
$Q_k$ \hspace{1em} $\vdots$ 
\\
\hspace{3.5em}
\\
$\ddots$\hspace{3em}
$\vdots$ \hspace{3em} 
$\iddots$ \hspace{1em}
\\ \vspace{1.5em} \\
\\ 
$
\infer[B_25]{{\cal H}_0^{\Rightarrow}|
\cdot({\cal H}_1
^{\Rightarrow})\cdot ( \Gamma\Rightarrow \Delta)
|  \gg B}{
\infer[merge]{
{\cal H}_0^{\Rightarrow}|
\cdot({\cal H}_1
^{\to})\cdot ( \Gamma\to \Delta)
|  \Rightarrow B
}{ 
{\cal H}_0^{\Rightarrow}|{\cal H}_1
^{\to}|  \Gamma\to \Delta
|  \Rightarrow B
}}
$
}
\end{center}

\normalsize

Now available are the proofs ending with:

\begin{center}
${\cal H}_0^{\Rightarrow}|{\cal H}_1
^{\to}|  \Gamma\to \Delta
|  \Rightarrow B$, 

${\cal H}_0^{\Rightarrow}|
\cdot({\cal H}_1
^{\Rightarrow})\cdot ( \Gamma\Rightarrow \Delta)
|  \Rightarrow B$,

\hspace{-1.5em}
and ${\cal H}_0^{\Rightarrow}|
\cdot({\cal H}_1
^{\Rightarrow})\cdot ( \Gamma\Rightarrow \Delta)
|  \rightarrow B$.
\end{center}

We will use  these proofs later on.
 
In the subproof $R$, some $R_k$ can contain  
applications of $5_{1,2}$, $B_1$ or $B_25$
which introduce $\Rightarrow$ of a sequent containing
$\Box B$.
$R$ can be depicted  as follows.

\footnotesize
\begin{center}
\shortstack{
$\infer[K]{{\cal L}_k | \Box B \rightarrow |
\Upsilon_k\Rightarrow \Omega_k
}
{{\cal L}_k |  B, \Upsilon_k\Rightarrow \Omega_k }$
 $\hspace{1em}$
$\infer[K]
{{\cal L}_h|\Box B \to | \Upsilon_h\Rightarrow \Omega_h   }
{{\cal L}_h |B,  \Upsilon_h\Rightarrow \Omega_h }$\\ \vspace{1em}
\hspace{-3em}
$R_k$ \hspace{1em} $\vdots$ $\hspace{7em}$ $R_h$
\hspace{1em} $\vdots$\\
\vspace{1em}
\infer[B_25]{
{\cal P}_k^\Rightarrow | {\cal R}_k^\to |
\Box B, \Lambda_k \Rightarrow {\rm P}_k
}{
{\cal P}_k^\Rightarrow | {\cal R}_k^\Rightarrow |
\Box B, \Lambda_k \to {\rm P}_k}
$\hspace{2.2em}$
\infer[5_1]{
{\cal P}_h |
\Box B\Rightarrow | \Lambda_h \to {\rm P}_h
}{
{\cal P}_h |
 \Box B, \Lambda_h \to {\rm P}_h}\\
$\vdots$ $\hspace{10em}$ $\vdots$\\
\vspace{2em} \\
$\vdots$ $\hspace{10em}$ $\vdots$\\
\vspace{1em} \\
\infer[B_25]{
{\cal M}_k^\Rightarrow | {\cal N}_k^\to |
S_k^\Rightarrow 
}{
{\cal M}_k^\Rightarrow | 
{\cal N}_k^\Rightarrow |
S_k^\to}
$\hspace{2.2em}$
\infer[B_25]{
{\cal M}_h^\Rightarrow | {\cal N}_h^\to |
S_h^\Rightarrow 
}{
{\cal M}_h^\Rightarrow | 
{\cal N}_h^\Rightarrow |
S_h^\to}
\\ 
\vspace{2em}
\\
\hspace{11em}
\deduc
\\ \vspace{.7em} }
\hspace{1em}
\shortstack{
  $\vspace{1em}$\\
$
\infer[B_1]{
{\cal P}_j |
\Box B\Rightarrow | \Lambda_j \to {\rm P}_j
}{
{\cal P}_j |
  B, \Lambda_j \to {\rm P}_j}$
 $\vspace{1em}$\\
 $\hspace{-3em}$
 $R_j$ \hspace{1em}  $\vdots$\\
\vspace{1.5em} 
\\
 $\vdots$\\
 \vspace{1.3em}\\
\infer[B_25]{
{\cal M}_j^\Rightarrow | {\cal N}_j^\to |
S_j^\Rightarrow 
}{
{\cal M}_j^\Rightarrow | 
{\cal N}_j^\Rightarrow |
S_j^\to}
\\
\vspace{6.3em}
 }

\vspace{.5em}
$
\hspace{-2em}
{\cal I}|\Box B, \Pi \to \Theta
$
\end{center}

\normalsize
Here ${\cal M}_i^\Rightarrow | 
{\cal N}_i^\to |
S_i^\Rightarrow$ is the lower 
hypersequent of the last application 
of $B_25$ in each $R_i$
so that the $\to$-sequents
of ${\cal N}_i^\to $
are ancestors of the $\to$-sequent in
${\cal I}|\Box B, \Pi \to \Theta
$.

Roughly speaking, the 
proof-transformation runs as follows.
For each $R_i$,
if $R_i$ contains no $B_25$,
we are going to delete $\Box B$
and concatenate ${\cal H}_0^{\Rightarrow}|{\cal H}_1
^{\to}|  \Gamma\to \Delta
$ to the lower hypersequent of 
the top of $R_i$.
If $R_i$ contains some applications of 
$B_25$, we are going to delete $\Box B$
and concatenate 
${\cal H}_0^{\Rightarrow}|
\cdot({\cal H}_1
^{\Rightarrow})
\cdot (  \Gamma\Rightarrow \Delta)$
to the lower hypersequent of 
the top of $R_i$,
and change 
${\cal H}_0^{\Rightarrow}|
\cdot({\cal H}_1
^{\Rightarrow})
\cdot (  \Gamma\Rightarrow \Delta)$
to
${\cal H}_0^{\Rightarrow}|{\cal H}_1
^{\to}|  \Gamma\to \Delta$
in the place of
the last application of $B_25$.
After all, the last hypersequent of the simulated $R$
is going to be
the concatenation of 
${\cal H}_0^{\Rightarrow}|{\cal H}_1
^{\to}|  \Gamma\to \Delta$ 
and 
${\cal I}| \Pi \to \Theta$,
that is, the end hypersequent of 
the original proof $P$.

Details of the proof-transformation
is as follows.
Take $R_1$.
There are {\it Subcases k, l}.

\vspace{1em}
{\it Subcase k.} 
If there is no $B_25$ in $R_1$.
According to the inference rule
of the top of $R_1$,
do the following replacement.

\small
\begin{center}$
\infer[iw\hspace{2em} \rhd\hspace{2em}]{
{\cal L}_1 |\Box B, \Upsilon_1  \gg  \Omega_1
}{
{\cal L}_1 | \Upsilon_1 \gg  \Omega_1
}
\infer={
{\cal H}_0^{\Rightarrow}|{\cal H}_1
^{\to}|  \Gamma\to \Delta|
{\cal L}_1 | \Upsilon_1  \gg  \Omega_1
}{
{\cal L}_1 | \Upsilon_1 \gg  \Omega_1
}
$
\end{center}

\begin{center}$
\infer[ew\hspace{2em} \rhd\hspace{2em}]{
{\cal K}_1 | \Box B, \Phi_1  \gg \Xi_1
}{
{\cal K}_1
}
 \infer={
{\cal H}_0^{\Rightarrow}|{\cal H}_1
^{\to}|  \Gamma\to \Delta
|{\cal K}_1 |  \Phi_1, \Pi  \gg \Xi_1,
 \Theta
}{
{\cal K}_1
}
 $
\end{center}

\small
\begin{center}
$\infer[K\hspace{2em} \rhd\hspace{2em}]
{{\cal L}_1 | \Box B \rightarrow |
\Upsilon_1\Rightarrow \Omega_1
}
{{\cal L}_1 |  B, \Upsilon_1\Rightarrow \Omega_1 }$
$
\infer{
{\cal H}_0^{\Rightarrow}|{\cal H}_1
^{\to}|  \Gamma\to \Delta
|
{\cal L}_1 |  \to | \Upsilon_1\Rightarrow \Omega_1}{
\infer{
{\cal H}_0^{\Rightarrow}|{\cal H}_1
^{\to}|  \Gamma\to \Delta
|
{\cal L}_1 |  \Upsilon_1\Rightarrow \Omega_1
}{
{\cal H}_0^{\Rightarrow}|{\cal H}_1
^{\to}|  \Gamma\to \Delta
|  \Rightarrow B
&
{\cal L}_1 |  B, \Upsilon_1\Rightarrow \Omega_1
 }
}
$

\end{center}

\normalsize

{\it Subcase l.} If there is some $B_25$ in $R_1$,
According to the inference rule
of the top of $R_1$,
execute the following replacement.

\small
\begin{center}$
\infer[iw\hspace{2em} \rhd\hspace{2em}]{
{\cal L}_1 |\Box B, \Upsilon_1  \gg  \Omega_1
}{
{\cal L}_1 | \Upsilon_1 \gg  \Omega_1
}
\infer={
{\cal H}_0^{\Rightarrow}|
\cdot({\cal H}_1
^{\Rightarrow})
\cdot (  \Gamma\Rightarrow \Delta)|
{\cal L}_1 | \Upsilon_1  \gg  \Omega_1
}{
{\cal L}_1 | \Upsilon_1 \gg  \Omega_1
}
$
\end{center}

\begin{center}$
\infer[ew\hspace{2em} \rhd\hspace{2em}]{
{\cal L}_1 | \Box B, \Phi_1  \gg \Xi_1
}{
{\cal L}_1
}
 \infer={
{\cal H}_0^{\Rightarrow}|
\cdot({\cal H}_1
^{\Rightarrow})
\cdot (  \Gamma\Rightarrow \Delta)|
{\cal L}_1 |  \Phi_1, \Pi  \gg \Xi_1,
 \Theta
}{
{\cal L}_1
}
 $
\end{center}

\normalsize

\small
\begin{center}
$\infer [K\hspace{2em} \rhd\hspace{2em}]
{{\cal L}_1 | \Box B \rightarrow |
\Upsilon_1\Rightarrow \Omega_1
}
{{\cal L}_1 |  B, \Upsilon_1\Rightarrow \Omega_1 }$
$
\infer{
{\cal H}_0^{\Rightarrow}|
\cdot({\cal H}_1
^{\Rightarrow})
\cdot (  \Gamma\Rightarrow \Delta)| 
{\cal L}_1 |  \to | \Upsilon_1\Rightarrow \Omega_1}{
\infer{
{\cal H}_0^{\Rightarrow}|
\cdot({\cal H}_1
^{\Rightarrow})
\cdot (  \Gamma\Rightarrow \Delta)| 
{\cal L}_1 |  \Upsilon_1\Rightarrow \Omega_1
}{
{\cal H}_0^{\Rightarrow}|
\cdot({\cal H}_1
^{\Rightarrow})
\cdot (  \Gamma\Rightarrow \Delta)| 
 \Rightarrow B
&
{\cal L}_1 |  B, \Upsilon_1\Rightarrow \Omega_1
 }
}
$

\end{center}

\small
\begin{center}
 $\infer[B_1 \hspace{1em} \rhd \hspace{.1em}]{
{\cal P}_1 |
\Box B\Rightarrow | \Lambda_1 \to {\rm P}_1
}{
{\cal P}_1 |
  B, \Lambda_1 \to {\rm P}_1}
  \infer{
{\cal H}_0^{\Rightarrow}|
\cdot({\cal H}_1
^{\Rightarrow})
\cdot (  \Gamma\Rightarrow \Delta)| 
{\cal P}_1 |  \to | 
 \Lambda_1 \to {\rm P}_1
}{
\infer{
{\cal H}_0^{\Rightarrow}|
\cdot({\cal H}_1
^{\Rightarrow})
\cdot (  \Gamma\Rightarrow \Delta)| 
 {\cal P}_1 |
  \Lambda_1 \to {\rm P}_1
}{
{\cal H}_0^{\Rightarrow}|
\cdot({\cal H}_1
^{\Rightarrow})\cdot ( \Gamma\Rightarrow \Delta)
|  \rightarrow B
&
 {\cal P}_1 |
  B, \Lambda_1 \to {\rm P}_1}
}
$
\end{center}

\normalsize
Having changed the top of $R_1$ 
in this way,
then modify the last application of 
$B_25$ in $R_1$ as follows.

\small
 \begin{center}
$
\infer[\hspace{1.6em} \rhd\hspace{1.6em}]{
{\cal H}_0^{\Rightarrow}|
\cdot({\cal H}_1
^{\Rightarrow})
\cdot (  \Gamma\Rightarrow \Delta)|
{\cal M}_1^\Rightarrow | {\cal N}_1^\to |
S_1^\Rightarrow 
}{
{\cal H}_0^{\Rightarrow}|
\cdot({\cal H}_1
^{\Rightarrow})
\cdot (  \Gamma\Rightarrow \Delta)|
{\cal M}_1^\Rightarrow | 
{\cal N}_1^\Rightarrow |
S_1^\to}$
$\infer={
{\cal H}_0^{\Rightarrow}|
{\cal H}_1^{\rightarrow} |
  \Gamma\rightarrow \Delta|
{\cal M}_1^\Rightarrow | {\cal N}_1^\to |
S_1^\Rightarrow 
}{\infer{
{\cal H}_0^{\Rightarrow}|
\cdot({\cal H}_1
^{\rightarrow})
\cdot (  \Gamma\rightarrow \Delta)|
{\cal M}_1^\Rightarrow | {\cal N}_1^\to |
S_1^\Rightarrow 
}{
{\cal H}_0^{\Rightarrow}|
\cdot({\cal H}_1
^{\Rightarrow})
\cdot (  \Gamma\Rightarrow \Delta)|
{\cal M}_1^\Rightarrow | 
{\cal N}_1^\Rightarrow |
S_1^\to
}
}$
\end{center}

\normalsize

\vspace{1em}

\normalsize
Note that when $R_1$ starts with $B_1$,
there must be applications of $B_25$
in $R_1$,
because the lower sequent 
$\Box B \Rightarrow $ of the top of $R_1$
has $\Rightarrow$
but the sequent  $\Box B, \Pi \to \Theta$ 
containing $\Box B$ in the end 
hypersequent of the subproof $R$
has $\to$.

Repeat the same proof-transformation 
for $R_2, \ldots, R_m$
so that,
for each $R_j$,
$\Box B$ is deleted at the top of $R_j$
and 
${\cal H}_0^{\Rightarrow}|{\cal H}_1
^{\to}|  \Gamma\to \Delta$
is concatenated 
at the top of $R_j$ (if there is no $B_25$ in $R_j$)
or
just below the last application of $B_25$ (otherwise).

Then we simulate the proof fragment starting 
from the top or the last $B_25$ of each $R_j$
and ending with the end hypersequent of original 
subproof $R$.
Finally, the last hypersequent of $R$
is now equal to
 the end hypersequent of 
the original proof $P$.

We illustrate the resulting proof figure
obtained from the above proof figure
where $R_h$ is omitted.

\footnotesize
\begin{center}
\shortstack{\deduc 
\\
${\cal H}_0^{\Rightarrow}|
\cdot({\cal H}_1
^{\Rightarrow})
\cdot (  \Gamma\Rightarrow \Delta)| 
{\cal L}_k |  \to | \Upsilon_k\Rightarrow \Omega_k$
 $\hspace{1em}$
\\ \vspace{1em}
\hspace{-3em}
$R_k$ \hspace{1em} $\vdots$ 
\\
\vspace{1em}
$\infer[B_25]{
{\cal H}_0^{\Rightarrow}|
\cdot({\cal H}_1
^{\Rightarrow})
\cdot (  \Gamma\Rightarrow \Delta)| 
{\cal P}_k^\Rightarrow | {\cal R}_k^\to |
\Lambda_k \Rightarrow {\rm P}_k
}{
{\cal H}_0^{\Rightarrow}|
\cdot({\cal H}_1
^{\Rightarrow})
\cdot (  \Gamma\Rightarrow \Delta)| 
{\cal P}_k^\Rightarrow | {\cal R}_k^\Rightarrow |
 \Lambda_k \to {\rm P}_k
 }
 $
\hspace{2.2em}
\\
$\vdots$\\
\vspace{2em} \\
$\vdots$\\
\vspace{1em} \\
$\infer=[split]{
{\cal H}_0^{\Rightarrow}|
{\cal H}_1^{\rightarrow}
|\Gamma\rightarrow \Delta| 
{\cal M}_k^\Rightarrow | {\cal N}_k^\to |
S_k^\Rightarrow 
}{
\infer[B_25]{
{\cal H}_0^{\Rightarrow}|
\cdot({\cal H}_1
^{\rightarrow})
\cdot (  \Gamma\rightarrow \Delta)| 
{\cal M}_k^\Rightarrow | {\cal N}_k^\to |
S_k^\Rightarrow 
}{
{\cal H}_0^{\Rightarrow}|
\cdot({\cal H}_1
^{\Rightarrow})
\cdot (  \Gamma\Rightarrow \Delta)| 
{\cal M}_k^\Rightarrow | 
{\cal N}_k^\Rightarrow |
S_k^\to
}
}
$
\hspace{2.2em}
\\ 
\vspace{2em}
\\ \vspace{.7em} 
}
\shortstack{\deduc 
  $\vspace{1em}$\\
$
{\cal H}_0^{\Rightarrow}|
\cdot({\cal H}_1
^{\Rightarrow})
\cdot (  \Gamma\Rightarrow \Delta)| 
{\cal P}_j |  \to | 
 \Lambda_j \to {\rm P}_j$
 $\vspace{1em}$\\
 $\hspace{-3em}$
 $R_j$ \hspace{1em}  $\vdots$\\
\vspace{1.5em} 
\\
 $\vdots$\\
 \vspace{1.3em}\\
\infer=[split]{
{\cal H}_0^{\Rightarrow}|
{\cal H}_1
^{\rightarrow}
|\Gamma\rightarrow \Delta| 
{\cal M}_j^\Rightarrow | {\cal N}_j^\to |
S_j^\Rightarrow 
}{
\infer[B_25]{
{\cal H}_0^{\Rightarrow}|
\cdot({\cal H}_1
^{\rightarrow})
\cdot (  \Gamma\rightarrow \Delta)| 
{\cal M}_j^\Rightarrow | {\cal N}_j^\to |
S_j^\Rightarrow 
}{
{\cal H}_0^{\Rightarrow}|
\cdot({\cal H}_1
^{\Rightarrow})
\cdot (  \Gamma\Rightarrow \Delta)| 
{\cal M}_j^\Rightarrow | 
{\cal N}_j^\Rightarrow |
S_j^\to}
}
\\
\vspace{3.3em}
 }

\deduc

\vspace{2em}
$
\hspace{4em}
\infer[merge]{{\cal H}_0^{\Rightarrow}|
{\cal H}_1
^{\rightarrow}
|\Gamma, \Pi \rightarrow \Delta, \Theta| 
{\cal I}}{
{\cal H}_0^{\Rightarrow}|
{\cal H}_1
^{\rightarrow}
|\Gamma\rightarrow \Delta| 
{\cal I}| \Pi \to \Theta
}
$
\end{center}

\normalsize


\normalsize
 
{\it Case 1.
2.2}.  
When the turnstile of $P$ is $\Rightarrow$.
First we provide two kinds of 
proof-construction from the subproof $Q$.
The first one is similar (but not the 
same) to
the one described in {\it Case 1.
2.1}.

\small
\begin{center}
\shortstack{
$\infer[nec_1]
{{\cal K}_i |\to \Box B }
{{\cal K}_i |\Rightarrow B }$  $\hspace{2em}$
$\infer[iw]
{{\cal K}_j |\Phi_j\gg \Xi_j, \Box B }
{{\cal K}_j |\Phi_j\gg \Xi_j }$  $\hspace{2em}$
$\infer[ew]
{{\cal K}_k |\Phi_k\gg \Xi_k, \Box B }
{{\cal K}_k  }$
\\ \vspace{1em}
\hspace{-4em}
$Q_i$ \hspace{1em} $\vdots$ 
\hspace{6.5em} 
$Q_j$ \hspace{1em} $\vdots$ 
\hspace{7em} 
$Q_k$ \hspace{1em} $\vdots$ 
\\
\hspace{3.5em}
\\
$\ddots$\hspace{3em}
$\vdots$ \hspace{3em} 
$\iddots$ \hspace{1em}
\\ \vspace{1.5em} \\
\hspace{-1em}
$
{\cal H}_0^{\Rightarrow}|{\cal H}_1
^{\to} | \Gamma\Rightarrow \Delta, \Box B
$\\
\vspace{3em}
}


$\bigtriangledown$ \hspace{3em}
\\
\vspace{3em}
\end{center}

\begin{center}

\shortstack{
$\infer[ew]
{{\cal K}_i |\to |  \Rightarrow B }
{{\cal K}_i |\Rightarrow B }$  $\hspace{2em}$
$\infer[ew]
{{\cal K}_j |\Phi_j\gg \Xi_j | \Rightarrow B }
{{\cal K}_j |\Phi_j\gg \Xi_j }$ $\hspace{2em}$
$\infer=[ew]
{{\cal K}_k |\Phi_k\gg \Xi_k | \Rightarrow B }
{{\cal K}_k }$
\\  \vspace{1em}
\hspace{-4em}
$Q_i$ \hspace{1em} $\vdots$ 
\hspace{6.5em} 
$Q_j$ \hspace{1em} $\vdots$ 
\hspace{7em} 
$Q_k$ \hspace{1em} $\vdots$ 
\\
\hspace{3.5em}
\\
$\ddots$\hspace{3em}
$\vdots$ \hspace{3em} 
$\iddots$ \hspace{1em}
\\ \vspace{1.5em} \\
\\ 
$
\infer[B_25]{{\cal H}_0^{\Rightarrow}|
\cdot({\cal H}_1
^{\Rightarrow}) | \Gamma\Rightarrow \Delta
|  \gg B}{
\infer[merge]{
{\cal H}_0^{\Rightarrow}|
\cdot({\cal H}_1
^{\to}) | \Gamma\Rightarrow \Delta
|  \Rightarrow B
}{ 
{\cal H}_0^{\Rightarrow}|{\cal H}_1
^{\to}|  \Gamma\Rightarrow \Delta
|  \Rightarrow B
}}
$
}
\end{center}

\normalsize

By this transformation,
we can utilize the proofs ending with:

\begin{center}
${\cal H}_0^{\Rightarrow}|{\cal H}_1
^{\to}|  \Gamma\Rightarrow \Delta
|  \Rightarrow B$, 

${\cal H}_0^{\Rightarrow}|
\cdot({\cal H}_1
^{\Rightarrow}) |  \Gamma\Rightarrow \Delta
|  \Rightarrow B$,

\hspace{-1.5em}
and ${\cal H}_0^{\Rightarrow}|
\cdot({\cal H}_1
^{\Rightarrow}) | \Gamma\Rightarrow \Delta
|  \rightarrow B$.
\end{center}

In the second transformation, we are going to 
obtain $
{\cal H}_0^{\Rightarrow}|{\cal H}_1
^{\to}|  \Gamma\Rightarrow \Delta
|  \to B$.
For each $Q_i$, 
if there are  applications of $B_25$ in $Q_i$,
we make a transformation
so that $\Rightarrow B$
is produced at the top of $Q_i$
and it is changed to $\rightarrow  B$
at the last application of $B_25$
in $Q_i$.

If there is no $B_25$ in $Q_i$,
$Q_i$ must start with $iw$ or $ew$ whose 
lower sequent has $\Rightarrow$.
For this case,
we substitute $ew$ introducing  $\to B$
at the top of $Q_i$.

\small
\begin{center}
\shortstack{
$\infer[nec_1]
{{\cal K}_i |\to \Box B }
{{\cal K}_i |\Rightarrow B }$  $\hspace{2em}$
$\infer[iw]
{{\cal K}_j |\Phi_j\gg \Xi_j, \Box B 
}
{{\cal K}_j |\Phi_j \gg \Xi_j }$
$\hspace{2em}$
$\infer[iw]
{{\cal K}_k |\Phi_k\Rightarrow \Xi_k, \Box B 
}
{{\cal K}_k |\Phi_k\Rightarrow \Xi_k }$
\\ \vspace{1em}
\hspace{-3em}
$Q_i$ \hspace{1em} $\vdots$ 
\hspace{7em} 
$Q_j$ \hspace{1em} $\vdots$ 
\hspace{7em} 
$Q_k$ \hspace{1em} $\vdots$ 
\\
 $\vdots$ 
\hspace{9.6em}  
$\vdots$ 
\hspace{8.5em} 
\hspace{1em} $\vdots$ 
\\ \vspace{1em}
\\
\hspace{-2.4em}$\infer[B_25]{
{\cal A}_i^{\Rightarrow} |
{\cal B}_i^{\rightarrow} |
S^{\Rightarrow}
}{
{\cal A}_i^{\Rightarrow} |
{\cal B}_i^{\Rightarrow} |
S^{\to}
}
\hspace{2.4em}\infer[B_25]{
{\cal A}_j^{\Rightarrow} |
{\cal B}_j^{\rightarrow} |
T^{\Rightarrow}
}{
{\cal A}_j^{\Rightarrow} |
{\cal B}_j^{\Rightarrow} |
T^{\to}
}
\hspace{4.5em}
\iddots
$ 
\\
\vspace{1em}
\hspace{3.5em}
\\
\\
\hspace{.8em}
$\ddots$\hspace{3em}
$\vdots$\hspace{3em}
$\iddots$ \hspace{1em}
\\ \vspace{1.5em} \\
$
{\cal H}_0^{\Rightarrow}|{\cal H}_1
^{\to} | \Gamma\Rightarrow \Delta, \Box B
$\\
\vspace{3em}
}


$\bigtriangledown$ \hspace{3em}
\\
\vspace{4em}

\shortstack{
$\infer[ew]
{{\cal K}_i |\to | \Rightarrow B }
{{\cal K}_i |\Rightarrow B }$  $\hspace{2em}$
$\infer[ew]
{{\cal K}_j |\Phi_j\gg \Xi_j|\Rightarrow B 
}
{{\cal K}_j |\Phi_j \gg \Xi_j }$
$\hspace{2em}$
$\infer[ew]
{{\cal K}_j |\Phi_j\Rightarrow \Xi_j
| \to B
}
{{\cal K}_j |\Phi_j\Rightarrow \Xi_j }$
\\ \vspace{1em}
\hspace{-3em}
$Q_i$ \hspace{1em} $\vdots$ 
\hspace{7em} 
$Q_j$ \hspace{1em} $\vdots$ 
\hspace{7em} 
$Q_k$ \hspace{1em} $\vdots$ 
\\
 $\vdots$ 
\hspace{9.6em}  
$\vdots$ 
\hspace{8.5em} 
\hspace{1em} $\vdots$ 
\\ \vspace{1em}
\\
\hspace{-5.4em}$\infer[B_25]{
{\cal A}_i^{\Rightarrow} |
{\cal B}_i^{\rightarrow} |
S^{\Rightarrow} | \to B
}{
{\cal A}_i^{\Rightarrow} |
{\cal B}_i^{\Rightarrow} |
S^{\to} | \Rightarrow B
}
\hspace{1.4em}\infer[B_25]{
{\cal A}_j^{\Rightarrow} |
{\cal B}_j^{\rightarrow} |
T^{\Rightarrow}|\rightarrow B 
}{
{\cal A}_j^{\Rightarrow} |
{\cal B}_j^{\Rightarrow} |
T^{\to}|\Rightarrow B 
}
\hspace{2em}
\iddots
$ 
\\
\vspace{1em}
\hspace{3.5em}
\\
\\
\hspace{.8em}
$\ddots$\hspace{3em}
$\vdots$\hspace{3em}
$\iddots$ \hspace{1em}
\\ \vspace{1.5em} \\
$
{\cal H}_0^{\Rightarrow}|{\cal H}_1
^{\to} | \Gamma\Rightarrow \Delta
| \to B$
}

\end{center}

\normalsize
Here the indicated applications of $B_25$
are the last ones in $Q_i$ and $Q_j$.

Now the subproof $R$ can be depicted
as follows.

\footnotesize
\begin{center}
\shortstack{\deduc $\hspace{7em}$ \deduc $\vspace{1em}$\\
$
\infer[K]{{\cal L}_k | \Box B \rightarrow |
\Upsilon_k\Rightarrow \Omega_k}{
{\cal L}_k |  B, \Upsilon_k\Rightarrow \Omega_k 
}$
 $\hspace{1em}$
$\infer[K]{
{\cal L}_h|\Box B \to | \Upsilon_h\Rightarrow \Omega_h   
}{
{\cal L}_h |B,  \Upsilon_h\Rightarrow \Omega_h 
}$
\\ \vspace{1em}
\hspace{-3em}
$R_k$ \hspace{1em} $\vdots$ $\hspace{7em}$ $R_h$
\hspace{1em} $\vdots$\\
\vspace{1em}
$
\infer[B_25]{
{\cal P}_k^\Rightarrow | {\cal R}_k^\to |
\Box B, \Lambda_k \Rightarrow {\rm P}_k
}{
{\cal P}_k^\Rightarrow | {\cal R}_k^\Rightarrow |
\Box B, \Lambda_k \to {\rm P}_k
}
\hspace{2.2em}
\infer[5_1]{
{\cal P}_h |
\Box B\Rightarrow | \Lambda_h \to {\rm P}_h
}{
{\cal P}_h |
 \Box B, \Lambda_h \to {\rm P}_h
 }
$ \\
$\vdots$ $\hspace{10em}$ $\vdots$\\
\vspace{2em} \\
$\vdots$ $\hspace{10em}$ $\vdots$\\
\vspace{1em} \\
$\infer[B_25]{
{\cal M}_k^\Rightarrow | {\cal N}_k^\to |
S_k^\Rightarrow 
}{
{\cal M}_k^\Rightarrow | 
{\cal N}_k^\Rightarrow |
S_k^\to}
\hspace{2.2em}
\infer[B_25]{
{\cal M}_h^\Rightarrow | {\cal N}_h^\to |
S_h^\Rightarrow 
}{
{\cal M}_h^\Rightarrow | 
{\cal N}_h^\Rightarrow |
S_h^\to}
$
\\ 
\vspace{2em}
\\
\hspace{11em}
\deduc
\\ \vspace{.7em} 
}
\hspace{1em}
\shortstack{\deduc 
  $\vspace{1em}$\\
$
\infer[B_1]{
{\cal P}_j |
\Box B\Rightarrow | \Lambda_j \to {\rm P}_j
}{
{\cal P}_j |
  B, \Lambda_j \to {\rm P}_j}$
 $\vspace{1em}$\\
 $\hspace{-3em}$
 $R_j$ \hspace{1em}  $\vdots$\\
\vspace{1.5em} 
\\
 $\vdots$\\
 \vspace{1.3em}\\
 $
\infer[B_25]{
{\cal M}_j^\Rightarrow | {\cal N}_j^\to |
S_j^\Rightarrow 
}{
{\cal M}_j^\Rightarrow | 
{\cal N}_j^\Rightarrow |
S_j^\to}$
\\
\vspace{6.3em}
 }

\vspace{.5em}
$
\hspace{-1em}
{\cal I}|\Box B, \Pi \Rightarrow \Theta
$
\end{center}

\normalsize
Here ${\cal M}_i^\Rightarrow | 
{\cal N}_i^\to |
S_i^\Rightarrow$ is the lower 
hypersequent of the last application 
of $B_25$ in each $R_i$.

Roughly speaking, the 
proof-transformation runs as follows.
For each $R_i$,
if $R_i$ contains no $B_25$,
we are going to delete $\Box B$
and concatenate ${\cal H}_0^{\Rightarrow}|{\cal H}_1
^{\to}|  \Gamma\Rightarrow \Delta
$ to the lower hypersequent of 
the top of $R_i$.
If $R_i$ contains some applications of 
$B_25$, we are going to delete $\Box B$
and concatenate 
${\cal H}_0^{\Rightarrow}|
\cdot({\cal H}_1
^{\Rightarrow})
\cdot (  \Gamma\Rightarrow \Delta)$
to the lower hypersequent of 
the top of $R_i$,
and change 
${\cal H}_0^{\Rightarrow}|
\cdot({\cal H}_1
^{\Rightarrow})
\cdot (  \Gamma\Rightarrow \Delta)$
to
${\cal H}_0^{\Rightarrow}|{\cal H}_1
^{\to}|  \Gamma\Rightarrow \Delta$
in the place of
the last application of $B_25$.
After all, the last hypersequent of the simulated $R$
is going to be
the concatenation of 
${\cal H}_0^{\Rightarrow}|{\cal H}_1
^{\to}|  \Gamma\Rightarrow \Delta$ 
and 
${\cal I}| \Pi \Rightarrow \Theta$,
that is, the end hypersequent of 
the original proof $P$.

Let us start the proof-transformation.
Take $R_1$. 
We have two subcases  {\it Subcases 
m, n}.

\vspace{1em}
{\it Subcase m.} 
If there is no $B_25$ in $R_1$,
execute the following replacement
at the top of $R_1$.

\small
\begin{center}$
\infer[iw\hspace{2em} \rhd\hspace{2em}]{
{\cal L}_1 |\Box B, \Upsilon_1  \gg  \Omega_1
}{
{\cal L}_1 | \Upsilon_1 \gg  \Omega_1
}
\infer={
{\cal H}_0^{\Rightarrow}|
{\cal H}_1^{\to} |  \Gamma\Rightarrow \Delta|
{\cal L}_1 | \Upsilon_1  \gg  \Omega_1
}{
{\cal L}_1 | \Upsilon_1 \gg  \Omega_1
}
$
\end{center}

\begin{center}$\infer[ew\hspace{2em} \rhd\hspace{2em}]{
{\cal K}_1 | \Box B, \Phi_1  \gg \Xi_1
}{
{\cal K}_1
}
 \infer={
{\cal H}_0^{\Rightarrow}|{\cal H}_1^{\to}|  \Gamma\Rightarrow \Delta
|{\cal K}_1 |  \Phi_1, \Pi  \gg \Xi_1,
 \Theta
}{
{\cal K}_1
}
 $
\end{center}

\begin{center}
$\infer[K\hspace{2em} \rhd\hspace{1.5em}]
{{\cal L}_1 | \Box B \rightarrow |
\Upsilon_1\Rightarrow \Omega_1
}
{{\cal L}_1 |  B, \Upsilon_1\Rightarrow \Omega_1 }$
$
\infer{
{\cal H}_0^{\Rightarrow}|
{\cal H}_1^{\rightarrow} 
|  \Gamma\Rightarrow \Delta|
{\cal L}_1 |  \to | \Upsilon_1\Rightarrow \Omega_1}{
\infer{
{\cal H}_0^{\Rightarrow}|
{\cal H}_1^{\rightarrow} 
|  \Gamma\Rightarrow \Delta|
{\cal L}_1 |  \Upsilon_1\Rightarrow \Omega_1
}{
{\cal H}_0^{\Rightarrow}|{\cal H}_1
^{\to}|  \Gamma\Rightarrow \Delta
|  \Rightarrow B
&
{\cal L}_1 |  B, \Upsilon_1\Rightarrow \Omega_1
 }
}
$

\end{center}

\begin{center}
 $\infer[B_1 \hspace{2em} \rhd \hspace{.5em}]{
{\cal P}_1 |
\Box B\Rightarrow | \Lambda_1 \to {\rm P}_1
}{
{\cal P}_1 |
  B, \Lambda_1 \to {\rm P}_1}
  \infer{
{\cal H}_0^{\Rightarrow}|
{\cal H}_1
^{\rightarrow}| \Gamma\Rightarrow \Delta| 
{\cal P}_1 |  \Rightarrow | 
 \Lambda_1 \to {\rm P}_1
}{
\infer{
{\cal H}_0^{\Rightarrow}|
{\cal H}_1
^{\rightarrow}| \Gamma\Rightarrow \Delta |
 {\cal P}_1 |
  \Lambda_1 \to {\rm P}_1
}{
{\cal H}_0^{\Rightarrow}|
{\cal H}_1
^{\rightarrow}| \Gamma\Rightarrow \Delta
|  \rightarrow B
&
 {\cal P}_1 |
  B, \Lambda_1 \to {\rm P}_1}
}
$
\end{center}

\normalsize
{\it Subcase n.}
If there is $B_25$ in $R_1$,
execute the following replacement
at the top of $R_1$.

\small
\begin{center}$
\infer[iw\hspace{2em} \rhd\hspace{2em}]{
{\cal L}_1 |\Box B, \Upsilon_1  \gg  \Omega_1
}{
{\cal L}_1 | \Upsilon_1 \gg  \Omega_1
}
\infer={
{\cal H}_0^{\Rightarrow}|
\cdot({\cal H}_1
^{\Rightarrow})
|  \Gamma\Rightarrow \Delta|
{\cal L}_1 | \Upsilon_1  \gg  \Omega_1
}{
{\cal L}_1 | \Upsilon_1 \gg  \Omega_1
}
$
\end{center}

\begin{center}$\infer[ew\hspace{2em} \rhd\hspace{2em}]{
{\cal L}_1 | \Box B, \Phi_1  \gg \Xi_1
}{
{\cal L}_1
}
 \infer={
{\cal H}_0^{\Rightarrow}|
\cdot({\cal H}_1
^{\Rightarrow})
|  \Gamma\Rightarrow \Delta|
{\cal L}_1 |  \Phi_1, \Pi  \gg \Xi_1,
 \Theta
}{
{\cal L}_1
}
 $
\end{center}

\small
\begin{center}
$\infer [K\hspace{2em} \rhd\hspace{1.5em}]
{{\cal L}_1 | \Box B \rightarrow |
\Upsilon_1\Rightarrow \Omega_1
}
{{\cal L}_1 |  B, \Upsilon_1\Rightarrow \Omega_1 }$
$
\infer{
{\cal H}_0^{\Rightarrow}|
\cdot({\cal H}_1
^{\Rightarrow})
|  \Gamma\Rightarrow \Delta|
{\cal L}_1 |  \to | \Upsilon_1\Rightarrow \Omega_1}{
\infer{
{\cal H}_0^{\Rightarrow}|
\cdot({\cal H}_1
^{\Rightarrow})
|  \Gamma\Rightarrow \Delta|
{\cal L}_1 |  \Upsilon_1\Rightarrow \Omega_1
}{
{\cal H}_0^{\Rightarrow}|
\cdot({\cal H}_1
^{\Rightarrow})
|  \Gamma\Rightarrow \Delta|
 \Rightarrow B
&
{\cal L}_1 |  B, \Upsilon_1\Rightarrow \Omega_1
 }
}
$

\end{center}

\begin{center}
 $\infer[B_1 \hspace{1.5em} \rhd \hspace{1.5em}]{
{\cal P}_1 |
\Box B\Rightarrow | \Lambda_1 \to {\rm P}_1
}{
{\cal P}_1 |
  B, \Lambda_1 \to {\rm P}_1}
  \infer{
{\cal H}_0^{\Rightarrow}|
\cdot({\cal H}_1
^{\Rightarrow})| \Gamma\Rightarrow \Delta| 
{\cal P}_1 |  \Rightarrow | 
 \Lambda_1 \to {\rm P}_1
}{
\infer{
{\cal H}_0^{\Rightarrow}|
\cdot({\cal H}_1
^{\Rightarrow}) |
 \Gamma\Rightarrow \Delta| 
 {\cal P}_1 |
  \Lambda_1 \to {\rm P}_1
}{
{\cal H}_0^{\Rightarrow}|
\cdot({\cal H}_1
^{\Rightarrow}) | \Gamma\Rightarrow \Delta
|  \rightarrow B
&
 {\cal P}_1 |
  B, \Lambda_1 \to {\rm P}_1}
}
$
\end{center}

\normalsize

\normalsize
Having changed the top of $R_1$
in this way,
we change the last application of 
$B_25$ and add $split$ in $R_1$ as follows.

\small
 \begin{center}
$
\infer[\hspace{1.6em} \rhd\hspace{1.6em}]{
{\cal H}_0^{\Rightarrow}|
\cdot({\cal H}_1
^{\Rightarrow})
|  \Gamma\Rightarrow \Delta|
{\cal M}_1^\Rightarrow | {\cal N}_1^\to |
S_1^\Rightarrow 
}{
{\cal H}_0^{\Rightarrow}|
\cdot({\cal H}_1
^{\Rightarrow})
|  \Gamma\Rightarrow \Delta|
{\cal M}_1^\Rightarrow | 
{\cal N}_1^\Rightarrow |
S_1^\to}$
$\infer={
{\cal H}_0^{\Rightarrow}|
{\cal H}_1^{\rightarrow} |
  \Gamma\Rightarrow \Delta|
{\cal M}_1^\Rightarrow | {\cal N}_1^\to |
S_1^\Rightarrow 
}{\infer{
{\cal H}_0^{\Rightarrow}|
\cdot({\cal H}_1
^{\rightarrow})
|  \Gamma\Rightarrow \Delta|
{\cal M}_1^\Rightarrow | {\cal N}_1^\to |
S_1^\Rightarrow 
}{
{\cal H}_0^{\Rightarrow}|
\cdot({\cal H}_1
^{\Rightarrow})
|  \Gamma\Rightarrow \Delta|
{\cal M}_1^\Rightarrow | 
{\cal N}_1^\Rightarrow |
S_1^\to
}
}$
\end{center}

\normalsize


Repeat this proof-transformation 
for $R_2, \ldots, R_m$
so that,
for each $R_j$,
$\Box B$ is deleted at the top of $R_j$
and 
${\cal H}_0^{\Rightarrow}|{\cal H}_1
^{\to}|  \Gamma\Rightarrow \Delta$
is concatenated 
at the top of $R_j$ (if there is no $B_25$ in $R_j$)
or
just below the last application of $B_25$ (otherwise).

Then we simulate the proof fragment 
which, 
for each $R_j$, starts
from (i) the top of $R_j$ or 
(ii) the last $B_25$ of $R_j$,
and ends with the end hypersequent of original 
subproof $R$.
Finally, the last hypersequent of $R$
is now 
the one obtained 
from the original end hypersequent of $R$
by deleting $\Box B$
and concatenating ${\cal H}_0^{\Rightarrow}|{\cal H}_1
^{\to}|  \Gamma\Rightarrow \Delta$.
This is equal to the end hypersequent of
the original proof $P$.

As in {\it Case 1.
2.1},
we illustrate the resulting proof figure
obtained from the above proof figure
where $R_h$ is omitted
and $R_j$ contains no $B_25$.

\footnotesize
\begin{center}
\shortstack{\deduc 
\\
${\cal H}_0^{\Rightarrow}|
\cdot({\cal H}_1
^{\Rightarrow})
|  \Gamma\Rightarrow \Delta| 
{\cal L}_k |  \to | \Upsilon_k\Rightarrow \Omega_k$
 $\hspace{1em}$
\\ \vspace{1em}
\hspace{-3em}
$R_k$ \hspace{1em} $\vdots$ 
\\
\vspace{1em}
$\infer[B_25]{
{\cal H}_0^{\Rightarrow}|
\cdot({\cal H}_1
^{\Rightarrow})
|  \Gamma\Rightarrow \Delta| 
{\cal P}_k^\Rightarrow | {\cal R}_k^\to |
\Lambda_k \Rightarrow {\rm P}_k
}{
{\cal H}_0^{\Rightarrow}|
\cdot({\cal H}_1
^{\Rightarrow})
|  \Gamma\Rightarrow \Delta| 
{\cal P}_k^\Rightarrow | {\cal R}_k^\Rightarrow |
 \Lambda_k \to {\rm P}_k
 }
 $
\hspace{2.2em}
\\
$\vdots$\\
\vspace{2em} \\
$\vdots$\\
\vspace{1em} \\
$\infer=[split]{
{\cal H}_0^{\Rightarrow}|
{\cal H}_1^{\rightarrow}
|\Gamma\Rightarrow \Delta| 
{\cal M}_k^\Rightarrow | {\cal N}_k^\to |
S_k^\Rightarrow 
}{
\infer[B_25]{
{\cal H}_0^{\Rightarrow}|
\cdot({\cal H}_1
^{\rightarrow})
|  \Gamma\Rightarrow \Delta| 
{\cal M}_k^\Rightarrow | {\cal N}_k^\to |
S_k^\Rightarrow 
}{
{\cal H}_0^{\Rightarrow}|
\cdot({\cal H}_1
^{\Rightarrow})
|  \Gamma\Rightarrow \Delta| 
{\cal M}_k^\Rightarrow | 
{\cal N}_k^\Rightarrow |
S_k^\to
}
}
$
\hspace{2.2em}
\\ 
\vspace{2em}
\\ \vspace{.7em} 
}
\shortstack{\deduc 
  $\vspace{1em}$\\
$
{\cal H}_0^{\Rightarrow}|
{\cal H}_1
^{\rightarrow}
|  \Gamma\Rightarrow \Delta| 
{\cal P}_j |  \to | 
 \Lambda_j \to {\rm P}_j$
 $\vspace{1em}$\\
 $\hspace{-3em}$
 $R_j$ \hspace{1em}  $\vdots$\\
\vspace{1.5em} 
\\
 $\vdots$\\
 \vspace{2.3em}\\
%
\hspace{-2em}$\iddots$
\\
\vspace{3.3em}
 }

\deduc

\vspace{2em}
$
\hspace{3em}
\infer[merge]{
{\cal H}_0^{\Rightarrow}|
{\cal H}_1
^{\rightarrow}
|\Gamma, \Pi\Rightarrow \Delta, \Theta| 
{\cal I}
}{
{\cal H}_0^{\Rightarrow}|
{\cal H}_1
^{\rightarrow}
|\Gamma\Rightarrow \Delta| 
{\cal I}| \Pi \Rightarrow \Theta
}
$
\end{center}

\normalsize

{\it Case 1.
3}.
{\cal M} contains the $4$-axiom,
that is, {\cal M} is {\cal K45} or {\cal KD45}. 
By Lemma \ref{delete5},
we can assume that the applications of $5_2$-rule
 in $P$ are so restricted that the upper sequents of them
 are the initial sequents.

Note that there is no application of $5_2$ whose upper sequent 
contains the cut formula $\Box B$,
which makes the proof-transformation
much simpler than
in the previous case.
\footnote{
The modal logic {\sf KB5} in $\beta$
has the $4$-axiom as a theorem.
So we could define the calculus {\sf KB5}$_H$ 
to have the $4$-rule primitively.
Then, surely Lemma \ref{delete5} holds
for {\sf KB5}$_H$;
the form of $5_2$- and $B_25$-rules can be restricted 
in the same way.
However, the cut-elimination does not get as
simple as {\sf K45} and {\sf KD45},
because the case is not so tractable when the left cut formula
is introduced by $B_1$.
Hence we do not define {\sf KB5}$_H$ to have 
the $4$-rule 
so that {\sf KB5}$_H$ is handled in {\it Case 1.2}.
}

\vspace{1em}
{\it Case 1.
3.1}.   The turnstile of $P$ is $\to$.
$P$ is depicted as follows.

\footnotesize

\begin{center}
\shortstack{\deduc 
 $\vspace{1em}$
\\$\hspace{2em}$
$\infer[nec_1]
{{\cal K}_i |\to \Box B }
{{\cal K}_i |\Rightarrow B }$ 
\\ \vspace{.em}
\hspace{-3.5em}
$Q_i$ \hspace{1em} $\vdots$ 
\\
$\hspace{-.8em}$
\vdots
\\
$
$
}
\hspace{4em}
\shortstack{\vspace{1em}
\deduc $\hspace{1em}$
\\
$\infer[K]{{\cal L}_k | \Box B \rightarrow |
\Upsilon_k\Rightarrow \Omega_k
}
{{\cal L}_k |  B, \Upsilon_k\Rightarrow \Omega_k }$
\\ \vspace{1em}
\hspace{-3em}
$R_k$ \hspace{1em} $\vdots$ 
\\
\\ 
\infer[5_1]{
{\cal N}_k |
\Box B\Rightarrow | \Lambda_k \to {\rm P}_k
}{
{\cal N}_k |
 \Box B, \Lambda_k \to {\rm P}_k}\\
$\vdots$ 
\\
$\vspace{.5em}$ \\
\infer[4_l]{
{\cal N}_k |
\Box B\to | \Sigma_k \Rightarrow \Psi_k
}{
{\cal N}_k |
 \Box B, \Sigma_k \Rightarrow \Psi_k}
\\
$\vdots$ 
\\
$
$
}

\vspace{.5em}
$
\infer[cut]{{\cal H}|{\cal I}| \Gamma,
\Pi \to \Delta, \Theta}{
{\cal H}  | \Gamma\to \Delta, \Box B
\hspace{7em}
{\cal I}|\Box B, \Pi \to \Theta
}
$
\end{center}

\normalsize
Note here that a pair of serial applications of
$5_1$- and $4_l$-rules whose lower sequents are $\Box B\Rightarrow$
and $\Box B\to$, respectively, can occur in the subproof, $R$, 
an arbitrary number of times.

As in the previous cases,
we can make a proof-transformation
so that,
for each $Q_i$,
the introduced $\Box B$
is replaced by
${\cal I}  |  \Pi \Rightarrow \Theta$
at the top of $Q_i$.

Here we demonstrate only the essential case:
$Q_i$ starts with $nec_1$.
We omit the cases when  $R_k$ starts with $ew$ or $iw$.

\footnotesize

\begin{center}
\shortstack{\vspace{1em}
\deduc $\hspace{2.5em}$ \deduc $\hspace{2.5em}$
\\
$
\infer[ew]{ {\cal K}_i |{\cal L}_k | \rightarrow |
\Upsilon_k\Rightarrow \Omega_k
}
{
\infer[cut]{ {\cal K}_i |
{\cal L}_k |
\Upsilon_k\Rightarrow \Omega_k
 }{
 {\cal K}_i |\Rightarrow B \hspace{1.5em}
 {\cal L}_k |  B, \Upsilon_k\Rightarrow \Omega_k
 }
}
$
\\ \vspace{1em}
\hspace{-3em}
$R_k$ \hspace{1em} $\vdots$ 
\\
\\ 
\infer[ew]{ {\cal K}_i |
{\cal N}_k |
\Rightarrow | \Lambda_k \to {\rm P}_k
}{ {\cal K}_i |
{\cal N}_k |
 \Lambda_k \to {\rm P}_k}\\
$\vdots$ 
\\
$\vspace{.5em}$ \\
\infer[ew]{ {\cal K}_i |
{\cal N}_k |
\to | \Sigma_k \Rightarrow \Psi_k
}{ {\cal K}_i |
{\cal N}_k |
 \Sigma_k \Rightarrow \Psi_k}
\\
$\vdots$ 
\\
$
$
}

\vspace{.5em}
$
 {\cal K}_i |{\cal I}| \Pi \to \Theta
$
\end{center}

\normalsize
Then, as in the previous cases, 
we can simulate the subproof $Q$ to obtain the original end hypersequent of $P$,
where there occur only applications of $cut$ rule whose cut formula is $B$, not $\Box B$.

\normalsize

\vspace{1em}
{\it Case 1.
3.2}.   The turnstile of $P$ is $\Rightarrow$.
$P$ is as follows.

\normalsize

\begin{center}
\shortstack{\deduc 
 $\vspace{1em}$
\\$\hspace{1em}$
$\infer[4_r]
{{\cal K}_i |\Rightarrow \Box B }
{{\cal K}_i |\Rightarrow B }$ 
\\ \vspace{.em}
\hspace{-3.5em}
$Q_i$ \hspace{1em} $\vdots$ 
\\
$\hspace{-.8em}$
\vdots
\\
$
$
}
\hspace{4em}
\shortstack{\vspace{1em}
\deduc $\hspace{1em}$
\\
$\infer[K]{{\cal L}_k | \Box B \rightarrow |
\Upsilon_k\Rightarrow \Omega_k
}
{{\cal L}_k |  B, \Upsilon_k\Rightarrow \Omega_k }$
\\ \vspace{1em}
\hspace{-3em}
$R_k$ \hspace{1em} $\vdots$ 
\\
\infer[5_1]{
{\cal N}_k |
\Box B\Rightarrow | \Lambda_k \to {\rm P}_k
}{
{\cal N}_k |
 \Box B, \Lambda_k \to {\rm P}_k}\\
$\vdots$ 
\\
$\vspace{.5em}$ \\
\infer[4_l]{
{\cal N}_k |
\Box B\to | \Sigma_k \Rightarrow \Psi_k
}{
{\cal N}_k |
 \Box B, \Sigma_k \Rightarrow \Psi_k}
\\
$\vdots$ 
\\
$\vspace{.5em}$ \\
\infer[5_1]{
{\cal N}_k |
\Box B\Rightarrow | \Xi_k \to {\rm H}_k
}{
{\cal N}_k |
 \Box B, \Xi_k \to {\rm H}_k}\\
$\vdots$ 
\\
$
$
}

\vspace{.5em}
$
\infer[cut]{{\cal H}|{\cal I}| \Gamma,
\Pi \Rightarrow  \Delta, \Theta}{
{\cal H}  | \Gamma\Rightarrow  \Delta, \Box B
\hspace{7em}
{\cal I}|\Box B, \Pi \Rightarrow  \Theta
}
$
\end{center}

\normalsize
Note here that a pair of serial applications of $5_1$- and $4_l$-rules whose lower sequents are $\Box B\Rightarrow$ and
 $\Box B\to$,
  can occur in the subproof, $R$,
  an arbitrary number of times,
  which, however, must be followed by an application of
  $5_1$-rule whose lower sequent is 
  $\Box B\Rightarrow$.

This case can be handled in the same way as the
{\it Case 1.
3.1}.

\vspace{1em}
 
{\it Case 2.} When $A$ is an atomic formula $p$.
The other case when $A=\bot$ is
similarly treated and omitted.

When {\sf M} does not contain $B$,
{\sf M}$_H$ has the $5_2$-rule but not the $B_25$-rule.
Then, the turnstile of $P$ is $\to$
iff there is no $5_2$-rule in $Q$ or $R$;
the turnstile of $P$ is $\Rightarrow$
iff each $Q_i$ and each $R_j$ contain at most one 
application of $5_2$.
Thus the cut-elimination for this case is similar to
{\it Case 2} for the cut-elimination for
the group $\alpha$.

We handle the case when {\sf M}$_H$ has 
the $B_25$-rule.
In this case, 
the $B_25$-rule can occur
an arbitrary number of times
in each of $Q$ and $R$. 

{\it Case 2.
1.} The turnstile of $P$
is $\to$.
First we construct the following proofs
from the subproof $Q$.

\begin{center}
\shortstack{\hspace{-2.3em}
$p\rightarrow p $ \vspace{1em} 
\\ \hspace{-1em}\vspace{1em} 
$Q$ \hspace{1em}
\deduc\hspace{4em} \\ 
$
\infer[B_25]{{\cal H}_0^{\Rightarrow}|
\cdot({\cal H}_1^{\Rightarrow})\cdot
(\Gamma \Rightarrow \Delta, p)}{
\infer=[merge]{
{\cal H}_0^{\Rightarrow}|
\cdot({\cal H}_1^{\rightarrow})\cdot
(\Gamma \to \Delta, p)
}{
{\cal H}_0^{\Rightarrow}|
{\cal H}_1^{\rightarrow}|
\Gamma \to \Delta, p
}
}
$
}
\end{center}

Take any $R_j$.
Whether there are applications of $B_25$ 
in $R_j$,
we execute the following replacement
at the top of $R_j$.

\vspace{1em}

\begin{center}
$p\rightarrow p  \hspace{4em} \rhd$  \hspace{2em}
\shortstack{\deduc \vspace{.5em} \\ 
${\cal H}_0^{\Rightarrow}|
\cdot({\cal H}_1^{\rightarrow})\cdot
(\Gamma \rightarrow \Delta, p)$
}
\end{center}

\small
\begin{center}
$
\infer[iw  \hspace{2em} \rhd]{ {\cal K}_j | p, \Sigma_j \gg \Psi_j}
{
{\cal K}_j |  \Sigma_j \gg \Psi_j
}
$
  \hspace{1em}
\hspace{.7em} $
\infer=[iw, ew]{ {\cal K}_j | 
{\cal H}_0^{\Rightarrow}|
\cdot({\cal H}_1^{\gg})\cdot
(\Gamma \gg \Delta)
\cdot(\Sigma_j \gg
\Psi_j)}
{
{\cal K}_j |  \Sigma_j \gg \Psi_j
}
$ 
\end{center}

\normalsize

\small
\begin{center}
$ \hspace{1em}
\infer[ew  \hspace{2em} \rhd]{ 
{\cal K}_j | p, \Sigma_j \gg \Psi_j
}
{
{\cal K}_j
}
$
 \hspace{1.5em}
\hspace{.7em} $
\infer=[ew]{ 
 {\cal K}_j | 
{\cal H}_0^{\Rightarrow}|
\cdot({\cal H}_1^{\gg})\cdot
(\Gamma \gg \Delta)\cdot(\Sigma_j \gg
\Psi_j)
}
{
{\cal K}_j
}
$ 
\end{center}

\normalsize
Then, we simulate the subproof $R$
(starting from the replaced places)
to obtain the following.

\begin{center}
${\cal I}| {\cal H}_0^{\Rightarrow}
|\cdot({\cal H}_1^{\rightarrow})
\cdot (\Gamma, \Pi \to 
\Delta, \Theta)$
\end{center}

By applying $split$ several times,
we obtain the following.

\begin{center}
${\cal I}| {\cal H}_0^{\Rightarrow}
|{\cal H}_1^{\rightarrow}|
\Gamma, \Pi \to 
\Delta, \Theta$
\end{center}

This is equal to the end hypersequent of 
$P$.
Thus, the desired hypersequent is obtained
without using $cut$.

\normalsize

\vspace{1em}
{\it Case 2.
2.} The turnstile of $P$
is $\Rightarrow$.
As in {\it Case 2.1},
 we first construct the following proofs
from the subproof $Q$.

\begin{center}
\shortstack{\hspace{-2.3em}
$p\rightarrow p $ \vspace{1em} 
\\ \hspace{-1em}\vspace{1em} 
$Q$ \hspace{1em}
\deduc\hspace{4em} \\ 
$
\infer[B_25]{{\cal H}_0^{\Rightarrow}|
\cdot({\cal H}_1^{\Rightarrow})|
\Gamma \gg \Delta, p}{
\infer=[merge]{
{\cal H}_0^{\Rightarrow}|
\cdot({\cal H}_1^{\rightarrow})
|
\Gamma \Rightarrow \Delta, p
}{
{\cal H}_0^{\Rightarrow}|
{\cal H}_1^{\rightarrow}|
\Gamma \Rightarrow \Delta, p
}
}
$
}
\end{center}

Take any $R_j$.
There are {\it Subcases o, p}.

{\it Subcase o}.
There is  $B_25$
in $R_j$.
Execute the following replacement
at the top of $R_j$.

\begin{center}
$p\rightarrow p  \hspace{4em} \rhd$  \hspace{2em}
\shortstack{\deduc \vspace{.5em} \\ 
${\cal H}_0^{\Rightarrow}|
\cdot({\cal H}_1^{\Rightarrow})|
\Gamma \rightarrow \Delta, p$
}
\end{center}

\small
\begin{center}
$
\infer[iw  \hspace{2em} \rhd]{ {\cal K}_j | p, \Sigma_j \gg\Psi_j}
{
{\cal K}_j |  \Sigma_j \gg \Psi_j
}
$
  \hspace{1em}
\hspace{.7em} $
\infer=[iw, ew]{ {\cal K}_j | 
{\cal H}_0^{\Rightarrow}|
\cdot({\cal H}_1^{\Rightarrow})|
(\Gamma \gg \Delta)
\cdot(\Sigma_j \gg
\Psi_j)}
{
{\cal K}_j |  \Sigma_j \gg \Psi_j
}
$ 
\end{center}

\begin{center}
$ \hspace{-1.7em}
\infer[ew  \hspace{2em} \rhd]{ 
{\cal K}_j | p, \Sigma_j \gg \Psi_j
}
{
{\cal K}_j
}
$
 \hspace{1.5em}
$
\infer=[ew]{ 
{\cal K}_j | 
{\cal H}_0^{\Rightarrow}|
\cdot({\cal H}_1^{\Rightarrow})|
(\Gamma \gg \Delta)
\cdot(\Sigma_j \gg
\Psi_j)
}
{
{\cal K}_j
}
$ 
\end{center}

\normalsize
Then, at the last application of $B_25$
in $R_j$,
change $\cdot({\cal H}_1^{\Rightarrow})$
to $\cdot({\cal H}_1^{\rightarrow})$,
and add $split$
just below it to obtain 
${\cal H}_1^{\rightarrow}$.

\vspace{1em}
{\it Subcase p}. 
If there is no $B_25$ in $R_j$.
Execute the following replacement
at the top of $R_j$.

\small
\begin{center}
$
\infer[iw  \hspace{2em} \rhd]{ {\cal K}_j | p, \Sigma_j \Rightarrow \Psi_j}
{
{\cal K}_j |  \Sigma_j \Rightarrow \Psi_j
}
$ 
  \hspace{1em}
$
\infer=[iw, ew]{ {\cal K}_j | 
{\cal H}_0^{\Rightarrow}|
{\cal H}_1^{\rightarrow}|
(\Gamma \Rightarrow \Delta)
\cdot(\Sigma_j \Rightarrow
\Psi_j)}
{
{\cal K}_j |  \Sigma_j \Rightarrow \Psi_j
}
$ 
\end{center}

\begin{center}
$  \hspace{-1em}
\infer[ew  \hspace{2em} \rhd]{ 
{\cal K}_j | p, \Sigma_j \Rightarrow \Psi_j
}
{
{\cal K}_j
}
$
 \hspace{1.5em}
$
\infer=[ew]{ 
 {\cal K}_j | 
{\cal H}_0^{\Rightarrow}|
{\cal H}_1^{\rightarrow}|
(\Gamma \Rightarrow \Delta)
\cdot(\Sigma_j \Rightarrow
\Psi_j)
}
{
{\cal K}_j
}
$  
\end{center}

\normalsize
Finally, we simulate the subproof $R$
(starting from the replaced places)
to obtain the following.

\begin{center}
${\cal I}| {\cal H}_0^{\Rightarrow}
|{\cal H}_1^{\rightarrow}|
\Gamma, \Pi \Rightarrow 
\Delta, \Theta$
\end{center}

This is equal to the end hypersequent of 
$P$.
Thus, the desired hypersequent is obtained without using $cut$.\QED    

\subsection{Failure of Cut-Elimination
for {\sf KB}, {\sf KDB} and {\sf KTB}}

The cut-elimination does not seem to
hold for the modal logics in the group $\gamma$,
that is, {\sf KB, KDB, KTB}.
A counter-example
on this failure would be the sequent,
$\Rightarrow \Box\neg \Box\Box\Box p
| \Rightarrow p $.
The following is a proof of it using $cut$.

\begin{center}$
 \infer[cut]{
\Rightarrow \Box\neg \Box\Box\Box p
| \Rightarrow p 
 }{
\infer[B_2]{\Rightarrow \Box\neg \Box\Box\Box p
| \rightarrow \Box p }{
\infer[merge]{\rightarrow \Box\neg \Box\Box\Box 
p
| \Rightarrow \Box p}{
\infer[nec_1]{\rightarrow \Box\neg \Box\Box\Box 
p|\to
| \Rightarrow \Box p}{
\infer[\neg:r]{\Rightarrow \neg \Box\Box\Box p|\to
| \Rightarrow \Box p}{
\infer[B_1]{ \Box\Box\Box p \Rightarrow |\to
| \Rightarrow \Box p}{
\infer[K]{ \Box\Box p \rightarrow 
| \Rightarrow \Box p}{
\Box p  \Rightarrow \Box p
}
}
}}
}
}
&& \infer[K]{\Box p \to | \Rightarrow  p}{
p \Rightarrow  p
}
}
 $
\end{center}

\section{On the Extension to the Quantified Modal Logic}
In this section, we discuss the extension of
the modal proof system which we developed so far
to the quantified versions of modal logic
with and without the Barcan formula.
It turns out that our
framework of the hypersequent calculus with two sorts
of sequent is quite convenient to formulate
the proof systems of quantified modal logic.
We introduce  two sorts of inference
rules for the quantifier
working on those two sorts of sequents, 
respectively.
 
The formulas are defined as follows.

\begin{center}
$A \longrightarrow \bot|p(x_1, x_2, \ldots, x_n)|
\neg A|  A \wedge A | A\vee A|A \supset A | \Box A
| \forall x A$

\end{center}

Here each $x_i$ is a variable,
and we do not take into consideration
individual constants and
the
existential quantifier
$\exists$.
Let ${\cal M}_{\sf H}$ be any hypersequent 
calculus defined in \S 2.
Based on the extended language,
the first quantified version, say ${\cal M}^+$,
 of 
 ${\cal M}_{\sf H}$
is obtained by adding
the inference rules for universal  quantifier
applied to sequents with the turnstile $\rightarrow$:

\begin{center}
$
\infer[\forall:l_1]{{\cal H}|\forall x A, \Gamma \rightarrow \Delta}{
{\cal H}|A, \Gamma \rightarrow \Delta}
\hspace{5em}
\infer[\forall:r_1]{{\cal H}|\Gamma \rightarrow \Delta,
\forall x A}{
{\cal H}|\Gamma \rightarrow \Delta, A}
$
\end{center}

Here in $\forall:r_1$,
the eigenvariable condition is imposed:
there is no free occurrence of $x$
in the lower hypersequent.
The rules  $\forall:l_1$ and
$\forall:r_1$  correspond to the usual
axiom and  inference rule on the universal quantifier
in the predicate calculus.
We list some facts.

\vspace{1em}
{\bf F8.}
$\vdash_{\sf M^+} CBF$

\vspace{1em}
{\bf F9.}
If the $B$-rule is derivable
in ${\sf M^+}$,
$\vdash_{\sf M^{+}} BF$.

\vspace{1em}
Here,
$BF$ denotes what is called the Barcan Formula:
$\forall  x \Box A \supset  \Box \forall  x A$;
$CBF$ is the converse of $BF$:
$  \Box \forall  x A
\supset  \forall  x \Box A$.
Below  we construct proofs for {\bf F8, 9}.

\begin{center}
$
\infer[\supset:r]{\rightarrow \Box \forall  x A
\supset  \forall  x \Box A}{
\infer[merge]{ \Box \forall  x A \rightarrow
  \forall  x \Box A}{
\infer[\forall:r_1]{\Box \forall  x A \rightarrow
     |  \rightarrow  \forall  x \Box A}{
\infer[nec_1]{
\Box \forall  x A \rightarrow
     |  \rightarrow  \Box A}{
\infer[K]{
\Box \forall  x A \rightarrow
     |  \Rightarrow  A}{
\infer[nec_2]{
\forall  x A  \Rightarrow  A}{
\infer[\forall:l_1]{
 \forall  x A  \rightarrow  A}{
 A  \rightarrow  A}
 }
}
}}
}
}
$
\hspace{2em}
$
\infer[\supset:r]{\rightarrow \forall  x \Box  A
\supset  \Box \forall  x  A}{
\infer[merge]{ \forall  x \Box A \rightarrow
  \Box  \forall  x A}{
\infer[nec_1]{\forall  x \Box A \rightarrow
     |  \rightarrow  \Box  \forall  x A}{
\infer[B_2]{
\forall  x \Box A \rightarrow
     |  \Rightarrow  \forall x A}{
\infer{\vdots}{
\infer[\forall:r_1]{
\forall  x \Box A \Rightarrow
     |  \rightarrow  \forall x A}{
\infer[B_2]{
 \forall  x \Box A \Rightarrow
     |  \rightarrow   A}{
     \infer{\vdots}{
\infer[\forall:l_1]{
\forall  x \Box A \rightarrow
     |  \Rightarrow   A}{
\infer[K]{ \Box A \rightarrow
     |  \Rightarrow   A}{
 A  \Rightarrow  A}
 }
}}}
}}
}
}}
$

\end{center}

Next we introduce the following form of inference rule for
universal quantifier applied to sequents with $\Rightarrow$.

\begin{center}
$
\infer[\forall:l_2]{{\cal H}|\forall x A, \Gamma \Rightarrow \Delta}{
{\cal H}|A, \Gamma \Rightarrow \Delta}
\hspace{5em}
\infer[\forall:r_2]{{\cal H}|\Gamma \Rightarrow \Delta,
\forall x A}{
{\cal H}|\Gamma \Rightarrow \Delta, A}
$
\end{center}

The eigenvariable condition to be imposed
 on $\forall:r_2$
is the same as  $\forall:r_1$.

The $\forall:l_2$-rule is derivable in any
${\sf M}^+$, as is shown in the below proof.

\begin{center}
  $
  \infer[cut]{{\cal H}|\forall x A, \Gamma \Rightarrow \Delta}{
 \infer=[ew]{ {\cal H}|\forall x A \Rightarrow A[y\slash  x]}{
  \infer[nec_2]{\forall x A \Rightarrow A[y\slash  x]}{
    \infer[\forall:l_1]{ \forall x A \rightarrow A[y\slash x]}{
       A[y\slash  x] \rightarrow A[y\slash  x]
    }
  }
  }
  &
  {\cal H}|A[y\slash  x], \Gamma \Rightarrow \Delta
  }
  $
  \end{center}

\vspace{1em}
{\bf F10.}
$\vdash_{\sf M^+} BF$ iff  
$\forall :r_2$ is derivable in {\sf M}$^+$.

\vspace{1em}

For one direction, we work in {\sf M}.
Suppose that $A\vee \Box (B\supset C\vee D)$ holds in there.
By the generalization, we obtain
$\forall x(A\vee \Box (B\supset C\vee D))$.
By the eigenvariable condition we can suppose
that there is no free occurrence of $x$ in $A,B$ or $C$.
By the predicate calculus,
we have $A\vee \forall x \Box (B\supset C\vee D)$ in {\sf M}.
In terms of the Barcan formula,
we have
$A\vee \Box \forall x (B\supset C\vee D)$,
and, again by the predicate calculus
and the axiom of normality,
$A\vee \Box (B\supset C\vee \forall x D)$.

For the converse simulation,
we can construct a proof of  the Barcan formula
as follows.

\begin{center}
  $
  \infer[\supset:r]{\rightarrow \forall  x \Box  A
  \supset  \Box \forall  x  A}{
  \infer[merge]{ \forall  x \Box A \rightarrow
    \Box  \forall  x A}{
  \infer[nec_1]{\forall  x \Box A \rightarrow
       |  \rightarrow  \Box  \forall  x A}{
  \infer[\forall:r_2]{
  \forall  x \Box A \rightarrow
       |  \Rightarrow  \forall x A}{
       \infer{\vdots}{
  \infer[\forall:l_1]{
  \forall  x \Box A \rightarrow
       |  \Rightarrow   A}{
  \infer[K]{ \Box A \rightarrow
       |  \Rightarrow   A}{
   A  \Rightarrow  A}
   }
  }
  }
  }
  }}
  $
  \end{center}

The following table shows the correspondence 
between the inference rules introduced here
and the axioms and the inference rules
of the axiomatic system of 
modal predicate logic.

\begin{center}
  \begin{table}
  \centering
  \begin{tabular}{|c|c|}
 
    \hline
    {\sf M}$_H$ & {\sf M} \\   \hline  \hline
    $\forall:l_1$ & Axiom: $\forall x A \supset A[y\slash x]$ \\   \hline
    $\forall:r_1$ & Rule of Generalization \\  \hline
    $\forall:r_2$ & Barcan Formula: $\forall x \Box A
    \supset\Box\forall x  A$ \\ \hline
  \end{tabular}
  \caption{Correspndence on the predicate logic extension }
\end{table}
  \end{center}

For any ${\sf M}$,
${\sf M}^{BF}$
is defined to be the system obtained from 
${\sf M}^+$
by adding the rule:
$\forall:r_2$.
 {\bf F10} is  immediately followed by
 {\bf F11}.

\vspace{1em}
{\bf F11.}
$\vdash_{{\sf M}^{BF}} BF$

\section{Concluding Remarks}
In this paper,
we developed a simple proof system
for the basic modal logics:
the logic {\sf K} and its extensions with
the axioms called {\it T, D, 4, B, 5}.
The hypersequent calculus
with two-sorted sequents
was offered where the two sorts realize
non-modal inferences and modal inferences.
Then, we proposed a reasonable
explanation for how the standard sequent and 
hypersequent calculi for the modal logics:
{\sf K}, {\sf T}, {\sf D}, {\sf K4}, 
{\sf KD4}, {\sf S4}, 
and {\sf S5} in the literature emerge.
Also, we provided a syntactical proof of
the cut-elimination for the logics
except for {\sf KB, KDB}, and {\sf KTB},
using the top-down method. 
Furthermore, we suggested the extension of
our proof-theoretical framework
to 
the quantified versions of the modal logics.
It turned out that the framework
is convenient to formulate the extensions
with and without the Barcan formula.

We would like to leave open
the problem of establishing 
the cut-elimination based on the framework
for the modal logic {\sf KB}, {\sf KDB}, {\sf KTB},
and the quantified versions of all the 
modal logics we studied in this paper.
In particular, even if we need
some deeper nested hypersequent calculi,
it is intriguing
to specify how deep it must be at minimum.

\vspace{1em}

\end{document}